\newcounter{tmptheorem}
\newtheorem{theorem}{Theorem}
\newtheorem{definition}[theorem]{Definition}
\newtheorem{lemma}[theorem]{Lemma}
\newtheorem{corollary}[theorem]{Corollary}
\newenvironment{remark}
{\textit{Remark}
}
{}
\newtheorem{claim}[theorem]{Claim}
\newcommand{\qed}{$\square$}
\newcommand{\minor}{\mathrm{minor}}
\newcommand{\status}{\mathbf{status}}
\newcommand{\eligible}{\mathsf{eligible}}
\newcommand{\consistency}{\mathbf{consistency}}
\newcommand{\ineligible}{\mathsf{ineligible}}
\newcommand{\consistent}{\mathsf{consistent}}
\newcommand{\inconsistent}{\mathsf{inconsistent}}
\newcommand{\LE}{\mathsf{LE}}
\newcommand{\error}{\mathsf{error}}
\newcommand{\depth}{\mathsf{depth}}
\newcommand{\Label}{\mathsf{label}}
\newcommand{\ekey}{\mathsf{ekey}}
\newcommand{\primary}{\mathbf{primary}}
\newcommand{\primarykey}{\mathbf{primarykey}}
\newcommand{\fsize}{\mathbf{size}}
\newcommand{\CONTINUE}{\mathtt{CONTINUE}}
\newcommand{\CON}{\mathtt{CON}}
\newcommand{\ENQUEUE}{\mathtt{ENQUEUE}}
\newcommand{\DEQUEUE}{\mathtt{DEQUEUE}}
\newcommand{\BREAK}{\mathtt{BREAK}}
\newcommand{\Adj}{\mathsf{Adj}}
\newcommand{\id}{\mathsf{id}}
\newcommand{\texttilde}[1]{$\widetilde{\text{#1}}$}
\newenvironment{proof}{%
  \noindent{\it Proof\ }}{%
  \hspace*{\fill}\qed
  \vspace{2ex}\\}
\newenvironment{proofsketch}{%
  \noindent{\it Proof (Sketch).\ }}{%
  \hspace*{\fill}\qed
  \vspace{2ex}}
\newcommand{\calC}{\mathcal{C}}
\newcommand{\calP}{\mathcal{P}}
\newcommand{\calS}{\mathcal{S}}
\newcommand{\bfR}{\mathbf{R}}
\newcommand{\bfS}{\mathbf{S}}
\newcommand{\bfX}{\mathbf{X}}
\newcommand{\ket}[1]{| #1 \rangle}
\newcommand{\abs}[1]{\vert #1 \vert}
\newcommand{\ceil}[1]{\lceil #1 \rceil}
\begin{document}

\pagestyle{plain}
\sloppy

%\mainmatter

\title{
  Exact Quantum Algorithms\\
  for the Leader Election Problem\footnote{A preliminary version of
  this paper appeared in~\cite{TanKobMatSTACS05}.}
}

\author{
{\large \hspace*{-1ex} $\mbox{\bf Seiichiro Tani}^{\ast\dagger}$
 \hspace*{-1ex}}\\
{\tt tani@theory.brl.ntt.co.jp}
\and
{\large \hspace*{-1ex} $\mbox{\bf Hirotada Kobayashi}^{\ddagger}$
 \hspace*{-1ex}}\\
{\tt hirotada@nii.ac.jp}
\and
{\large \hspace*{-1ex} $\mbox{\bf Keiji Matsumoto}^{\ddagger \dagger}$
 \hspace*{-1ex}}\\
{\tt keiji@nii.ac.jp}
}

\date{}

\maketitle
\thispagestyle{plain}
\pagestyle{plain}
\vspace*{-5mm}
\begin{center}
  ${}^{\ast}$NTT Communication Science Laboratories, NTT Corporation\\
  ${}^{\dagger}$Quantum Computation and Information Project, ERATO-SORST, JST\\
  ${}^{\ddagger}$Principles of Informatics Research Division,  National Institute of Informatics
\end{center}

\maketitle
\bibliographystyle{plain}
\begin{abstract}
This paper gives the first separation of quantum and classical
pure (i.e., non-cryptographic) computing abilities with no
restriction on the amount of available computing resources, by considering the exact
solvability of a celebrated unsolvable problem in classical distributed
computing, the ``leader election problem'' in anonymous networks.
The goal of the leader election problem
is to elect a unique leader from among distributed parties.
The paper considers this problem
for anonymous networks,
in which
each party has the same
identifier.
It is well-known that no classical algorithm can solve exactly
(i.e., in bounded time without error)
the leader election problem in anonymous networks, even if it is given
the number of parties.
This paper gives two quantum algorithms that,
given the number of parties,
can exactly solve the problem for any network topology
in polynomial rounds and polynomial communication/time
complexity with respect to the number of parties,
when the parties are connected by quantum communication links.
The two algorithms each have their own characteristics
with respect to complexity and the property of the networks they can work on.
Our first algorithm offers much lower time and communication complexity
than our second one, while the second one
is more general than the first one in that
the second one can run even on any network, even those whose underlying graph
is directed, whereas
the first one works well only on those with undirected graphs.
Moreover, our algorithms work well even in the case
where only the upper bound of the number of parties is given.
No classical algorithm can solve the problem even with zero error
(i.e., without error but possibly in unbounded running time)
in such cases, if the upper bound may be more than twice the number of
parties.
In order to keep the complexity of the second algorithm polynomially bounded,
a new classical technique is developed;
the technique quadratically improves the previous bound on the number of rounds required 
to compute a Boolean function on anonymous networks,
without increasing the communication complexity.
\end{abstract}

\section{Introduction}
\label{Section: Introduction}
\subsection{Background}
Quantum computation and communication are turning out
to be much more powerful than the classical equivalents
in various computational tasks.
Perhaps the most exciting developments in quantum computation
would be polynomial-time quantum algorithms
for factoring integers and computing discrete
logarithms~\cite{Sho97SIComp};
these give a 
separation of quantum and classical
computation in terms of the amount of computational
resource required to solve the problems, 
on the assumption that the problems are hard to solve 
in polynomial-time with classical algorithms.
From a practical point of view,
the algorithms also have a great impact on the real cryptosystems used in E-commerce,
since most of them assume the hardness of integer
factoring
or discrete logarithms for their security.

Many other algorithms such as Grover's
search~\cite{grover96,Bra00AMS,hoyer-mosca-dewolf03} and quantum walk~\cite{childs-cleve-deotto-farhi-gutmann-spielman03,ambainis04},
and
protocols~\cite{buhrman-cleve-wigderson98,Raz99STOC,BuhCleWatWol01PRL,BarJayKer04STOC}
have been proposed to give separations in terms of the amount of computational
resources (e.g., computational steps, communicated bits or work space)
needed to compute some functions.

From the view point of computability,  there are many results on
languages recognizable by quantum
automata~\cite{ambainis-freivalds98,amano-iwama99,yamasaki-kobayashi-tokunaga-imai02,ambainis-watrous02,yamasaki-kobayashi-imai05};
they showed that there are some languages that quantum automata can
recognize but their classical counterparts cannot.  
This gives the separation of
quantum and classical models in terms of computability, instead of
in terms of the amount of
required computational resources, when placing a sort of restriction on
computational ability (i.e., the number of internal states) of the
models.

In the cryptographic field, the most remarkable quantum result
would be the quantum key distribution protocols~\cite{BenBra84ICCSSP,Ben92PRL}
that have been proved unconditionally
secure~\cite{May01JACM,ShoPre00PRL,TamKoaImo03PRA,TamKoaImo03PRL,TamLut04PRA}.
In contrast, no unconditionally secure key distribution protocol is
possible in classical settings.
Many other studies demonstrate the superiority
of quantum computation and communication for
cryptography~\cite{DumMaySal00EUROCRYPT,Amb04JCSS,CreLegSal01EUROCRYPT,CreGotSmi02STOC,BarCreGotSmiTap02FOCS,AmbBuhDodRoh04CCC,benor-hassidim05}

This paper gives the first separation of quantum and classical
abilities for a pure (i.e., non-cryptographic) computational task with no
restriction on the amount of available computing resources;
its key advance is to consider the exact
solvability of a celebrated unsolvable problem in classical distributed
computing, the ``leader election problem'' in anonymous networks.

The leader election problem is a core problem
in traditional distributed computing
in the sense that, once it is solved, it becomes
possible to efficiently solve many substantial problems
in distributed computing
such as finding the maximum value and constructing a spanning tree
(see, e.g., \cite{Lyn96Book}).
The goal of the leader election problem
is to elect a unique leader from among distributed parties.
When each party has a unique identifier,
the problem can be
deterministically solved by
selecting the party
that has the largest identifier as the leader;
many classical deterministic algorithms 
in this setting
have been proposed~\cite{DolKlaRod82JAlgo,Pet82ToPLaS,GalHumSpi83ToPLaS,FreLyn87JACM,LeeTan87DistComp}.
As the number of parties grows, however,
it becomes difficult to preserve the uniqueness of the identifiers.
Thus, other studies have examined the cases wherein each party is anonymous,
i.e.,
each party has the same
identifier~\cite{Ang80STOC,ItaRod90InfoComp,YamKam96IEEETPDS-1,YamKam96IEEETPDS-2},
as an extreme case.
In this setting, every party has to  be in a common initial state and
run a common algorithm; if there are two parties who are in different
initial states or who run different algorithms, they can be
distinguished by regarding their initial states or algorithms as their
identifiers.  
A simple algorithm meets this condition:
initially, all parties are eligible to be the unique leader, and
repeats common subroutines that drop eligible parties
until only one party is eligible.
In the subroutines,
(1) every eligible party independently generates a random bit, 
(2) all parties then collaborate to check if all eligible parties have the
same bit,
and (3) if not, the eligible parties having bit ``0''
are made ineligible (otherwise nothing changes).
Thus, the problems can be solved probabilistically.
Obviously, there is a positive probability that
all parties get identical values from independent random number
generators.
In fact, the problem cannot be solved exactly
(i.e.,  in bounded time and with zero error)
on networks having symmetric structures such as rings,
even if every party can have unbounded computational power or
perform analogue computation with infinite precision.
The situation is unchanged if every party is allowed to
share infinitely many random strings.
Strictly speaking,
no classical exact algorithm
(i.e., an algorithm that runs in bounded time and solves the problem with zero error)
exists
for a broad class of network topologies including regular graphs,
even if the network topology (and thus the number of parties)
is known to each party prior to algorithm invocation~\cite{YamKam96IEEETPDS-1}.
Moreover, 
no classical zero-error algorithm
exists in such cases for any topology
that has a cycle as its subgraph~\cite{ItaRod90InfoComp},
if every party can get only the upper bound of the number of the parties.

\subsection{Our Results}
This paper considers the distributed computing model
in which the network is anonymous and consists of quantum
communication links,\footnote{Independently,
quantum leader election in the broadcast model
was studied in~\cite{dhondt-panangaden06QIC}.}
and gives two \emph{exact} quantum algorithms
both of which, given the number of parties, elect a unique leader from among $n$ parties
in polynomial time
for \emph{any} topology of synchronous networks
(note that no party knows the topology of the network).
Throughout this paper,
by time complexity we mean
the maximum number of steps,
including steps for the local computation,
necessary for each party to execute the protocol,
where the maximum is taken over all parties.
In synchronous networks, 
the number of simultaneous message passings
is also an important measure.
Each turn of simultaneous message passing is referred to
as a \emph{round}.

We first summarize our results before giving explanations of our
algorithms.
Outlines of our algorithms are given in subsection~\ref{subsec:outline-algorithms}.

Our first algorithm, ``Algorithm~I,'' runs in $O(n^3)$ time.
The total communication complexity of this algorithm is $O(n^4)$,
which includes the quantum communication of $O(n^4)$ qubits.
More precisely, we prove the next theorem.
\begin{theorem}
Let $\abs{E}$ and $D$ be the number of edges and the maximum degree of the
underlying graph, respectively.
Given $n$,
the number of parties,
Algorithm~I exactly elects a unique leader
in $O(n^2)$ rounds and $O(D n^2)$ time.
Each party connected with $d$ parties
requires $O(d n^2)$-qubit communication, and 
the total communication complexity over all parties
is $O(\abs{E} n^2)$.
\label{Section1: Theorem: Complexity of Algorithm I for n}
\end{theorem}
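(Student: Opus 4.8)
The plan is to decompose the claim into a correctness part and a resource-accounting part, and to reduce both to a per-phase analysis. I expect Algorithm~I to be organized into phases, each of which shrinks the set of $\eligible$ parties, so the first task is to identify the loop invariant: after every phase the set of eligible parties is nonempty, and its cardinality strictly decreases. Granting this invariant, correctness is immediate --- the process halts with exactly one eligible party, which is declared leader --- and the number of phases is at most $n-1$, since the eligible count starts at most $n$ and drops by at least one each phase. I would prove the invariant by analyzing the quantum symmetry-breaking subroutine in isolation: show that, whatever the current eligible set of size $k \ge 2$, the subroutine returns (with certainty, not merely with high probability) a partition in which at least one eligible party is retained and at least one becomes $\ineligible$. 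This exact ``$1 \le k' < k$'' guarantee is precisely the step that has no classical analogue, and it is the heart of the theorem; the argument should turn on the subroutine preparing a distributed superposition from which the fully symmetric outcomes (all eligible parties receiving the same value) have been removed, so that the measured configuration is never constant.

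Next I would bound the cost of a single phase. Each phase consists of (a) the quantum subroutine and (b) distributed coordination by which every party learns the global outcome --- e.g. whether symmetry was broken and the resulting $\status$ of each party. Because the network is anonymous and of unknown topology, this coordination must be done by propagating partial information across the graph; on a connected graph of $n$ vertices the diameter is at most $n-1$, so I expect each phase to take $O(n)$ rounds, and multiplying by the $O(n)$ phases yields the claimed $O(n^2)$ rounds. For the communication bound I would argue that in each round each party sends only $O(1)$ qubits along each incident edge (this is where Algorithm~I's restriction to undirected graphs should be used, to keep the aggregated messages of constant size rather than letting view-like data grow with the exploration depth). Then a degree-$d$ party sends $O(d)$ qubits per round, hence $O(d n^2)$ over all rounds, and summing $\sum_v d_v n^2 = 2\abs{E} n^2$ gives total communication $O(\abs{E} n^2)$. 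For the time bound, note each party does $O(D)$ local work per round to process the $O(1)$-size messages from its up to $D$ neighbors, so $O(n^2)$ rounds give $O(D n^2)$ time.

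The main obstacle, I expect, is the per-phase coordination step, for two reasons. First, exactness: on an anonymous network one cannot collect identifiers, so determining the global result of the quantum measurement --- in particular verifying \emph{with certainty} that the eligible set has genuinely shrunk and agreeing on the new status of every party --- requires a distributed computation that is correct on every topology and terminates in bounded time. Second, efficiency: keeping this computation within $O(n)$ rounds \emph{and} $O(1)$ qubits per edge per round simultaneously is delicate, since the natural view-based methods for anonymous networks produce messages that grow with the depth of exploration. I would therefore concentrate the effort on a lemma asserting that the coordination subroutine, on an undirected connected graph of at most $n$ vertices, computes the required global predicate exactly in $O(n)$ rounds using constant-size messages; once this lemma and the exact-reduction guarantee are in place, the remaining multiplications to reach $O(n^2)$ rounds, $O(D n^2)$ time, and $O(\abs{E} n^2)$ communication are routine.
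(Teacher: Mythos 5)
Your resource accounting is essentially the paper's: each phase consists of flooding subroutines that run for $n-1$ iterations (diameter bound), sending $O(1)$ qubits per incident edge per round and doing $O(D)$ local work per round, and multiplying by the $n-1$ phases gives $O(n^2)$ rounds, $O(Dn^2)$ time, $O(dn^2)$ qubits per degree-$d$ party, and $O(|E|n^2)$ total. That half of the proposal is sound.

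The correctness half has a genuine gap. Your proposed loop invariant --- that the eligible set \emph{strictly} decreases in every phase, for any current size $k\geq 2$ --- is not what Algorithm~I achieves, and the symmetry-breaking guarantee you plan to prove (``whatever the current eligible set of size $k\geq 2$, the subroutine returns with certainty a partition in which at least one party becomes ineligible'') is not available. The obstruction is that the unitaries $U_k$ and $V_k$ that destroy the consistent (cat-state) component only annihilate the all-zero and all-one amplitudes when the parameter $k$ fed to them \emph{equals} the actual number of eligible parties sharing the cat state; applied with the wrong parameter they may leave a consistent component, in which case the measurement can return identical values to all eligible parties and the set does not shrink. Since Algorithm~I never counts the eligible set (that is precisely the expensive view-based machinery reserved for Algorithm~II), no party knows the true $|S_i|$, so your per-phase guarantee cannot be established. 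The paper's proof instead runs a counter $k = n, n-1, \ldots, 2$ and proves the weaker invariant $k \geq |S_i|$ by induction: whenever $k$ happens to equal $|S_i|$ the phase strictly reduces the eligible set (and the max-taking step ensures it never becomes empty), and otherwise the set is non-increasing, so the eligible count can never overtake the descending counter; after $n-1$ phases exactly one party remains. You would need to replace your strict-decrease invariant with this countdown argument (or add an exact counting mechanism, which changes the algorithm and its complexity) for the correctness claim to go through.
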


The first algorithm works in a similar way to the simple probabilistic
algorithm described the above, except that
the first algorithm can reduce the number of parties
even in the situations corresponding to the classical cases
where all eligible parties obtain the same values.

Our second algorithm, ``Algorithm II,'' is more general than the first one in that
the second one can work even on any network, even those whose underlying graph
is directed and strongly-connected\footnote{Note that it is natural to assume that the underlying graph is 
strongly-connected when it is a directed graph; otherwise
there is a party that can know the information of only a part of the network.}
while the first one cannot.
Roughly speaking, the first algorithm has to invert parts of the quantum computation
and communication already performed 
to erase garbage for subsequent computation;
this inverting operation is hard to perform on directed networks (except
for some special cases), i.e., it demands communication links be
bidirectional in general. In contrast, the second algorithm does not have
to perform such inverting operations.
Another desirable property is that the second algorithm needs
less \emph{quantum} communication than the first one;
since sending a qubit would cost more than sending a
classical bit, reducing the quantum communication complexity 
is
desirable.
Our second algorithm
incurs $O(n^6 (\log n)^2)$ time complexity,
but demands the quantum communication of only $O(n^2 \log n)$ qubits
(plus classical communication of $O(n^6 (\log n)^2)$ bits).
The second algorithm is also superior to the first one in terms of round complexity.
While the first algorithm needs $O(n^2)$ rounds
of quantum communication,
the second algorithm needs only one round of quantum communication
at the beginning of the protocol
to share a sufficient amount of entanglement,
and after the first round,
the protocol performs only local quantum operations
and classical communications (LOCCs) of $O(n \log n)$ rounds.
More precisely, we prove the next theorem.
\begin{theorem}
Let $\abs{E}$ and $D$ be the number of edges and the maximum degree of the
underlying graph, respectively.
Given the number $n$ of parties,
Algorithm~II exactly elects a unique leader
in $O(Dn^5(\log n)^2)$ time and $O(n\log n)$ rounds
of which only the first round requires quantum communication.
The total communication complexity over all parties
is $O(D\abs{E} n^3(\log D)\log n)$
which includes the communication of only $O(\abs{E} \log n)$ qubits.
\label{Section1: Theorem: Complexity of Algorithm II for n}
\end{theorem}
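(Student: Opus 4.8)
The plan is to exhibit Algorithm~II as a single entanglement-distribution round followed by $O(\log n)$ essentially identical LOCC phases, to prove a loop invariant that drives the eligible set down to size exactly one with zero error, and then to sum the resources phase by phase. I would fix the eligible set $\eligible$, initialized to all $n$ parties, and track $\abs{\eligible}$ as a potential. Correctness then reduces to two claims: (i) each phase run on a configuration with $k \ge 2$ eligible parties returns one with $1 \le \abs{\eligible} \le k-1$; and (ii) when $\abs{\eligible}=1$ the sole survivor is declared leader. Exactness---bounded time \emph{and} zero error---follows because (i) forces a strict decrease \emph{with certainty} rather than in expectation, so the invariant sends $\abs{\eligible}$ monotonically to $1$ within a bounded number of phases.

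The heart of correctness is the quantum symmetry-breaking primitive, precisely the step no classical algorithm---even a zero-error one---can realize on symmetric topologies. In the single quantum round each edge carries $O(\log n)$ qubits, distributing a shared entangled resource; every later phase uses only local unitaries, local measurements, and classical communication, and in particular never inverts the quantum communication---this is exactly what lets the routine run on directed strongly-connected graphs, where Algorithm~I's rewinding would fail. Within a phase the eligible parties act locally on their shares and expose a register tested for being $\allzero$ versus $\nonallzero$; by construction of the shared state this test, aggregated over the eligible parties, certifies whether the induced classical labels are $\consistent$ (all equal, the classically fatal case) or $\inconsistent$. I would prove that even in the $\consistent$ case the entanglement still permits a deterministic reduction---a $\nonallzero$ verdict on the appropriate register lets the parties on one side of the split be made $\ineligible$---so that $\abs{\eligible}$ strictly decreases with probability $1$ in every phase while at least one party always survives. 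The argument writes the post-measurement state explicitly and shows that no global outcome can fix $\abs{\eligible}$.

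For the resources I would account phase by phase. There are $O(\log n)$ phases, one per bit-level of the $O(\log n)$-bit keys the parties resolve, and the dominant per-phase cost is evaluating symmetric Boolean functions on the anonymous network---the $\consistency$ predicate and the exact count $\abs{\eligible}$. Here I invoke the paper's new technique, which evaluates such functions on directed strongly-connected anonymous graphs in $O(n)$ rounds by folding depth-bounded views, in place of the $O(n^2)$ rounds of naive view-gathering; this gives $O(n\log n)$ rounds in total, of which only the first is quantum, as claimed. The quantum communication is confined to that first round, $O(\log n)$ qubits per edge and hence $O(\abs{E}\log n)$ qubits overall. Summing the classical messages exchanged while the Boolean-function routine runs, over all $\abs{E}$ edges and all rounds, yields $O(D\abs{E}n^3(\log D)\log n)$ bits, and the corresponding accounting for each party's local computation, multiplied by the $O(n\log n)$ rounds, yields the time bound $O(Dn^5(\log n)^2)$.

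The hard part will be twofold. First, proving that the primitive achieves \emph{exact} reduction in the symmetric case demands a precise description of the shared state and of the measurement, together with an argument---free of any appeal to randomness---that the aggregated $\allzero$/$\nonallzero$ test can never leave $\abs{\eligible}$ unchanged; this is exactly where the quantum/classical separation is earned. Second, the round bound rests entirely on the new Boolean-function routine, so the bulk of the work is showing that folding depth-bounded views is correct on arbitrary directed strongly-connected anonymous graphs and halts within $O(n)$ rounds without the quadratic blow-up, all while keeping the per-edge classical traffic within the stated budget. Once these two components are in place, assembling the five bounds is the routine accounting sketched above.
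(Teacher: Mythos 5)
Your overall architecture matches the paper's: one quantum round to share $\ceil{\log n}$ cat-like states, then LOCC phases each of which breaks consistency via the $U_k/V_k$ trick and uses folded views to evaluate the needed predicates in $O(n)$ rounds, with the resource accounting summed phase by phase exactly as you describe. But your correctness invariant, claim (i) --- that a phase entered with $k\ge 2$ eligible parties returns a configuration with $1\le\abs{S}\le k-1$ --- is too weak to close the argument. Stage~1 distributes only $s=\ceil{\log n}$ entangled states, so the algorithm has only $\ceil{\log n}$ phases before the entanglement is exhausted; a guaranteed decrease of at least one per phase would require $n-1$ phases and the algorithm would terminate with more than one eligible party still standing. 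The invariant the paper actually proves is that each phase cuts the eligible set to \emph{at most half} its size, and the mechanism delivering this is one you never mention: after the eligible parties measure their (provably inconsistent) registers to obtain two-bit labels, Subroutine~\texttilde{C} computes the \emph{minority} label together with its exact multiplicity via the folded view, and only the minority-labeled parties stay eligible. Inconsistency guarantees the minority class is nonempty and proper, hence of size at most $\lfloor k/2\rfloor$, which is what makes $\ceil{\log n}$ phases suffice. Your phrase ``one per bit-level of the $O(\log n)$-bit keys'' misattributes where the $\log n$ comes from.

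A second, related point you leave implicit: the operators $U_k$ and $V_k$ produce an inconsistent state only when the parameter $k$ equals the \emph{exact} current number of eligible parties, so the induction must carry the statement that the variable $k$ tracks $\abs{S_i}$ exactly; this is why Subroutine~\texttilde{C} must return the exact count $c_{z_{\minor}}$ (computed as $n\abs{\Gamma_X(z_{\minor})}/\abs{\Gamma_X}$ from the folded view) and feed it back as the next phase's $k$. You list ``the exact count'' among the quantities computed but do not tie it to the soundness of the symmetry-breaking unitaries, which is the step where exactness would otherwise silently fail.
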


Algorithms~I~and~II are easily modified to allow their use in asynchronous networks.
We summarize the complexity of the two algorithms in Table~\ref{tab:complexity}.
\begin{table}[t]
  \centering
  \caption{Complexity of the two algorithms for $n$, the number  of parties}
\begin{tabular}{|c||c|c|c|c|}\hline
& Time & Quantum Communication & Total Communication &Round \\\hline
Algorithm I & $O(n^4)$ & $O(n^4)$ & $O(n^4)$ & $O(n^2)$\\\hline
Algorithm II & $O(n^6(\log n)^2)$ & $O(n^2)$ & $O(n^6 (\log n)^2)$ &
$O(n\log n)$\\\hline
\end{tabular}
  \label{tab:complexity}
\end{table}

Furthermore, both algorithms can be modified
so that they work well even when
each party initially knows only the upper bound $N$ of the number $n$ of parties.
For Algorithm~I, each party has only to perform the algorithm
with 
$N$ instead of $n$.
The complexity is described
simply by replacing every 
$n$ by $N$.
\begin{corollary}
Let $\abs{E}$ and $D$ be the number of edges and the maximum degree of the
underlying graph, respectively.
Given $N$,
the number  of parties,
Algorithm~I exactly elects a unique leader
in $O(N^2)$ rounds and $O(D N^2)$ time.
Each party connected with $d$ parties
requires $O(d N^2)$-qubit communication, and 
the total communication complexity over all parties
is $O(\abs{E} N^2)$.
\label{Section1: Corollary: Complexity of Algorithm I for N}
\end{corollary}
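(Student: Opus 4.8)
The plan is to show that Algorithm~I is robust to overestimating the number of parties: running it with the upper bound $N$ in place of the true count $n$ preserves both correctness and the shape of the complexity bounds. First I would revisit how the supplied number enters Algorithm~I in the proof of Theorem~\ref{Section1: Theorem: Complexity of Algorithm I for n}. There the number is used in essentially two ways: (i) as the number of rounds over which each party propagates local information so as to build up a global view, and (ii) as the bound that controls how many party-eliminating phases are executed, together with the sizes of the registers that store intermediate data. I would argue that neither use actually requires the parameter to equal $n$; correctness only needs it to be at least $n$.

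For the information-gathering step, the key fact is that the underlying graph is connected, so its diameter is at most $n-1 \le N-1$. Hence $N$ rounds of propagation are at least as many as are needed for every party's data to reach every other party, and the only effect of the extra $N-n$ rounds is redundant, identical updates that leave the computed views unchanged. Thus every party finishes these rounds holding exactly the same global information it would have obtained under the true value.

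For the elimination step, I would check that overestimating only lengthens the process harmlessly. Because $N \ge n$, the number of phases scheduled under $N$ is at least the number scheduled under $n$, so the reduction of the eligible set still proceeds until a single leader remains; and once exactly one party is eligible, each further invocation of the subroutine either leaves the unique eligible party untouched or is bypassed by the termination test, so additional phases can neither create a second leader nor destroy the last one. The registers, being sized for $N$ parties, are a fortiori large enough for the data generated by the $n \le N$ actual parties. Consequently Algorithm~I run with $N$ still elects a unique leader exactly. For the complexity, every bound in Theorem~\ref{Section1: Theorem: Complexity of Algorithm I for n} is a monotone expression in the input parameter, while the topological quantities $\abs{E}$, $D$, and $d$ are unaffected by the substitution; replacing the parameter value by $N$ therefore turns $O(n^2)$ rounds into $O(N^2)$, $O(Dn^2)$ time into $O(DN^2)$, the per-party $O(dn^2)$-qubit communication into $O(dN^2)$, and the total $O(\abs{E}n^2)$ into $O(\abs{E}N^2)$, as claimed.

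The one point that needs genuine care — and the step I expect to be the real obstacle — is confirming that no internal consistency check in Algorithm~I compares a \emph{computed} quantity (such as a tally of eligible or total parties accumulated over the network) against the supplied parameter in a way that would fail when $N>n$. I would verify that every such check tests only equality \emph{among} the parties, for example whether all eligible parties currently hold a common value, rather than agreement with the global count, so that overestimation can never trigger a spurious inconsistency and the exactness of the election is preserved.
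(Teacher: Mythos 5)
Your proposal is correct and follows essentially the same route as the paper, which simply observes that each party runs Algorithm~I with $N$ in place of $n$, that correctness is then immediate from the proof of Theorem~\ref{Section1: Theorem: Complexity of Algorithm I for n} (the invariant $k\geq\abs{S_i}$ still holds since $k$ starts at $N\geq n=\abs{S_1}$, and the extra rounds in Subroutines~A and~C are harmless because the diameter is at most $n-1\leq N-1$ and the update operators are idempotent), and that the complexity bounds follow by substituting $N$ for $n$. Your elaboration of these points, including the observation that no internal check compares against the global count, just spells out what the paper leaves implicit; the only cosmetic slip is your mention of a ``termination test,'' which Algorithm~I does not have (it always runs all $N-1$ phases), but your alternative branch --- that the sole eligible party always attains $z_{\max}$ and so survives every extra phase --- already covers this.
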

Algorithm~II 
strongly depends on counting the exact number of eligible parties and this
requires knowledge of the exact number of parties.
Thus, we cannot apply Algorithm~II as it is, when each party initially
knows only the upper bound $N$ of $n$;
We need considerable elaboration to
modify Algorithm~II.
We call this variant of Algorithm~II Generalized Algorithm~II.

\begin{corollary}
Let $\abs{E}$ and $D$ be the number of edges and the maximum degree of the
underlying graph, respectively.
Given $N$, the upper bound of the number of parties,
Generalized Algorithm~II exactly elects a unique leader
in $O(DN^6(\log N)^2)$ time and $O(N\log N)$ rounds
of which only the first round requires quantum communication.
The total communication complexity over all parties
is $O(D\abs{E} N^4(\log D)\log N)$
which includes the communication of only $O(\abs{E} N\log N)$ qubits.
\label{Section1: Corollary: Complexity of Algorithm II for N}
\end{corollary}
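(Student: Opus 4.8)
The plan is to build Generalized Algorithm~II from $O(N)$ calibrated copies of Algorithm~II run \emph{in parallel}, one copy for each candidate value $k$ of the party count with $2\le k\le N$, where in copy $k$ every party uses $k$ in place of the true $n$ inside the exact-counting step on which Algorithm~II relies. The claimed complexity forces this parallel organization rather than a sequential sweep over $k$: the round count only changes from $O(n\log n)$ to $O(N\log N)$, i.e.\ by the substitution $n\mapsto N$ with no extra factor, whereas the time, the total communication, and the qubit count each acquire one extra factor of $N$. I would therefore have every party, during the single quantum round at the start, share at once all the entanglement needed to seed all $O(N)$ copies (this is the source of the $O(\abs{E}N\log N)$ qubit bound), and then drive every copy concurrently by LOCC, so that the $O(N\log N)$ rounds are shared among the copies while the per-round local work and communication scale up by $N$.

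First I would observe that copy $k=n$ is literally Algorithm~II supplied with the true party count, and hence by Theorem~\ref{Section1: Theorem: Complexity of Algorithm II for n} it terminates with exactly one eligible party; thus at least one copy is guaranteed to succeed. The guiding principle for correctness is that the precise identity of the ``right'' guess is irrelevant: \emph{any} copy that finishes with exactly one eligible party exhibits a legitimate unique leader. It therefore suffices to (i) detect, for each copy $k$, whether exactly one eligible party remains at the end, and (ii) have all parties agree on a single canonical good copy---say the good copy with the smallest index $k$---and adopt that copy's surviving party as the leader. Existence of a good copy is secured by $k=n$, and agreement on the canonical choice follows because the detection in~(i) is a symmetric predicate evaluated identically by every party.

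The main obstacle will be carrying out the detection in~(i) \emph{exactly}, with zero error and without any party ever learning $n$ or the topology, using only the upper bound $N$. Concretely, for each copy the parties must evaluate the symmetric Boolean predicate ``the number of currently eligible parties equals one,'' and this must be done on the anonymous network knowing only $N\ge n$, a regime in which purely classical exact evaluation is in general impossible. My plan is to fold this detection into the exact Boolean-function evaluation procedure for anonymous networks developed earlier in the paper, instantiated with the size parameter $N$, and to verify that, combined with the quantum resources already shared, it certifies each copy without error; a subsidiary point to check is that a miscalibrated copy $k\neq n$ can never be misreported as good, so that the canonical choice among good copies is unambiguous. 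Establishing this exact, $N$-based detection and agreement is the heart of the argument.

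Once correctness is in place, the complexity claims are pure bookkeeping against Theorem~\ref{Section1: Theorem: Complexity of Algorithm II for n} with $n$ replaced by the bound $N$. A single copy costs $O(DN^5(\log N)^2)$ time and $O(D\abs{E}N^3(\log D)\log N)$ total communication and needs $O(\abs{E}\log N)$ qubits; running $O(N)$ copies in parallel multiplies the time and total communication by $N$, giving $O(DN^6(\log N)^2)$ and $O(D\abs{E}N^4(\log D)\log N)$, multiplies the first-round qubit count to $O(\abs{E}N\log N)$, and leaves the round count at the per-copy value $O(N\log N)$ because the copies overlap in time. The per-round detection steps are themselves anonymous-network function evaluations with parameter $N$ and are absorbed into these bounds.
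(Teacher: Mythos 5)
Your parallel organization of $O(N)$ copies of Algorithm~II (one per guess $m$ of the party count), the single shared quantum round, and the complexity bookkeeping ($O(\sum_{m=2}^{N}C(m))=O(N\cdot C(N))$ for time and communication, rounds equal to the per-copy maximum, $O(\abs{E}N\log N)$ qubits) all match the paper. The gap is in the part you yourself flag as ``the heart of the argument'': you propose to certify a copy by exactly evaluating the predicate ``exactly one eligible party remains'' on the anonymous network knowing only $N$, via the paper's Boolean-function machinery ``instantiated with the size parameter $N$.'' That machinery cannot be so instantiated: the folded-view counting computes $c_X(S)=n\abs{\Gamma_X(S)}/\abs{\Gamma_X}$ and returns the exact count only when fed the \emph{true} $n$; fed $N>n$ it returns wrong (possibly non-integer) values, which is precisely why the paper says Algorithm~II cannot be applied as is. So your detection step is unestablished, and with it your selection rule: you pick the \emph{smallest} ``good'' index, but nothing rules out an under-guess $m<n$ terminating in a state that your (unavailable) detector would have to reject --- a copy run with the wrong count can reach its internal condition $k=1$ while more than one party is actually still eligible.

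The paper's route avoids direct detection entirely. It modifies Algorithm~II into $\LE(\status,m,d)$ so that a copy \emph{self-reports} ``$\error$'' if it fails to finish within $\lceil\log m\rceil$ phases or if its counters take non-integer values, and then proves (Lemma on the modified algorithm) that every over-guess $m>n$ \emph{must} error, because the computed count satisfies $c_i=m\abs{\Gamma^{2(m-1)}(i)}/\abs{\Gamma^{2(m-1)}}\ge m/n>1$ whenever label $i$ actually occurs, so $k$ never reaches $1$. Taking $M$ to be the \emph{largest} non-erroring index then forces $M=n$, and that copy is literally Algorithm~II with the correct input, so Theorem~\ref{Section1: Theorem: Complexity of Algorithm II for n} applies. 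Note the direction: correctness needs only a one-sided guarantee about over-guesses plus the largest-index rule; no exact anonymous evaluation of a termination predicate, and no claim about under-guesses, is required. To repair your proof you would need to either adopt this self-erroring mechanism and switch to the largest-index selection, or supply a genuinely new exact detection procedure --- and the latter is exactly what is classically impossible in this regime.
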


These corollaries imply that the exact number of parties can be computed
when its upper bound is given.
No classical zero-error algorithm
exists in such cases for any topology
that has a cycle as its subgraph~\cite{ItaRod90InfoComp}.

In general, most quantum algorithms use
well-known techniques such as the quantum amplitude amplification~\cite{Bra00AMS}
and the quantum Fourier transform~\cite{Sho97SIComp}, as building blocks.
In contrast, our algorithms use quite new quantum operations in
combination
with (improvements of) classical techniques.
The quantum operations are be interesting in their own right
and may have the potential to trigger the development of 
other tools that can 
elucidate
the advantage of quantum computing over the classical computing.
The folded view we shall introduce later in this paper in Algorithm~II
can be used to compute any (computable) Boolean function 
with distributed inputs
on anonymous
networks, as used in~\cite{kranakis-krizanc-vandenberg94},
when every party knows the number of parties but not
the topology of the network;
this quadratically decreases the number of rounds needed
without increasing the communication complexity.
More precisely, 
an $n$-bit Boolean function can be computed
by using folded view with $O(n)$ rounds of classical communication of
$O(n^6\log n)$ bits,
while the algorithm
in~\cite{kranakis-krizanc-vandenberg94}
computes the function in $O(n^2)$ rounds with the same amount of
classical communication (their second algorithm
can compute a symmetric Boolean function with 
lower communication complexity, i.e., $O(n^5(\log n)^2)$,
and $O(n^3\log n)$ rounds, but it requires 
that every party knows the topology of the network).
From a technical viewpoint,
folded view
is a generalization of Ordered Binary Decision Diagrams
(OBDD)~\cite{Bry86IEEETC},
which are used in commercial VLSI CAD systems
as data structures to process Boolean functions;
folded view would be interesting in its own right.

From a practical point of view, classical probabilistic algorithms
would do in many situations, since they are expected to run with the
sufficiently small time/communication complexity.  Furthermore, our
model  does not allow any noise on the communication links and our algorithms
use unitary operators whose matrices have elements depending on
$e^{i\cdot O(\frac{1}{n})}$ for the problem of $n$ parties.  In practical
environments, in which communication noise is inevitable and all
physical devices have some limits to their precision, our algorithms cannot
avoid errors.  In this sense, our algorithm would be fully theoretical
and in a position to reveal a new aspect of quantum distributed
computing.
From a theoretical point of view,
however, no classical computation (including analog computation)
with a finite amount of
resources can exactly solve the problem when given the number of
parties, and solve even with zero error when given only the upper bound of
the number of parties;
our results demonstrate examples
of significant
superiority in computability
of distributed quantum computing
over its classical counterpart
in computability,
beyond the reduction of communication cost.
To the best knowledge of the authors,
this is the first separation of quantum and classical
pure (i.e., non-cryptographic) computing abilities with no
restriction on the amount of available computing resources.
\subsection{Outline of the Algorithms}
\label{subsec:outline-algorithms}
\subsubsection{Algorithm I}
The algorithm repeats one procedure exactly ${(n-1)}$ times,
each of which is called a \emph{phase}.
Intuitively, a phase corresponds to a coin flip.

In each phase $i$,
let ${S_i \subseteq \{1, \ldots, n\}}$
be the set of all $l$s such that party $l$ is still eligible.
First, each eligible party prepares the state
${(\ket{0}+\ket{1})/\sqrt{2}}$ in one-qubit register $\bfR_0$, instead of generating a random bit,
while each ineligible party prepares the state $\ket{0}$ in $\bfR_0$.
Every party then collaborates to check in a superposition if all
eligible parties have the same
content in $\bfR_0$. They then store the result into another
one-qubit register ${\bf S}$, followed by inversion of the computation
and communication performed for this checking in order to erase
garbage. (This inverting step makes it impossible for the algorithm to
work on directed networks.)
After measuring ${\bf S}$, exactly one of the two cases is chosen
by the laws of quantum mechanics: the first case is that 
the qubits of all eligible parties' $\bfR_0$ are in a quantum state
that superposes
the classical situations where all eligible parties \emph{do not}
have the same bit, and the second case is that
the qubits are in a state that superposes the complement situations,
i.e., cat-state ${(\ket{0}^{\otimes \abs{S_i}}+\ket{1}^{\otimes \abs{S_i}})/\sqrt{2}}$
for set $S_i$
of eligible parties. Note that the ineligible parties' qubits in 
$\bfR_0$ are not entangled.

In the first case, every eligible party measures $\bfR_0$ and gets a
classical bit. Since this corresponds to one of the classical
situations superposed in the quantum state in which all eligible parties do
not have the same bit, this always reduces the number of the eligible parties.
In the second case, however, every eligible party would get the same bit
if he measured $\bfR_0$.  To overcome this, we introduce two families
of novel unitary operations: $\{ U_k \}$ and $\{ V_k \}$.  Suppose
that the current phase is the $i$th one.  Every eligible party  then
performs $U_k$ or $V_k$, depending on whether $k=(n-i+1)$ is even or
odd.  We prove that, if $(n-i+1)$ equals $\abs{S_i}$, the resulting state superposes
the classical states in which all eligible parties \emph{do not} have the same values;
eligible parties can be reduced by using the values obtained by the
measurement. 
If $(n-i+1)\neq \abs{S_i}$, the resulting quantum state may include a classical
state in which all parties have the same values;
the set of eligible parties may not be changed by using the measurement results.
(To make this step work, we erase
the garbage after computing the data stored into ${\bf S}$.)
It is clear from the above that $k$ is always at least $|S_i|$ in each
phase $i$, since $k$ is $n=|S_1|$ in the first phase and is decreased
by 1 after each phase. It follows that exactly one leader is elected
after the last phase.

\subsubsection{Algorithm II}
The algorithm consists of two stages, Stages 1 and 2.
Stage 1 aims to have the $n$ parties share a certain type of entanglement,
and thus, this stage requires the parties to exchange quantum messages.
In Stage 1, every party exchanges only one message
to share $s$ pure quantum states
${\ket{\phi^{(1)}}, \ldots, \ket{\phi^{(s)}}}$ of $n$ qubits.
Here, each $\ket{\phi^{(i)}}$ is of the form
${(\ket{x^{(i)}} + \ket{\overline{x}^{(i)}})/\sqrt{2}}$
for an $n$-bit string $x^{(i)}$ and its bitwise negation $\overline{x}^{(i)}$,
and the $l$th qubit of each $\ket{\phi^{(i)}}$
is possessed by the $l$th party.

Stage 2 selects a unique leader from among the $n$ parties only by LOCCs,
with the help of the shared entanglement prepared in Stage 1.  This
stage consists of at most $s$ phases, each of which reduces the number
of eligible parties by at least half,
and maintains variable $k$ that represents the number of eligible parties.
At the beginning of Stage 2, $k$ is set to $n$ (i.e., the number of
all parties).
Let ${S_i \subseteq \{1,
  \ldots, n\}}$ be the set of all $l$s such that party $l$ is still
eligible just before entering phase $i$.  First every party exchanges
classical messages to decide if  all eligible parties have the same
content for state $\ket{\phi^{(i)}}$ or not.

If so,
the parties
transform $\ket{\phi^{(i)}}$
into the $\abs{S_i}$-cat state
${(\ket{0}^{\otimes \abs{S_i}} + \ket{1}^{\otimes \abs{S_i}})/\sqrt{2}}$
shared only by eligible parties
and then use $U_k$ and $V_k$ as in Algorithm~I
to obtain a state that superposes
the classical states in which all eligible parties \emph{do not} have the
same values.

Each party $l$ then measures his qubits to obtain a label 
and find the minority among all parties' labels.
The number of eligible parties is then 
reduced by at least half via minority voting with respect to the labels.
The updated number of eligible parties is set to $k$.

To count the exact number of eligible parties that have a certain label,
we make use of
a classical technique, called
\emph{view}~\cite{YamKam96IEEETPDS-1,yamashita-kameda99}.
However, a na\"{\i}ve application of view
incurs exponential classical time/communication complexity,
since view is essentially a tree that covers the underlying graph, $G$, of
the network, i.e., the universal cover of $G$~\cite{norris95}.
To keep the complexity moderate,
we introduce the new technique of \emph{folded view},
with which the algorithm still runs
in time/communication polynomial with respect to the number of parties.

To generalize Algorithm~II so that it can work given the upper bound of
the number of the parties,
we modify Algorithm~II
so that 
it can halt in at most $\lceil \log N\rceil$ phases
even when it is given the wrong number of
parties as input,
and it can output
$\mbox{``$\error$''}$ 
when it concludes that the leader has not been elected yet
after the last phase.
The basic idea is to simultaneously run $N-1$ processes of 
the modified algorithm,
each of which is given $2,3,\ldots, N$, respectively,  as the number of parties.
Let $M$ be the largest $m\in \{ 2,3,\ldots ,N\}$ such that
the process of the modified algorithm for $m$
terminates with output $\mbox{``$\eligible$''}$ or
$\mbox{``$\ineligible$''}$ 
(i.e., without outputting $\mbox{``$\error$''}$).
We will prove that
$M$ is equal to the hidden number of parties, i.e., $n$,  and thus
the process for $M$ elects the unique leader.
We call this Generalized Algorithm~II.

\subsection{Organization of the Paper}
Section~\ref{sec:Preliminaries} defines the network
model and the leader election problem in an anonymous network.
Section~\ref{sec:Quantum Leader Election Algorithm I} first gives Algorithm~I
when the number of parties is given to every party, and then
generalizes it to the case where only the upper bound of the number of
parties is given.
Section~\ref{sec:Quantum Leader Election Algorithm II}
describes Algorithm~II 
when the number of parties is given to every party, 
and also its generalization for the case where 
only the upper bound of the number of
parties is given. 
Algorithm~II is first described on undirected networks
just for ease of understanding,
and then modified so that it can work well when the underlying
graph is directed and strongly-connected.
Section~\ref{sec: view compression} defines folded view and 
proves that every party can construct it
and extract from it the number of parties that have some specified value
in polynomial time and communication cost.
Finally, section~\ref{sec: conclusion} concludes the paper.

\section{Preliminaries}
\label{sec:Preliminaries}

\subsection{Quantum Computation}
Here we briefly introduce quantum computation
(for more detailed introduction, see~\cite{nielsen-chuang00,KitSheVya02Book}).  
A unit of quantum information corresponding to a bit is called a
\emph{qubit}.
A \emph{pure quantum state} (or simply a \emph{pure state})
of the quantum system consisting of $n$ qubits is a vector of
unit-length
in the $2^n$-dimensional Hilbert space.
For any basis $\{ \ket{B_0},\ldots, \ket{B_{2^n-1}}\}$ of the space,
any pure state can be represented by 
$$
\sum _{i=0}^{2^n-1} \alpha_i|B_i \rangle,
$$
where complex number $\alpha_i$, called \emph{amplitude}, holds $\sum _i
|\alpha_i|^2=1$.
There is a simple basis in which every basis state $\ket{B_i}$
corresponds to one of the $2^n$ possible classical positions
in the space: $(i_0,i_1,\ldots ,i_{n-1})\in \{0,1\}^n$.
We often denote $\ket{B_i}$ by 
$$\ket{i_0}\otimes \ket{i_1}\otimes \cdots \otimes \ket{i_{n-1}},$$
$$\ket{i_0}\ket{i_1}\cdots \ket{i_{n-1}},$$ or
$$\ket{i_0i_1\cdots i_{n-1}}.$$
This basis is called the \emph{computational basis}.

If the quantum system is measured
with respect to any basis $\{ B_0,\ldots, B_{2^n-1}\}$, the
probability of observing basis state $|B_i\rangle$ is $|\alpha_i|^2$.  As
a result of the measurement, the state is projected onto the observed basis state. 
Measurement can also be performed on a part of the system or some of
the qubits forming the system.
For example, let $\sum _{i=0}^{2^n-1} \alpha_i|B_i \rangle$ be $|\phi _0\rangle|0\rangle+
|\phi _1\rangle|1\rangle$.
If we measure the last qubit 
with respect to basis $(|0\rangle, |1\rangle)$,
we obtain $|0\rangle$ with probability $\langle \phi _0|\phi
_0\rangle$ and $|1\rangle$ with probability $\langle \phi _1|\phi
_1\rangle$, where $\langle \phi|\psi\rangle$  is the inner product of
vectors $|\phi\rangle$ and $|\psi\rangle$,
and the first ${n-1}$ qubits collapse to
$|\phi _0\rangle/\sqrt{\langle \phi _0|\phi
_0\rangle}$ and $|\phi _1\rangle/\sqrt{\langle \phi _1|\phi
_1\rangle}$, respectively.
In the same way, we may measure the system with respect to other bases.
In particular,
we will use the \emph{Hadamard basis}
${\{|+\rangle, |-\rangle\}}$
of the 2-dimensional Hilbert space, where
$\ket{+}=\frac{1}{\sqrt{2}}(|0\rangle+|1\rangle)$ and $\ket{-}=\frac{1}{\sqrt{2}}(|0\rangle-|1\rangle)$.
If $\sum _{i=0}^{2^n-1} \alpha_i|B_i \rangle$ is expressed as $|\phi _+\rangle|+\rangle+
|\phi _-\rangle|-\rangle$,
by measuring the last qubit
with respect to the Hadamard basis,
we observe $|+\rangle$ with probability $\langle \phi _+|\phi
_+\rangle$ and $|-\rangle$ with probability $\langle \phi _-|\phi
_-\rangle$,
and the post-measurement states are
$|\phi _+\rangle/\sqrt{\langle \phi _+|\phi
_+\rangle}$ and $|\phi _-\rangle/\sqrt{\langle \phi _-|\phi
_-\rangle}$, respectively.

In order to perform computation over the quantum system, 
we want to apply transformations to the state of the system.
The laws of quantum mechanics permit
only unitary transformations over the Hilbert space.
These transformations are represented by unitary matrices, where
a unitary matrix is one whose conjugate transpose equals its inverse.

\subsection{The distributed network model}
\label{subsec:model}
A \emph{distributed system} (or \emph{network}) is composed of
multiple parties
and bidirectional classical communication links connecting parties.
In a quantum distributed system,
every party can perform quantum computation and communication,
and each adjacent pair of parties
has a bidirectional quantum communication link between them.
When the
parties and links are regarded as nodes and edges, respectively,
the topology of the distributed system is expressed by an undirected
connected graph, denoted by ${G=(V,E)}$. 
In what follows,
we may identify each party/link with its corresponding node/edge
in the underlying graph for the system,
if it is not confusing.
Every party has \emph{ports} 
corresponding one-to-one to communication links incident to the party.
Every port of party $l$ has a unique label $i$, $1 \leq i \leq d_l$,
where $d_l$ is the number of parties adjacent to $l$.
More formally,
$G$ has a \emph{port numbering},
which is a set $\sigma$ of functions $\{ \sigma[v] \mid v\in V\}$
such that, for each node $v$ of degree $d_v$,
$\sigma[v]$ is a bijection from the set of edges incident to $v$
to $\{ 1, 2, \ldots, d_v \}$.
It is stressed that
each function $\sigma[v]$ may be defined independently of the others.
Just for ease of explanation, we assume that
port $i$ corresponds to the link
connected to the $i$th adjacent party of party $l$.
In our model,
each party knows the number of his ports and
can appropriately choose one of his ports
whenever he transmits or receives a message.

Initially, every party has local information,
such as his local state,
and global information,
such as the number of nodes in the system (or an upper bound).
Every party runs the same algorithm,
which has local and global information as its arguments.
If all parties have the same local and global information except
for the number of ports they have,
the system is said to be \emph{anonymous}.
This is essentially equivalent
to the situation in which every party has the same identifier
since we can regard the local/global information of the party as his identifier.

Traditionally, distributed system are either
synchronous or asynchronous. In the synchronous case, message passing is performed synchronously.
The unit interval of synchronization is called a \textit{round}, which is the
combination of the following two steps \cite{Lyn96Book}, where the two functions
that generate messages and change local states are defined by the algorithm
invoked by each party:
(1) each party changes the local state as a function of 
  the current local state and the incoming messages, and removes the
  messages from the ports;
(2) each party generates messages and decides ports
through which the messages should be sent
as another function of the current
  local state, and sends the messages via the ports.
If message passing is performed synchronously,
a distributed system is called \emph{synchronous}.

\subsection{Leader election problem in anonymous networks}
The leader election problem is formally defined as follows.
\begin{definition}[Leader Election Problem ($\LE_n$)]
Suppose that there is an $n$-party distributed system with
underlying graph $G$, and that each party
$i\in \{ 1,2,\ldots, n\}$ in the
system has a variable $x_i$ initialized to some constant $c_i$. 
Create the situation in which ${x_k=1}$ for only one of
$k\in \{ 1,2,\ldots, n\}$
and ${x_i=0}$ for every $i$ in the rest 
$\{ 1,2,\ldots, n\}\setminus \{k\}$.
\end{definition}
When each party $i$ has his own unique identifier, i.e.,  $c_i\in \{
1,2,\ldots, n\}$ such that $c_i\neq c_j$ for $i\neq j$, $\LE_n$ can be
deterministically solved in  $\Theta(n)$ rounds in the synchronous case
and $\Theta(n\log n)$ rounds in the asynchronous case~\cite{Lyn96Book}.  

When $c_i=c_j$ for all $i$ and $j$ ($i\neq j$), 
the parties are said to be \emph{anonymous} and the
distributed system (network) consisting of anonymous parties is also said
to be \emph{anonymous}.  The leader election
problem in an anonymous network was first investigated by
Angluin~\cite{Ang80STOC}. 
Her model allows 
any two adjacent parties can collaborate to toss a coin to decide a winner:
one party receives a bit ``1'' (winner) 
if and only if the other party receives a bit ``0'' (loser),
as the result of the coin toss
Among the two parties, thus,  a leader can exactly be
elected, where ``exactly'' means ``in bounded time and without error.''
Even in her model, she showed that there are infinitely many
topologies of
anonymous networks for which
no algorithms exist that can exactly solve the problem,
and gave a necessary and
sufficient condition in terms of graph covering for exactly solving the
leader election problem.  In the usual anonymous network model
which does not assume such coin-tossing
between adjacent parties, Yamashita and
Kameda~\cite{YamKam96IEEETPDS-1} proved that,  if the ``symmetricity''
(defined in \cite{YamKam96IEEETPDS-1}) of the network topology is more
than one, $\LE_n$ cannot be solved exactly (more rigorously speaking,
there are some port numberings for which $\LE_n$ cannot be solved
exactly) by any classical algorithm even if all parties know the
topology of the network (and thus the number of parties).  The
condition that symmetricity of
more than one implies that,  a certain port numbering
satisfies that, for any party $i$,  
there is 
an automorphism 
on the underlying graph with the port numbering
that
exchanges $i$ and another party $i'$.
This condition holds for a broad class of graphs, including
regular graphs. Formally, the symmetricity is defined later by using ``views.''

Since it is impossible in many cases to exactly solve the problem in the classical setting, many
probabilistic algorithms have benn proposed. Itai and Rodeh~\cite{ItaRod81FOCS,ItaRod90InfoComp} gave a
zero-error algorithm for a synchronous/asynchronous unidirectional
ring of size $n$; it
is
expected to take $O(n)$/$O(n\log n)$ rounds with the communication of
$O(n)$/$O(n\log n)$ bits.

When every party knows only the upper bound
of the number of the parties, which is at most twice the exact number,
Itai and Rodeh~\cite{ItaRod90InfoComp} showed that the problem can still be 
solved with zero error on a ring.
However, if 
the given upper bound
can be more than twice the exact number of the parties,
they showed that there is no
zero-error classical algorithm for a ring.
This impossibility result 
can be extended to a general topology having cycles.
They also proved that there is a 
bounded-error algorithm for a ring,
given an upper bound of the number of parties.

Schieber and
Snir~\cite{schieber-snir94} gave bounded-error algorithms
for any topology even when no information on the number of
parties is available, although no party can detect the termination
of the algorithms 
(such algorithms are called \emph{message termination algorithms}).
Subsequently, Afek and Matias~\cite{afek-matias94} described more
efficient bounded-error message termination algorithms.

Yamashita and Kameda ~\cite{yamashita-kameda99} examined the case in
which every party is allowed to send messages only via the broadcast channel and/or
to receive messages only via their own mailboxes (i.e.,
no party can know which port was used to
send and/or receive messages).

\section{Quantum leader election algorithm I}
\label{sec:Quantum Leader Election Algorithm I}

For simplicity, we assume that the network is synchronous
and each party knows the number of parties, $n$,  prior to algorithm invocation.
It is easy to generalize our algorithm to the asynchronous case
and to the case where only the upper bound $N$ of the number of parties
is given,
as will be discussed at the end of this section.

Initially, all parties are eligible to become the leader.
The key to solving the leader election problem in an anonymous network
is to break symmetry,
i.e., 
to have some parties possess a certain state different from 
those of the other parties.

First we introduce the concept of
\emph{consistent} and \emph{inconsistent} strings.
Suppose that each party $l$ has $c$-bit string $x_l$
(i.e., the $n$ parties share $cn$-bit string ${x = x_1x_2 \cdots x_n}$).
For convenience, we may consider that each $x_l$ expresses an integer,
and identify string $x_l$ with the integer it expresses.
Given a set ${S \subseteq \{1, \ldots, n\}}$,
string $x$ is said to be \emph{consistent} over $S$
if $x_l$ 
has the same value for all $l$ in $S$.
Otherwise, $x$ is said to be \emph{inconsistent} over $S$.
We also say that the $cn$-qubit pure state
${\ket{\psi} = \sum_{x} \alpha_x \ket{x}}$
shared by the $n$ parties is \emph{consistent (inconsistent)} over $S$
if ${\alpha_x \neq 0}$ only for $x$'s that are consistent
(inconsistent) over $S$. 
Note that there are states that are neither consistent nor
inconsistent
(i.e., states which are in a superposition of 
consistent and inconsistent states).
By \emph{$m$-cat state}, we mean 
the pure state of the
form of $(\ket{0}^{\otimes m}+\ket{1}^{\otimes m})/\sqrt{2}$,
for positive integer $m$.
When we 
apply the operator of the form
\[
\sum_j\alpha_j\ket{\eta_j}\ket{0}\mapsto\sum_j\alpha_j\ket{\eta_j}\ket{\eta_j}
\]
for the computational basis $\{ \ket{\eta_j}\}$ over the Hilbert space
of known dimensions, we just say ``copy'' if
it is not confusing. 

\subsection{The algorithm}
\label{Subsection: The Algorithm I}

The algorithm repeats one procedure exactly ${(n-1)}$ times,
each of which is called a \emph{phase}.
In each phase, the number of eligible parties either decreases
or remains the same, but never increases or becomes zero.
After ${(n-1)}$ phases, the number of eligible parties becomes
one with certainty.

Each phase has a parameter denoted by $k$,
whose value is ${(n-i+1)}$ in phase $i$.
In each phase $i$,
let ${S_i \subseteq \{1, \ldots, n\}}$
be the set of all $l$s such that party $l$ is still eligible.
First, each eligible party prepares the state
${(\ket{0}+\ket{1})/\sqrt{2}}$ in register $\bfR_0$,
while each ineligible party prepares the state $\ket{0}$ in $\bfR_0$.
Next every party calls Subroutine~A, followed by partial measurement.
This transforms the system state, i.e., the state in all parties' $\bfR_0$s into either
$(\ket{0}^{\otimes \abs{S_i}}+\ket{1}^{\otimes \abs{S_i}})\otimes
\ket{0}^{\otimes (n-|S_i|)}/\sqrt{2}$
or a state that is inconsistent over $S_i$,
where the first $\abs{S_i}$ qubits  represent the qubits in the eligible
parties' $\bfR_0$s.
In the former case, each eligible party calls Subroutine~B,
which uses a new ancilla qubit in register $\bfR_1$.
If $k$ equals $|S_i|$,
Subroutine~B always succeeds in transforming the $|S_i|$-cat state in
the eligible parties' $\bfR_0$s
into a 2$|S_i|$-qubit state that is inconsistent over $S_i$ by using
the $|S_i|$ ancilla qubits.
In the latter case, each eligible party simply initializes the qubit
in $\bfR_1$ to state $\ket{0}$.
Now, each eligible party $l$ measures his qubits in $\bfR_0$ and $\bfR_1$ in the computational basis
to obtain (a binary expression of) some two-bit integer $z_l$.
Parties then compute the maximum value of $z_l$ over all eligible parties $l$,
by calling Subroutine~C.
Finally,  parties with the maximum value remain eligible,
while the other parties become ineligible.
More precisely, 
each party $l$ 
having $d_l$ adjacent parties
performs Algorithm~I
which is described in Figure~\ref{Figure: Quantum leader election algorithm I}
with parameters ``$\eligible$,'' $n$, and $d_l$.
The party who obtains the output ``$\eligible$'' is the unique leader.
Precise descriptions of Subroutines A, B, and C
are to be found in subsections~\ref{Subsection: Subroutine A}, 
\ref{Subsection: Subroutine B},
and~\ref{Subsection: Subroutine C}, respectively.

\begin{figure}[t]
\begin{center}
\hrulefill\\
\textbf{Algorithm~I}
\vspace{-2mm}
\begin{description}
\setlength\itemsep{-3pt}
  \item[Input:]
    a classical variable $\status\in \{ \mbox{``$\eligible$''}, \mbox{``$\ineligible$''}\}$, integers ${n,d}$
  \item[Output:]
    a classical variable $\status\in \{ \mbox{``$\eligible$''}, \mbox{``$\ineligible$''}\}$
\end{description}
\vspace{-6mm}
\begin{enumerate}
\setlength\itemsep{-3pt}
  \item
    Prepare one-qubit quantum registers $\bfR _0 $, $\bfR _1$, and $\bfS$.
  \item
    For ${k:=n}$ down to $2$, do the following:
\vspace{-6pt}
\begin{enumerate}
\setlength\itemsep{0pt}
      \item
        If ${\status = \mbox{``$\eligible$,''}}$
        prepare the states
        ${(\ket{0}+\ket{1})/\sqrt{2}}$ and $\ket{\mbox{``$\consistent$''}}$
        in $\bfR _0 $ and $\bfS$,
        otherwise prepare the states $\ket{0}$ and
        $\ket{\mbox{``$\consistent$''}}$ in $\bfR _0 $ and $\bfS$.
      \item
        Perform Subroutine~A with $\bfR _0 $, $\bfS$, $\status$, $n$, and $d$.
      \item
        Measure the qubit in $\bfS$
        in the 
        $\{ \ket{\mbox{``$\consistent$''}},
        \ket{\mbox{``$\inconsistent$''}} \}$ basis.\\
        If this results in $\ket{\mbox{``$\consistent$''}}$
        and ${\status = \mbox{``$\eligible$,''}}$
        prepare the state $\ket{0}$ in $\bfR _1$
        and perform Subroutine~B with $\bfR _0 $, $\bfR _1$, and
        $k$;\\
        otherwise if this results in
        $\ket{\mbox{``$\inconsistent$''}}$, just prepare the state
        $\ket{0}$ in $\bfR _1$.
      \item
        If ${\status = \mbox{``$\eligible$,''}}$
        measure the qubits in $\bfR _0 $ and $\bfR _1$
        in the $\{ \ket{0}, \ket{1} \}$ basis
        to obtain the nonnegative integer $z$ expressed by the two bits;
        otherwise let ${z := -1}$.\\
      \item
        Perform Subroutine~C with $z$, $n$, and $d$
        to know the maximum value $z_{\max}$ of $z$ over all parties.\\
        If ${z \neq z_{\max}}$, let ${\status := \mbox{``$\ineligible$.''}}$
\end{enumerate}
\vspace{-6pt}
    \item
      Output $\status$.
  \end{enumerate}
\vspace{-1\baselineskip}
  \hrulefill
\end{center}
\vspace{-1\baselineskip}
\caption{Quantum leader election algorithm I.} 
\label{Figure: Quantum leader election algorithm I}
\end{figure}

\subsection{Subroutine~A}
\label{Subsection: Subroutine A}

Subroutine~A is essentially for the purpose of 
checking the consistency over $S_i$ of each string that is superposed
in the quantum state shared by the parties.
We use commute operator ``$\circ$''
over set ${\calS = \{ 0, 1, \ast, \times \}}$
whose operations are summarized in Table~\ref{Table: operator circ}.
Intuitively,
``$0$'' and ``$1$'' represent the possible values all eligible
parties will
have
when the string finally turns out to be consistent;
``$\ast$'' represents ``don't care,''
which means that the corresponding party 
has no information on the values possessed by eligible parties;
and ``$\times$'' represents ``inconsistent,''
which means that the corresponding party already knows
that the string is inconsistent.
Although Subroutine~A is essentially 
a trivial modification of the algorithm
in~\cite{kranakis-krizanc-vandenberg94}
to handle the quantum case,
we give a precise description of Subroutine~A
in Figure~\ref{Figure: Subroutine A} for completeness.

As one can see from the description of Algorithm~I,
the content of $\bfS$ is ``$\consistent$''
whenever Subroutine~A is called.
Therefore, after every party finishes Subroutine~A,
the state shared by parties in their $\bfR_0$s
is decomposed into a consistent state
for which each party has the content ``$\consistent$'' in his $\bfS$, 
and an inconsistent state
for which each party has the content ``$\inconsistent$'' in his $\bfS$.
Steps 4 and 5 are performed to
disentangle
work quantum registers $\bfX_i^{(t)}$s from the rest.

The next two lemmas prove the correctness and complexity of
Subroutine~A.
\begin{lemma}
\label{lm:subroutineA-correctness}
Suppose that $n$ parties share $n$-qubit state
$\ket{\psi}=\sum_{i=0}^{2^n-1}\alpha _i\ket{i}$ in $n$ one-qubit registers $\bfR_0$s,
where $\alpha_i$s are
any complex numbers such that $\sum_{i=0}^{2^n-1}\abs{\alpha _i}^2=1$.
Suppose, moreover, that each party runs 
Subroutine~A with the following objects as input:
(1) ${\bfR_0}$, 
and another one-qubit quantum register ${\bfS}$, 
which is initialized to $\ket{\mbox{``$\consistent$''}}$,
(2) a classical variable $\status\in \{\eligible,\ineligible \}$,
(3) $n$ and the number $d$ of neighbors of the party.
Let $S$ be the set of indices of parties whose $\status$ is ``$\eligible$.''
Subroutine~A then outputs 
${\bfR_0}$ and ${\bfS}$ such that the qubits in all parties' $\bfR_0$s
and $\bfS$s are in the state
$\sum _{i=0}^{2^n-1}(\alpha _i\ket{i}\otimes\ket{s_i}^{\otimes n})$, where
 $s_i$ is
``$\consistent$'' if $\ket{i}$ is consistent over $S$, 
 and ``$\inconsistent$'' otherwise.
\end{lemma}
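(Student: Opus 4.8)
The plan is to exploit linearity together with the reversibility of every step of Subroutine~A, so that it suffices to track a single computational basis state $\ket{i}=\ket{i_1}\cdots\ket{i_n}$ of the $\bfR_0$s (with $i_l$ held by party $l$) and then to sum over $i$ with the amplitudes $\alpha_i$. Concretely, I would first observe that all the operations Subroutine~A performs — the local initialization of a symbol in $\calS$, the ``copy'' operations that deposit symbols into the work registers $\bfX_i^{(t)}$, the coherent application of the commute operator $\circ$, the controlled setting of $\bfS$, and the uncomputation in Steps~4 and~5 — are unitary and act as permutations of the computational basis. Hence, once the claimed map is established on each basis state $\ket{i}\otimes\ket{\consistent}^{\otimes n}$, the general statement follows by linearity.

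For a fixed basis state I reduce to a purely classical claim about symbol propagation. Assign to each party $l$ an initial symbol $a_l\in\calS$ equal to $i_l$ if $l\in S$ and to $\ast$ otherwise. Reading off Table~\ref{Table: operator circ}, I would verify that $\circ$ is commutative, associative and idempotent, that $\ast$ is its identity, that $\times$ is absorbing, and that $0\circ 1=\times$; these are exactly the algebraic facts the argument needs. Writing $b_l^{(t)}$ for the symbol accumulated at party $l$ after $t$ rounds (each party folding its own symbol with all incoming ones), a straightforward induction on $t$ gives $b_l^{(t)}=\bigcirc_{m:\,\mathrm{dist}(l,m)\le t}a_m$, the $\circ$-fold over the ball of radius $t$ around $l$; idempotency and commutativity are precisely what make this well defined despite a party re-receiving a contribution along several paths and despite the anonymity of the network, since a party folds all incoming symbols without needing to know their provenance. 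As $G$ is connected on $n$ nodes, its diameter is at most $n-1$, so after the prescribed number of rounds (which each party fixes from $n$ to exceed the diameter) every $b_l$ equals the global fold $\bigcirc_{m=1}^{n}a_m$. Evaluating this fold shows it is $\times$ exactly when two eligible parties hold different bits (the inconsistent case) and is the common value in $\{0,1\}$ (or $\ast$ when $S=\varnothing$) in the consistent case; thus every party reaches the same verdict $s_i$, and the controlled step turns each $\bfS$ from $\consistent$ into $s_i$ correctly.

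It then remains to check that the $\bfR_0$s are left untouched and the work registers are returned clean, which is the role of Steps~4 and~5. Here I would invoke the standard compute–copy–uncompute pattern: the symbols deposited in the $\bfX_i^{(t)}$ during propagation are a deterministic, reversible function of the basis state $\ket{i}$ (with $S$ fixed), the verdict $s_i$ is copied into $\bfS$ while that output is still present, and running the propagation backwards in Steps~4 and~5 returns every $\bfX_i^{(t)}$ to $\ket{0}$ and disentangles it from the rest. Because $s_i$ is itself a deterministic function of $i$, the $\bfS$s remain correctly correlated with $\ket{i}$ after uncomputation, yielding the single-basis-state map $\ket{i}\otimes\ket{\consistent}^{\otimes n}\mapsto\ket{i}\otimes\ket{s_i}^{\otimes n}$. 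Summing over $i$ with amplitudes $\alpha_i$ gives the claim.

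The main obstacle I anticipate is not the algebra of $\circ$ but the bookkeeping needed to certify that Steps~4 and~5 genuinely disentangle every $\bfX_i^{(t)}$: one must confirm that the coherent propagation is implemented as an exactly invertible sequence of gates, so that no residual entanglement between the work registers and the $(\bfR_0,\bfS)$ pair survives, and that the verdict is committed to $\bfS$ before any uncomputation begins. A secondary point to treat carefully is the degenerate case $S=\varnothing$, where the global fold is $\ast$: one must check that the map from the final symbol to $\bfS$ still reports $\consistent$, in agreement with the vacuous truth of the consistency condition over an empty set.
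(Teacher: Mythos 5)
Your proposal is correct and follows essentially the same route as the paper's proof: reduce to a single computational basis state by linearity (justified because the $\bfR_0$ contents are never altered, so no interference occurs), prove the purely classical claim that the $\circ$-fold propagated for $n-1$ rounds equals the global fold over all parties' initial symbols (using commutativity, associativity, idempotency of $\circ$ and the diameter bound), and clean the work registers by the compute--copy--uncompute pattern. Your explicit ball-of-radius-$t$ induction invariant and your remark on the $S=\varnothing$ case are slightly more careful than the paper's phrasing, but the argument is the same.
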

\begin{proof}
  Subroutine~A just superposes an application of a
  reversible classical algorithm to each basis state.
  Furthermore, no interference occurs since the contents of ${\bf R_0}$s are never
  changed during the execution of the subroutine.
Thus, it is sufficient to prove the correctness of Subroutine~A
when the content of $\bfR_0$ is a classical bit.
It is stressed that all ancilla qubits used as work space 
can be disentangled by inverting 
every communication and computation.

Suppose that we are given one-bit \emph{classical} registers ${\bf R_0}$ and
${\bf S}$, \emph{classical} variable $\status$, and integers $n$ and $d$.
For any party $l$ and a positive integer $t$, 
the content of $\bfX_{0,l}^{(t+1)}$ is set to $x_{0,l}^{(t)} \circ x_{1,l}^{(t)}
\circ \cdots \circ x_{d,l}^{(t)}$ in step 2.3, where
$\bfX_{i,l}^{(t)}$ is $\bfX_{i}^{(t)}$ of party $l$, and
$x_{i,l}^{(t)}$ is the content of $\bfX_{i,l}^{(t)}$.
For any $l$, by expanding this recurrence relation, 
the content of $\bfX_{0,l}^{(n)}$ can be expressed in the form of
$x_{0,l_1}^{(1)}\circ \cdots \circ x_{0,l_m}^{(1)}$ for some $m\leq (n-1)^{(n-1)}$.
Since the
diameter of the underlying graph is at most $n-1$,
there is at least one $x_{0,l'}^{(1)}$
in $x_{0,l_1}^{(1)},\ldots, x_{0,l_m}^{(1)}$ for each $l'$.
Thus $x_{0,l_1}^{(1)}\circ\cdots\circ x_{0,l_m}^{(1)}$ is equal to 
$x_{0,1}^{(1)}\circ x_{0,2}^{(1)}\circ\cdots\circ x_{0,n}^{(1)}$, since 
$\circ$ is commutative and associative, and $x\circ x=x$ for any $x\in\{ 0, 1, \ast, \times \}$.

Therefore, we can derive the following facts:
(1) if and only if there are both 0 and 1 in the contents of $\bfR_0$s
of all parties, Subroutine~A outputs $\bfS=$`$\times$',
which will be taken as ``$\inconsistent$'';
(2) if and only if there are either 0's or 1's but not both in
the contents of $\bfR_0$s (which possibly include `$\ast$'), 
Subroutine~A outputs $\bfS=$`$0$' or `$1$', respectively,
which are both taken into ``$\consistent$.''
\end{proof}
\begin{lemma}
\label{lm:subroutineA-complexity}
Let $\abs{E}$ and $D$ be the number of edges and the maximum degree of the
underlying graph, respectively.
Subroutine~A takes $O(n)$ rounds  and $O(Dn)$ time.
The total communication complexity over all parties is 
$O(|E|n)$.
\end{lemma}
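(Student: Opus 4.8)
The plan is to bound the three resource measures separately, relying on the explicit round structure of Subroutine~A that is already visible from the recurrence analyzed in the proof of Lemma~\ref{lm:subroutineA-correctness}.

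First I would handle the round complexity. The core of Subroutine~A is a propagation phase in which, at each time step $t$, every party $l$ with $d_l$ neighbors updates register $\bfX_{0,l}^{(t+1)}$ to $x_{0,l}^{(t)}\circ x_{1,l}^{(t)}\circ\cdots\circ x_{d,l}^{(t)}$ (step~2.3), which costs one round of receiving an incoming message on each port and sending out the combined value. As observed in the correctness proof, the content of $\bfX_{0,l}^{(n)}$ already equals the fully combined value $x_{0,1}^{(1)}\circ\cdots\circ x_{0,n}^{(1)}$, because the diameter of the connected graph $G$ is at most $n-1$ and $\circ$ is commutative, associative and idempotent. Hence the propagation phase needs only $O(n)$ rounds. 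Steps~4 and~5 then uncompute the work registers $\bfX_i^{(t)}$ by inverting every communication and computation; this mirrors the propagation phase and therefore also costs $O(n)$ rounds. The total round count is thus $O(n)$.

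Next I would analyze the time complexity. In each round, a party $l$ performs $O(d_l)$ applications of the $\circ$ operator (one fan-in of size $d_l+1$) together with $O(d_l)$ steps of message handling on its ports. Over the $O(n)$ rounds of propagation this is $O(d_l n)$ local steps, and the uncomputation at most doubles this. Taking the maximum over all parties and bounding $d_l$ by the maximum degree $D$ yields time $O(Dn)$. For communication, in every round each party sends a single message along each incident edge, and each such message encodes an element of $\calS=\{0,1,\ast,\times\}$, i.e., a constant number of qubits; so each edge carries $O(1)$ qubits per round in each direction. Summing over the $O(n)$ rounds of propagation together with the $O(n)$ rounds of uncomputation, and over all $\abs{E}$ edges, gives total communication $O(\abs{E}n)$.

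The main obstacle is pinning down the precise number of propagation rounds: once one observes that the recurrence of Lemma~\ref{lm:subroutineA-correctness} stabilizes after $n$ iterations by the diameter bound, the time and communication estimates follow by straightforward per-round accounting. The only point requiring care is that the uncomputation in steps~4 and~5 at most doubles every resource and hence does not change the asymptotics.
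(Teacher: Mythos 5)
Your proposal is correct and follows essentially the same per-round accounting as the paper's own proof: the loop of step~2 runs $n-1$ iterations of one round each with $O(D)$ local work and $O(1)$ qubits per incident edge, and steps~4--5 merely invert steps~2 and~1 so at most double every resource. The only cosmetic difference is that you justify the $O(n)$ round count via the diameter/stabilization argument, whereas the bound already follows directly from the fixed loop bound $t=1,\ldots,n-1$ in the subroutine's description.
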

\begin{proof}
 Since step 3 takes constant time and steps 4 and 5 are just the
 inversions of steps 2 and 1, respectively, it is sufficient to 
consider steps 1 and 2.
  Step 1 takes at most $O(Dn)$ time.
For each $t$, steps 2.1 and 2.1 take $O(D)$ time, and 
step 2.3 can compute ${x_0^{(t)} \circ x_1^{(t)} \circ \cdots \circ
 x_d^{(t)}}$ in $O(D)$ time by performing each $\circ$ one-by-one from 
left to right.
Hence step 2 takes $O(Dn)$ time.
It follows that Subroutine~A takes ${O(Dn)}$ time in total.

As for the number of rounds and communication complexity, it is sufficient to
consider just step 2, since only steps 2 and 4 involve communication and step 4
is the inversion of step 2. It is easy to see that the number of
rounds is $O(n)$.  As for communication complexity, every party sends two
qubits via each link for each iteration in step 2. Hence every party
needs to send $O(nD)$ qubits in step 2. By summing up the
number of qubits sent over all parties, the communication complexity
is $O(|E|n)$.
\end{proof}

\begin{table}[t]
\caption
[The definition of commute operator ``$\circ$''
used in Subroutine~A.
]
{The definition of commute operator ``$\circ$.''}
\label{Table: operator circ}
\begin{center}
\begin{tabular}{cc|c||cc|c||cc|c||cc|c}
\hline
\hspace{0.0cm}$x$\hspace{0.0cm}
& \hspace{0.0cm}$y$\hspace{0.0cm}
& \hspace{0.0cm}${x \circ y}$\hspace{0.0cm}
& \hspace{0.0cm}$x$\hspace{0.0cm}
& \hspace{0.0cm}$y$\hspace{0.0cm}
& \hspace{0.0cm}${x \circ y}$\hspace{0.0cm}
& \hspace{0.0cm}$x$\hspace{0.0cm}
& \hspace{0.0cm}$y$\hspace{0.0cm}
& \hspace{0.0cm}${x \circ y}$\hspace{0.0cm}
& \hspace{0.0cm}$x$\hspace{0.0cm}
& \hspace{0.0cm}$y$\hspace{0.0cm}
& \hspace{0.0cm}${x \circ y}$\hspace*{0.0cm}\\
\hline
     $0$ &      $0$ &      $0$ &
     $1$ &      $0$ & $\times$ &
  $\ast$ &      $0$ &      $0$ &
$\times$ &      $0$ & $\times$\\
     $0$ &      $1$ & $\times$ &
     $1$ &      $1$ &      $1$ &
  $\ast$ &      $1$ &      $1$ &
$\times$ &      $1$ & $\times$\\
     $0$ &   $\ast$ &      $0$ &
     $1$ &   $\ast$ &      $1$ &
  $\ast$ &   $\ast$ &   $\ast$ &
$\times$ &   $\ast$ & $\times$\\
     $0$ & $\times$ & $\times$ &
     $1$ & $\times$ & $\times$ &
  $\ast$ & $\times$ & $\times$ &
$\times$ & $\times$ & $\times$\\
\hline
\end{tabular}
\end{center}
\end{table}

\begin{figure}[t]
\begin{center}
\hrulefill\\
  \textbf{Subroutine~A}
\vspace{-2mm}
\begin{description}
\setlength\itemsep{-3pt}
\item[Input:]
one-qubit quantum registers ${\bfR_0}$, ${\bfS}$,
a classical variable $\status\in \{ \mbox{``$\eligible$''}, \mbox{``$\ineligible$''}\}$,
integers ${n,d}$
\item[Output:]
one-qubit quantum registers ${\bfR_0}$ and ${\bfS}$
\end{description}
\vspace{-6mm}
\begin{enumerate}
\setlength\itemsep{-3pt}
  \item
    Prepare two-qubit quantum registers ${\bfX_0^{(1)}, \ldots, \bfX_d^{(1)},
      \ldots, \bfX_0^{(n-1)}, \ldots, \bfX_d^{(n-1)}, \bfX_0^{(n)}}$.\\
    If ${\status = \mbox{``$\eligible$,''}}$
    copy the content of $\bfR_0$ to $\bfX_0^{(1)}$;
    otherwise
    set the content of $\bfX_0^{(1)}$ to ``$\ast$.''
  \item
    For ${t:=1}$ to ${n-1}$, do the following:
\vspace{-6pt}
    \begin{itemize}
\setlength\itemsep{-0pt}
      \item[2.1]
        Copy the content of $\bfX_0^{(t)}$ to each of ${\bfX_1^{(t)}, \ldots, \bfX_d^{(t)}}$.
      \item[2.2]
        Exchange the qubit in $\bfX_i^{(t)}$
        with the party connected via port $i$ for ${1 \leq i \leq d}$
        (i.e., the original qubit in $\bfX_i^{(t)}$ is sent
         via port $i$,
         and the qubit received 
         via that port
         is newly set in $\bfX_i^{(t)}$).
      \item[2.3]
        Set the content of $\bfX_0^{(t+1)}$ to
        ${x_0^{(t)} \circ x_1^{(t)} \circ \cdots \circ x_d^{(t)}}$,
        where $x_i^{(t)}$ denotes the content of $\bfX_i^{(t)}$ for ${0 \leq i \leq d}$.
    \end{itemize}
\vspace{-6pt}
  \item
    If the content of $\bfX_0^{(n)}$ is ``$\times$,''
    turn the content of $\bfS$ over
    (i.e., if the content of $\bfS$ is ``$\consistent$,''
    it is flipped to ``$\inconsistent$,'' and vice versa).
  \item
    Invert every computation and communication in Step 2.\\
  \item
    Invert every computation in Step 1.
  \item
    Output quantum registers $\bfR_0$ and $\bfS$.
\end{enumerate}
\vspace{-1\baselineskip}
\hrulefill
\end{center}
\vspace{-1\baselineskip}
\caption
{Subroutine~A.}
\label{Figure: Subroutine A}
\end{figure}

\subsection{Subroutine~B}
\label{Subsection: Subroutine B}

Suppose that, among $n$ parties, $k$ parties are still eligible
and share the $k$-cat state ${(\ket{0}^{\otimes k} + \ket{1}^{\otimes k} )/\sqrt{2}}$ in
their $\bfR_0$'s.
Subroutine~B has the goal of transforming the $k$-cat state
to an inconsistent state with certainty by using $k$
fresh ancilla qubits that are initialized to $\ket{0}$, when $k$ is given.
Figure~\ref{Figure: Subroutine B}
gives a precise description of Subroutine~B,
where
$\{ U_k \}$ and $\{ V_k \}$ are two families of unitary operators,
\[
U_k
=
\frac{1}{\sqrt{2}}
\begin{pmatrix}
                   1 & e^{-i \frac{\pi}{k}}\\
-e^{i \frac{\pi}{k}} &                    1
\end{pmatrix},
\]
\[
V_k
=
\frac{1}{\sqrt{R_k+1}}
\begin{pmatrix}
1/\sqrt{2}
&
0
&
\sqrt{R_k}
&
e^{i \frac{\pi}{k}}/\sqrt{2}
\\
1/\sqrt{2}
&
0
&
-\sqrt{R_k}e^{-i \frac{\pi}{k}}
&
e^{-i \frac{\pi}{k}}/\sqrt{2}
\\
\sqrt{R_k}
&
0
&
\frac{e^{-i \frac{\pi}{2k}}I_k}{i\sqrt{2}R_{2k}}
&
-\sqrt{R_k}
\\
0
&
\sqrt{R_k+1}
&
0
&
0
\end{pmatrix},
\iffalse{
V_k
=
\sqrt{\frac{R_k}{R_k+1}}
\begin{pmatrix}
\frac{1}{\sqrt{2R_k}}
&
0
&
\frac{e^{i \frac{\pi}{k}}}{\sqrt{2R_k}}
&
1
\\
\frac{1}{\sqrt{2R_k}}
&
0
&
\frac{e^{-i \frac{\pi}{k}}}{\sqrt{2R_k}}
&
- \frac{1+e^{-i \frac{\pi}{k}}}{1+e^{i \frac{\pi}{k}}}
\\
1
&
0
&
-1
&
\frac{e^{-i \frac{\pi}{k}} -e^{i \frac{\pi}{k}}}{1+e^{i \frac{\pi}{k}}}
\\
0
&
\sqrt{\frac{R_k+1}{R_k}}
&
0
&
0
\end{pmatrix},
}\fi
\iffalse{
V_k
=
\sqrt{\frac{R_k}{R_k+1}}
\begin{pmatrix}
\frac{1}{\sqrt{2R_k}}
&
0
&
1
&
\frac{e^{i \frac{\pi}{k}}}{\sqrt{2R_k}}
\\
\frac{1}{\sqrt{2R_k}}
&
0
&
- \frac{1+e^{-i \frac{\pi}{k}}}{1+e^{i \frac{\pi}{k}}}
&
\frac{e^{-i \frac{\pi}{k}}}{\sqrt{2R_k}}
\\
1
&
0
&
\frac{e^{-i \frac{\pi}{k}} -e^{i \frac{\pi}{k}}}{1+e^{i \frac{\pi}{k}}}
&
-1
\\
0
&
\sqrt{\frac{R_k+1}{R_k}}
&
0
&
0
\end{pmatrix},
}\fi
\]
where $R_k$ and $I_k$  are the real and imaginary parts of $e^{i
  \frac{\pi}{k}}$, respectively.

\begin{figure}[t]
\begin{center}
\hrulefill\\
  \textbf{Subroutine~B}
\vspace{-2mm}
\begin{description}
\item[Input:]
\setlength\itemsep{-3pt}
one-qubit quantum registers ${\bfR _0, \bfR _1}$, an integer $k$
\item[Output:]
one-qubit quantum registers ${\bfR _0, \bfR _1}$
\end{description}
\vspace{-6mm}
\begin{enumerate}
\setlength\itemsep{-3pt}
  \item
    If $k$ is even,
    apply $U_k$ to the qubit in $\bfR _0$;
    otherwise copy the content in $\bfR _0$ to that in 
$\bfR _1$, and then 
apply $V_k$ to the qubits in $\bfR _0$ and $\bfR _1$.
  \item Output quantum registers $\bfR _0$ and $\bfR _1$.
\end{enumerate}
\vspace{-1\baselineskip}
\hrulefill
\end{center}
\vspace{-1\baselineskip}
\caption{Subroutine~B.}  
\label{Figure: Subroutine B}
\end{figure}
The point is that
the amplitudes of the states
$\ket{00}^{\otimes k}$,
$\ket{01}^{\otimes k}$,
$\ket{10}^{\otimes k}$,
and $\ket{11}^{\otimes k}$
shared by $k$ eligible parties
in their registers $\bfR _0$ and $\bfR _1$  
are simultaneously zero
after every eligible party applies Subroutine~B with parameter $k$,
if the qubits
in $\bfR _0$s of all eligible parties form the
$k$-cat state.
The next two lemmas describe this rigorously.

Instead of $U_k$, we give a proof for a more general case.

\begin{lemma}\label{lm:unitary operator U(psi,t)}
Suppose that $k$ parties each have one
of $k$-cat-state qubits for any even integer ${k\geq 2}$. 
After
every party 
applies
\[
U_k(\psi,t) = 
  \frac{1}{\sqrt{2}}
\left( 
  \begin{array}{cc}
e^{i\psi} & e^{i(\psi-\frac{2t+1}{k}\pi)}\\
-e^{-i(\psi-\frac{2t+1}{k}\pi)} & e^{-i\psi}
  \end{array}
\right)
\]
to his qubit,
the resulting $k$-qubit state is inconsistent over the set of the
indices of the $k$ parties,
where $\psi$ and $t$ are any fixed real and integer values, respectively.
\end{lemma}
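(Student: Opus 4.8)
The plan is to reduce the claim to a short amplitude computation. With $S$ taken to be the full set of all $k$ parties, a basis string $x_1\cdots x_k$ is \emph{consistent} over $S$ exactly when all bits agree, i.e. only $\ket{0}^{\otimes k}$ and $\ket{1}^{\otimes k}$ are consistent. Hence, by the definition of inconsistency of a state, it suffices to prove that after applying $U_k(\psi,t)^{\otimes k}$ to the $k$-cat state $(\ket{0}^{\otimes k}+\ket{1}^{\otimes k})/\sqrt2$, the amplitudes of $\ket{0}^{\otimes k}$ and of $\ket{1}^{\otimes k}$ are \emph{both} zero.

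Write $U_k(\psi,t)=\frac{1}{\sqrt2}\left(\begin{smallmatrix} a & b \\ c & d\end{smallmatrix}\right)$ with $a=e^{i\psi}$, $b=e^{i(\psi-\frac{2t+1}{k}\pi)}$, $c=-e^{-i(\psi-\frac{2t+1}{k}\pi)}$, $d=e^{-i\psi}$, so that $U_k(\psi,t)\ket{0}=\frac{1}{\sqrt2}(a\ket{0}+c\ket{1})$ and $U_k(\psi,t)\ket{1}=\frac{1}{\sqrt2}(b\ket{0}+d\ket{1})$. First I would push $U_k(\psi,t)^{\otimes k}$ through the cat state term by term: the image of $\ket{0}^{\otimes k}$ is $2^{-k/2}(a\ket{0}+c\ket{1})^{\otimes k}$ and the image of $\ket{1}^{\otimes k}$ is $2^{-k/2}(b\ket{0}+d\ket{1})^{\otimes k}$. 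Reading off the coefficients of the two all-equal strings, the resulting amplitude of $\ket{0}^{\otimes k}$ is proportional to $a^k+b^k$ and that of $\ket{1}^{\otimes k}$ is proportional to $c^k+d^k$, the tensor-power structure being what makes these amplitudes factor cleanly.

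It then remains to verify $a^k+b^k=0$ and $c^k+d^k=0$. The load-bearing arithmetic fact is that the off-diagonal phase was chosen so that its $k$-th power collapses: since $2t+1$ is odd, $e^{\mp i\frac{2t+1}{k}\pi}$ raised to the $k$-th power equals $e^{\mp i(2t+1)\pi}=-1$. Concretely, $b^k=e^{ik\psi}e^{-i(2t+1)\pi}=-e^{ik\psi}=-a^k$, so $a^k+b^k=0$; and $c^k=(-1)^k e^{-ik\psi}e^{i(2t+1)\pi}=-e^{-ik\psi}=-d^k$, where I use that $k$ is even so $(-1)^k=1$, giving $c^k+d^k=0$.

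The computation is routine and I do not expect a genuine obstacle; the two points worth stating carefully are (i) the observation that inconsistency over the full index set reduces precisely to the vanishing of the two all-equal amplitudes, which legitimizes the tensor-power factorization, and (ii) the fact that the parity hypothesis ``$k$ even'' is exactly what is needed to annihilate the $\ket{1}^{\otimes k}$ amplitude, whereas the $\ket{0}^{\otimes k}$ amplitude vanishes for every $k$ purely because $2t+1$ is odd.
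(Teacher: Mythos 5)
Your proof is correct and follows essentially the same route as the paper's: reduce inconsistency over the full index set to the vanishing of the two all-equal amplitudes, then observe that $a^k+b^k=0$ because $2t+1$ is odd and $c^k+d^k=0$ because additionally $k$ is even. The only item the paper includes that you omit is the one-line remark that $U_k(\psi,t)$ is unitary (so that "applying" it is a legitimate quantum operation), which is worth adding but is routine.
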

\begin{proof}
$U_k(\psi,t)$ is unitary since $U_k(\psi,t)U_k(\psi,t)^{\dag} =U_k(\psi,t)^{\dag} U_k(\psi,t)=I$, where 
$U_k(\psi,t)^{\dag}$ is the adjoint of $U_k(\psi,t)$, and
$I$ is
the two-dimensional identity operator.
It is sufficient to prove
that the amplitudes of states
$\ket{0}^{\otimes k}$ and $\ket{1}^{\otimes k}$
are both zero after every party applies $U_k(\psi,t)$ to his $k$-cat-state qubit.

After every party applies $U_k(\psi,t)$,
the amplitude of state 
$\ket{0}^{\otimes k}$
is
$$\frac{1}{\sqrt{2}}
\left(
\left(\frac{e^{i\psi}}{\sqrt{2}}\right)^k 
+
\left(\frac{e^{i(\psi-\frac{2t+1}{k}\pi)}}{\sqrt{2}}\right)^k 
\right)
=0.
$$
The amplitude of state 
$\ket{1}^{\otimes k}$
is 
$$
\frac{1}{\sqrt{2}}
\left(
\left(\frac{-e^{-i(\psi - \frac{2t+1}{k}\pi)}}{\sqrt{2}}\right)^k 
+
\left(\frac{e^{-i\psi}}{\sqrt{2}}\right)^k 
\right)
=0,
$$
since $k$ is even.
\end{proof}

\begin{corollary}
\label{lm:unitary operator U}
Suppose that
$k$ parties each have one
of the $k$ qubits that are in a $k$-cat-state
for any even integer ${k \geq 2}$.
After every party applies $U_k\otimes I$
to the qubit and a fresh ancilla qubit, 
the resulting $2k$-qubit state is inconsistent over $S$, where $S$ is
the set of the indices of the $k$ parties.
\end{corollary}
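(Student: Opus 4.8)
The plan is to derive the corollary as an immediate specialization of Lemma~\ref{lm:unitary operator U(psi,t)}, together with a short bookkeeping argument for the ancilla. First I would observe that $U_k$ is exactly the member $U_k(0,0)$ of the family $U_k(\psi,t)$: setting $\psi=0$ and $t=0$ gives $\frac{2t+1}{k}\pi=\frac{\pi}{k}$, so
\[
U_k(0,0)=\frac{1}{\sqrt{2}}\begin{pmatrix} 1 & e^{-i\frac{\pi}{k}} \\ -e^{i\frac{\pi}{k}} & 1 \end{pmatrix}=U_k .
\]
Hence every conclusion of the lemma transfers verbatim to $U_k$; in particular, once the $k$ parties apply $U_k$ to their cat-state qubits in $\bfR_0$, the amplitudes of $\ket{0}^{\otimes k}$ and $\ket{1}^{\otimes k}$ over those $k$ qubits are both zero.

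Next I would handle the fresh ancillas. Each party applies $U_k\otimes I$ to the pair $(\bfR_0,\bfR_1)$, where $\bfR_1$ holds $\ket{0}$, and the identity factor never touches $\bfR_1$; thus all $k$ ancillas remain $\ket{0}$ throughout, and the final $2k$-qubit state is the tensor product of the $\bfR_0$-register state after $U_k$ with $\ket{0}^{\otimes k}$ on the ancillas. Because $S$ is the set of all $k$ parties, the only configurations consistent over $S$ are the four states $\ket{00}^{\otimes k}$, $\ket{01}^{\otimes k}$, $\ket{10}^{\otimes k}$, and $\ket{11}^{\otimes k}$, so it suffices to check that each has zero amplitude. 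The two configurations $\ket{01}^{\otimes k}$ and $\ket{11}^{\otimes k}$ demand that every ancilla be $\ket{1}$ and therefore vanish automatically, while $\ket{00}^{\otimes k}$ and $\ket{10}^{\otimes k}$ inherit zero amplitude from the vanishing of the $\ket{0}^{\otimes k}$ and $\ket{1}^{\otimes k}$ components established via the lemma. This exhausts the consistent configurations and establishes inconsistency over $S$.

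I do not expect a genuine obstacle: the substantive content is already carried by Lemma~\ref{lm:unitary operator U(psi,t)}, and the remaining work is only to confirm the parameter identification $\psi=t=0$ and to track that the identity factor pins the ancilla register to $\ket{0}^{\otimes k}$, which disposes of the two ancilla-$\ket{1}$ configurations for free. The one point meriting care is making explicit that ``inconsistent over $S$'' for the full $2k$-qubit system reduces, when $S$ is all parties, to simultaneously killing exactly these four uniform-label configurations.
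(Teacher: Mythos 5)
Your proposal is correct and follows the same route as the paper: the paper's proof is just the one-line observation that setting $\psi=t=0$ in Lemma~\ref{lm:unitary operator U(psi,t)} yields $U_k$. Your additional bookkeeping for the ancilla register (that the identity factor pins it to $\ket{0}^{\otimes k}$, disposing of the $\ket{01}^{\otimes k}$ and $\ket{11}^{\otimes k}$ configurations automatically) is a correct elaboration of a step the paper leaves implicit.
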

\begin{proof}
By setting both $\psi$ and $t$ to 0 in Lemma \ref{lm:unitary operator
  U(psi,t)},
the proof is completed.
\end{proof}

The case for $V_k$ can be proved similarly.
\begin{lemma}
Suppose that
$k$ parties each have two
of the $2k$ qubits that are in a $2k$-cat-state
for any odd integer ${k \geq 3}$. 
After every party applies $V_k$
to his two qubits,
the resulting $2k$-qubit state is inconsistent over $S$, where $S$ is the set of
indices of the $k$ parties.
\end{lemma}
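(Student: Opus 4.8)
The plan is to follow the template set by Lemma~\ref{lm:unitary operator U(psi,t)} and Corollary~\ref{lm:unitary operator U}: reduce the claim to the assertion that, after every party applies $V_k$, the four \emph{consistent} two-qubit basis states $\ket{00}^{\otimes k}$, $\ket{01}^{\otimes k}$, $\ket{10}^{\otimes k}$, and $\ket{11}^{\otimes k}$ all acquire amplitude $0$; since these are exactly the basis states that are consistent over $S$, their vanishing is precisely the definition of the output state being $\inconsistent$ over $S$. Identifying the pair $(\bfR_0,\bfR_1)$ held by each party with a single two-qubit register, the $2k$-cat state is $\frac{1}{\sqrt2}\bigl(\ket{00}^{\otimes k}+\ket{11}^{\otimes k}\bigr)$, because in the all-zero branch every party holds $\ket{00}$ and in the all-one branch every party holds $\ket{11}$. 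Hence only the images $V_k\ket{00}$ and $V_k\ket{11}$ are relevant.

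First I would read off the columns of $V_k$ indexed by $\ket{00}$ and $\ket{11}$ (columns $0$ and $3$ in the ordering $\ket{00},\ket{01},\ket{10},\ket{11}$), obtaining $V_k\ket{00}=a_0\ket{00}+b_0\ket{01}+c_0\ket{10}$ and $V_k\ket{11}=a_1\ket{00}+b_1\ket{01}+c_1\ket{10}$, where $a_0=b_0=1/\sqrt{2(R_k+1)}$, $c_0=\sqrt{R_k/(R_k+1)}$, and $a_1=e^{i\pi/k}a_0$, $b_1=e^{-i\pi/k}b_0$, $c_1=-c_0$. The first structural point is that neither image has a $\ket{11}$ component, so the amplitude of $\ket{11}^{\otimes k}$ is automatically $0$.

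Next, expanding the post-$V_k$ state as $\frac{1}{\sqrt2}\bigl((V_k\ket{00})^{\otimes k}+(V_k\ket{11})^{\otimes k}\bigr)$, the amplitude of each surviving consistent state $\ket{v}^{\otimes k}$ is $\frac{1}{\sqrt2}$ times the sum of the $k$-th powers of the two corresponding coordinates. I would then invoke $a_1^k=e^{i\pi}a_0^k=-a_0^k$ and $b_1^k=e^{-i\pi}b_0^k=-b_0^k$ (which hold for any $k$) together with $c_1^k=(-1)^k c_0^k=-c_0^k$, where the last equality is the essential use of the hypothesis that $k$ is odd; consequently $a_0^k+a_1^k=b_0^k+b_1^k=c_0^k+c_1^k=0$, so the amplitudes of $\ket{00}^{\otimes k}$, $\ket{01}^{\otimes k}$, and $\ket{10}^{\otimes k}$ all vanish as well.

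I would also record, exactly as in the $U_k$ case, that $V_k$ is unitary by verifying $V_kV_k^\dagger=I$ (note $R_k=\cos(\pi/k)>0$ and $I_k=\sin(\pi/k)>0$ for $k\geq 3$, so all entries are well defined). The computation is otherwise routine, and there is no deep obstacle; the only genuine pitfalls are bookkeeping: pinning down the basis ordering of the $4\times4$ matrix so that columns $0$ and $3$ are indeed the images of $\ket{00}$ and $\ket{11}$, and confirming that the preceding copy step turns the $k$-cat state in the $\bfR_0$'s into precisely the $2k$-cat state assumed here. It is worth emphasizing where oddness enters: the cancellation of the $\ket{10}^{\otimes k}$ amplitude needs $(-1)^k=-1$, and for even $k$ this term would survive---which is exactly why $U_k$ rather than $V_k$ is used in the even case.
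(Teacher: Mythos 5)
Your proposal is correct and follows essentially the same route as the paper's own proof: expand $\frac{1}{\sqrt2}\bigl((V_k\ket{00})^{\otimes k}+(V_k\ket{11})^{\otimes k}\bigr)$ and check that the amplitudes of $\ket{00}^{\otimes k}$, $\ket{01}^{\otimes k}$, $\ket{10}^{\otimes k}$, $\ket{11}^{\otimes k}$ all vanish, with oddness of $k$ used exactly where you say (the $\ket{10}^{\otimes k}$ term) and the $\ket{11}^{\otimes k}$ amplitude vanishing trivially because neither image has a $\ket{11}$ component. Your column values even correct a small typo in the paper's displayed amplitude for $\ket{10}^{\otimes k}$ (the entry is $\sqrt{R_k}/\sqrt{R_k+1}$, not $R_k/\sqrt{R_k+1}$), which does not affect the cancellation.
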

\begin{proof}
The matrix of $V_k$ is well-defined since the denominator in any element of
$V_k$ is positive since $R_k+1>0$ and $R_{2k}>0$ for $k\geq 3$.
We can verify that $V_k$ is unitary by some calculation.

To complete the proof, we will show that the amplitudes of states
$\ket{00}^{\otimes k}$, $\ket{01}^{\otimes k}$, 
$\ket{10}^{\otimes k}$, and $\ket{11}^{\otimes k}$ 
are all zero after every party applies
$V_k$ to his two qubits, since these states imply
that all parties observe the same two-bit value by measuring their two
qubits.
Here we assume that the ordering of the $2k$ qubits is such that
party $l$ has the $(2l-1)$st and $2l$th qubits for $l=1,2,\ldots ,k$.

After every party applies $V_k$,
the amplitude of state 
$\ket{00}^{\otimes k}$
is
\[
\frac{1}{\sqrt{2}}
\left\{
\left(
\frac{1}{\sqrt{2(R_k+1)}}
\right)
^k 
+
\left(
\frac{e^{i\frac{\pi}{k}}}{\sqrt{2(R_k+1)}}
\right)
^k 
\right\}
=0.
\]
In the same way, the amplitudes of states 
$\ket{01}^{\otimes k}$ and $\ket{10}^{\otimes k}$ 
are
\[
\frac{1}{\sqrt{2}}
\left\{
\left(
\frac{1}{\sqrt{2(R_k+1)}}
\right)
^k 
+
\left(
\frac{e^{-i\frac{\pi}{k}}}{\sqrt{2(R_k+1)}}
\right)
^k 
\right\}
=0,
\]
\[
\frac{1}{\sqrt{2}}
\left\{
\left(
\frac{R_k}{\sqrt{R_k+1}}
\right)
^k 
+
\left(
- \frac{R_k}{\sqrt{R_k+1}}
\right)
^k 
\right\}
=0,
\]
respectively, 
since $k$ is odd.
The amplitude of state 
$\ket{11}^{\otimes k}$ 
is obviously 0.
\end{proof}

From the above two lemmas, the correctness of Subroutine~B is immediate.
\begin{lemma}
\label{lm:subroutineB-correctness}
 Suppose that $k$ parties each have a qubit in one-qubit register
  ${\bfR _0}$
whose content forms a
  $k$-cat state together with the contents of the $(k-1)$ qubits of the other
  parties; further, suppose that they each prepare a fresh ancilla qubit initialized to
  $\ket{0}$ in another one-qubit register $\bfR_1$. After running Subroutine~B with
  $\bfR_0$, $\bfR_1$ and $k$, the qubits in $\bfR_0$s and $\bfR_0$s
  form an inconsistent state over $S$, where $S$ is the set of
the indices of the $k$ parties.
\end{lemma}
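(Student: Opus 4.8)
The plan is to split on the parity of $k$ and reduce each case directly to one of the two unitary lemmas just proved. For even $k$, Subroutine~B applies $U_k$ to the qubit in $\bfR_0$ and leaves $\bfR_1$ in its initial state $\ket{0}$, so the joint operation performed by each party on its pair of qubits is exactly $U_k\otimes I$, acting on one qubit of the $k$-cat state together with a fresh ancilla. Corollary~\ref{lm:unitary operator U} then immediately gives that the resulting $2k$-qubit state, i.e.\ the state in all parties' $\bfR_0$s and $\bfR_1$s, is inconsistent over $S$.

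For odd $k$, the single step needing verification is that the copy inside Subroutine~B turns the input into a $2k$-cat state. I would track the two branches of the $k$-cat state separately: on the all-zeros branch $\ket{0}^{\otimes k}$ each party copies $0$ into its $\bfR_1$, producing $\ket{00}^{\otimes k}$, while on the all-ones branch each party copies $1$, producing $\ket{11}^{\otimes k}$. Under the ordering in which party $l$ holds the $(2l-1)$st and $2l$th qubits, these are precisely $\ket{0}^{\otimes 2k}$ and $\ket{1}^{\otimes 2k}$, so after the copy the $2k$ qubits are in the state $(\ket{0}^{\otimes 2k}+\ket{1}^{\otimes 2k})/\sqrt{2}$, that is, the $2k$-cat state in exactly the layout assumed by the lemma for $V_k$. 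Applying $V_k$ to each party's two qubits and invoking that lemma yields inconsistency over $S$, completing this case.

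I do not expect a genuine obstacle here: the statement is an immediate consequence of the two preceding lemmas, and the whole content is the two-case reduction above. The only place asking for a little care is the bookkeeping in the odd case, namely confirming that ``copy'' produces the $2k$-cat state in the particular qubit ordering (party $l$ owning qubits $2l-1$ and $2l$) that the $V_k$ lemma presumes, so that the lemma applies verbatim.
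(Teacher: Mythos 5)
Your proof is correct and follows exactly the route the paper intends: the paper simply declares the lemma ``immediate'' from Corollary~\ref{lm:unitary operator U} and the $V_k$ lemma, and your two-case reduction (including the check that the copy step produces the $2k$-cat state in the qubit ordering assumed by the $V_k$ lemma) is just that argument written out explicitly.
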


The next lemma is obvious.
\begin{lemma}
\label{lm:subroutineB-complexity}
  Subroutine~B takes $O(1)$ time  and needs no communication.
\end{lemma}

\subsection{Subroutine~C}
\label{Subsection: Subroutine C}

Subroutine~C is a classical algorithm
that computes the maximum value over the values of all parties.
It is very similar to Subroutine~A.
In fact, Subroutines A and C can be merged into one subroutine,
although we will explain them separately for simplicity.
Figure~\ref{Figure: Subroutine C}
gives a precise description of Subroutine~C.

\begin{figure}[t]
\begin{center}
\hrulefill\\
  \textbf{Subroutine~C}
\vspace{-2mm}
\begin{description}
\setlength\itemsep{-3pt}
\item[Input:]
integers $z$, $n$, $d$
\item[Output:]
an integer $z_{\max}$
\end{description}
\vspace{-6mm}
\begin{enumerate}
\setlength\itemsep{-3pt}
  \item
    Let ${z_{\max} := z}$.
  \item
    For ${t:=1}$ to ${n-1}$, do the following:
\vspace{-6pt}
    \begin{itemize}
\setlength\itemsep{-0pt}
      \item[2.1]
        Let ${y_0 := z_{\max}}$. 
      \item[2.2]
        Send $y_0$ via every port $i$ for ${1 \leq i \leq d}$.\\
        Set $y_i$ to the value received
        via port $i$ for ${1 \leq i \leq d}$. 
     \item[2.3]
        Let ${z_{\max} := \max_{0 \leq i \leq d} y_i}$.
    \end{itemize}
\vspace{-6pt}
  \item
    Output $z_{\max}$.
\end{enumerate}
\vspace{-1\baselineskip}
\hrulefill
\end{center}
\vspace{-1\baselineskip}
\caption{Subroutine~C.}
\label{Figure: Subroutine C}
\end{figure}

\begin{lemma}
\label{lm:subroutineC-correctness}
  Suppose that each party $l$ has integer $z_l$ and $d_l$ neighbors in an
  $n$-party distributed system. If every party $l$ runs Subroutine~C
  with ${z:=z_l}$, $n$ and ${d:=d_l}$ as input, Subroutine~C outputs
  the maximum value $z_{\max}$ among all $z_l$s.
\end{lemma}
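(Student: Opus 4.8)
The plan is to prove by induction on the iteration count $t$ that Subroutine~C floods each party's value outward by one hop per round: after $t$ iterations of the loop, every party's local variable $z_{\max}$ holds the maximum of $z_{l'}$ over all parties $l'$ within graph distance $t$. The claim then follows because a connected graph on $n$ nodes has diameter at most $n-1$, so after the $(n-1)$ iterations the loop performs, each party's ball of radius $n-1$ already covers the whole network.

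To set this up, for a party $l$ let $B_t(l)$ denote the set of parties at graph distance at most $t$ from $l$, and write $z_{\max}^{(t)}(l)$ for the content of $l$'s variable $z_{\max}$ immediately after the $t$th iteration of step~2, with $z_{\max}^{(0)}(l)$ the value assigned in step~1. I would establish the invariant
\[
z_{\max}^{(t)}(l) = \max_{l' \in B_t(l)} z_{l'}.
\]
The base case $t=0$ is immediate, since step~1 sets $z_{\max}^{(0)}(l)=z_l$ and $B_0(l)=\{l\}$. For the inductive step, observe that in iteration $t$ party $l$ sends the value it held before the update, namely $z_{\max}^{(t-1)}(l)$, to every neighbor, and receives from each neighbor $l''$ the value $z_{\max}^{(t-1)}(l'')$; step~2.3 then overwrites $z_{\max}$ with the maximum of $l$'s own value and all received values. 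By the induction hypothesis this is the maximum of $z_{l'}$ over the union of $B_{t-1}(l)$ with the balls $B_{t-1}(l'')$ taken over all neighbors $l''$ of $l$, and that union is exactly $B_t(l)$. This yields $z_{\max}^{(t)}(l)=\max_{l'\in B_t(l)} z_{l'}$, completing the induction.

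It then remains to invoke connectivity. Since the underlying graph is connected on $n$ nodes, its diameter is at most $n-1$, so $B_{n-1}(l)$ equals the entire vertex set for every $l$. Substituting $t=n-1$ into the invariant shows that once the loop terminates, every party's $z_{\max}$ equals $\max_{l'} z_{l'}$, which is precisely the value output in step~3.

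The argument is entirely routine; the only point requiring care is to fix what ``after iteration $t$'' means so that the send/receive of step~2.2 is read consistently across parties, each broadcasting the value it held \emph{before} the update rather than after. There is no genuine obstacle here: this is the standard flooding/max-finding analysis on a synchronous network, and it parallels the reachability argument already used for Subroutine~A, where commutativity, associativity, and idempotence of the aggregation operation play the role that the ordinary $\max$ plays here.
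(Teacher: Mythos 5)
Your proof is correct and follows essentially the same route as the paper's: the same induction on the iteration count showing that after $t$ rounds each party's $z_{\max}$ is the maximum over all parties within distance $t$, concluded by the diameter bound $n-1$ for a connected $n$-node graph. Your write-up is if anything slightly more careful about the base case and about which value is broadcast in step~2.2, but there is no substantive difference.
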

\begin{proof}
We will prove by induction the next claim:
 after repeating steps 2.1 to 2.3 $t$ times, $z_{\max}$ of party $l$ is the maximum among
  $z_j$s of all parties $j$ who can be reached
  from party $l$ via a path of length at most $t$.
When $t=1$, the claim obviously holds. Assume that the claim holds for
 $t=m$. After the next iteration of steps 2.1 to 2.3, $z_{\max}$ is updated to
 the maximum value
among $y_0$'s of party $l$ and his neighbors.
Since $y_0$ is the $z_{\max}$ of the previous iteration,
the claim holds for $t=m+1$ due to the assumption.
Since any graph has diameter at most $n-1$, Subroutine~C outputs the 
maximum value $z_{\max}$ among all $z_i$s.
\end{proof}

In quite a similar way to the proof of Lemma \ref{lm:subroutineA-complexity},
we have the next lemma.
\begin{lemma}
\label{lm:subroutineC-complexity}
Let $\abs{E}$ and $D$ be the number of edges and the maximum degree of the
underlying graph, respectively.
Subroutine~C takes $O(n)$ rounds  and $O(Dn)$ time.
The total communication complexity over all parties is 
$O(|E|n)$.
\end{lemma}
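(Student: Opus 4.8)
The plan is to mirror the complexity analysis of Subroutine~A carried out in Lemma~\ref{lm:subroutineA-complexity}, since Subroutine~C has an essentially identical loop structure: a single $(n-1)$-iteration loop in which each party broadcasts one value to all of its neighbours, receives one value back on each port, and then performs a local aggregation. First I would dispose of the time bound. Steps 1 and 3 are plainly $O(1)$. Within each iteration of step 2, step 2.1 is a constant-time assignment; step 2.2 sends $y_0$ via each of the (at most $D$) ports and reads the value arriving on each port, costing $O(D)$ time; and step 2.3 computes a maximum over the $d+1 \le D+1$ values $y_0,\ldots,y_d$, which again takes $O(D)$ time. Hence one iteration costs $O(D)$, and summing over the $n-1$ iterations gives $O(Dn)$ time in total.

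For the round complexity, I would observe that communication occurs only in step 2.2 and happens exactly once per iteration of the $(n-1)$-fold loop; since all ports are used simultaneously within a single round, the number of rounds is $O(n)$. The communication bound then follows from a handshake-lemma counting argument: in each iteration every party $l$ transmits one value via each of its $d_l$ ports, so party $l$ sends $d_l$ messages per iteration and $O(d_l n)$ messages over the whole run. Summing over all parties and invoking $\sum_l d_l = 2\abs{E}$ yields a total of $(n-1)\cdot 2\abs{E} = O(\abs{E} n)$ messages.

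I do not expect any genuine obstacle here, as the argument is a direct counting one. The only point requiring a moment's care is to confirm that the values exchanged remain of bounded size throughout—in the intended use (step~4 of Algorithm~I) each $z$ is a two-bit integer, and a maximum of two-bit integers is again two bits—so that bounding the \emph{number} of messages suffices to bound the total communication complexity at $O(\abs{E}n)$. The choice of exactly $n-1$ iterations is justified exactly as in Lemma~\ref{lm:subroutineC-correctness}, namely by the fact that the underlying graph has diameter at most $n-1$; this is what guarantees correctness but plays no further role in the complexity count.
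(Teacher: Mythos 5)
Your proposal is correct and follows essentially the same route as the paper, which proves this lemma simply by pointing to the argument for Lemma~\ref{lm:subroutineA-complexity}: a per-iteration cost of $O(D)$ time and $d_l$ messages per party, summed over $n-1$ iterations and over all parties via $\sum_l d_l = 2\abs{E}$. Your added remark that the exchanged values stay constant-size (so counting messages suffices) is a sensible point of care that the paper leaves implicit.
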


\subsection{Complexity analysis and generalization}
\label{Subsection: Complexity Analysis}
Now we prove Theorem~\ref{Theorem: Complexity of Algorithm I for n}.
\setcounter{tmptheorem}{\value{theorem}}
\setcounter{theorem}{1}
\addtocounter{theorem}{-1}
\begin{theorem}
Let $\abs{E}$ and $D$ be the number of edges and the maximum degree of the
underlying graph, respectively.
Given a classical variable $\status$ initialized to ``$\eligible$'' and 
the number $n$ of parties,
Algorithm~I exactly elects a unique leader
in $O(n^2)$ rounds and $O(D n^2)$ time.
Each party connected with $d$ parties
requires $O(d n^2)$-qubit communication, and 
the total communication complexity over all parties
is $O(\abs{E} n^2)$.
\label{Theorem: Complexity of Algorithm I for n}
\end{theorem}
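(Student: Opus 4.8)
The plan is to separate the statement into a correctness claim (exactly one leader is elected) and a complexity claim, and to reduce the latter to a mechanical summation over the $(n-1)$ phases of the per-subroutine bounds already proved. For correctness the quantity to track is $\abs{S_i}$, the number of eligible parties at the start of phase $i$, and I would prove by induction on $i$ the invariant $\abs{S_i}\leq k=n-i+1$, together with the two auxiliary facts that the eligible set never grows and is never emptied. The base case is immediate: since every party starts with $\status=$``$\eligible$,'' we have $\abs{S_1}=n=k$.

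The inductive step rests on the observation that the eligible count \emph{strictly decreases whenever} $k=\abs{S_i}$. By Lemma~\ref{lm:subroutineA-correctness}, after Subroutine~A the shared state has the form $\sum_i\alpha_i\ket{i}\ket{s_i}^{\otimes n}$ with $\bfS$ perfectly correlated to consistency, so measuring $\bfS$ cleanly projects onto an inconsistent component over $S_i$ or a consistent one; because the eligible qubits start in $(\ket0+\ket1)/\sqrt2$ and the ineligible ones in $\ket0$, the consistent component is \emph{exactly} the $\abs{S_i}$-cat state on the eligible registers. In the inconsistent branch the measured values cannot all agree over $S_i$, so the maximum returned by Subroutine~C (Lemma~\ref{lm:subroutineC-correctness}) is attained by a proper, nonempty subset of $S_i$, forcing a strict decrease; in the consistent branch, when $k=\abs{S_i}$ Subroutine~B turns the cat state into a state inconsistent over $S_i$ (Lemma~\ref{lm:subroutineB-correctness}), and the same conclusion follows. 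Hence if $\abs{S_i}=k$ then $\abs{S_{i+1}}\leq k-1=n-(i+1)+1$, whereas if $\abs{S_i}<k$ monotonicity gives $\abs{S_{i+1}}\leq\abs{S_i}\leq n-(i+1)+1$; either way the invariant propagates. Monotonicity and non-vanishing hold because an ineligible party sets $z:=-1$ while every eligible party obtains $z\in\{0,1,2,3\}$, so the maximum is always nonnegative: no ineligible party is reactivated, and at least one eligible party attains the maximum and survives. Applying the invariant after the last phase ($i=n-1$, $k=2$) yields $\abs{S_n}\leq n-(n-1)=1$, and since the set is nonempty exactly one party remains eligible, namely the leader.

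For the complexity, each of the $(n-1)$ phases comprises a constant amount of local state preparation and single-qubit measurement, the $\bfS$-measurement, and one call each to Subroutines~A, B, and~C. By Lemmas~\ref{lm:subroutineA-complexity}, \ref{lm:subroutineB-complexity}, and~\ref{lm:subroutineC-complexity}, a single phase costs $O(n)$ rounds, $O(Dn)$ time, $O(\abs{E}n)$ total communication, and $O(dn)$ qubits sent by a degree-$d$ party, with all qubit communication originating in Subroutine~A (Subroutine~C being classical and Subroutine~B purely local). Multiplying by the $n-1$ phases yields the claimed $O(n^2)$ rounds, $O(Dn^2)$ time, $O(\abs{E}n^2)$ total communication, and $O(dn^2)$ qubits per degree-$d$ party.

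I expect the single delicate point to be the claim that the consistent component delivered by Subroutine~A is precisely the $\abs{S_i}$-cat state, rather than some other consistent superposition: it is exactly this fact that couples the running index $k$ to $\abs{S_i}$, so that the hypothesis of Subroutine~B is met at equality and the strict-decrease phenomenon—and therefore the whole induction—goes through. The complexity half is then a routine summation of the subroutine bounds.
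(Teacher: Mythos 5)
Your proposal is correct and follows essentially the same route as the paper's proof: an induction establishing the invariant $k\geq\abs{S_i}$ (with strict decrease of the eligible set whenever $k=\abs{S_i}$, monotonicity, and non-emptiness via the $z=-1$ convention and survival of the maximizers), followed by summing the per-phase bounds from Lemmas~\ref{lm:subroutineA-complexity}, \ref{lm:subroutineB-complexity}, and~\ref{lm:subroutineC-complexity} over the $n-1$ phases. Your explicit justification that the consistent component is exactly the $\abs{S_i}$-cat state is a point the paper leaves to the algorithm description rather than the proof, but it is the same underlying argument.
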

\setcounter{theorem}{\value{tmptheorem}}
\begin{proof}
Let $S_i$ be the set of the indices of 
parties with
  $\status=\mbox{``$\eligible$''}$ (i.e., eligible parties)
just before phase $i$.
From Lemmas~\ref{lm:subroutineA-correctness}~and~\ref{lm:subroutineB-correctness},
we can see that, in each phase $i$, Algorithm~I generates
an inconsistent state over $S_i$,
if $k=\abs{S_i}$.
Algorithm~I then decreases the number of the eligible parties
in step 2.5 by at least one, which is implied by Lemma~\ref{lm:subroutineC-correctness}.
If $k$ is not equal to $\abs{S_i}$,
the number of the eligible parties is decreased or unchanged.
We can thus prove that $k$ is always at least $\abs{S_i}$ in any phase
$i$ by induction, since ${k=\abs{S_1}=n}$ before entering phase $1$
and $k$ is decreased by 1 in every phase.
It is stressed that there is always at least one eligible party,
since the eligible parties having $z=z_{\max}$ at step 2.5 remain
eligible.
It follows that, after step 2, the number of eligible parties
is exactly 1. This proves the correctness of Algorithm~I.

As for complexity, Subroutines~A,~B~and~C are dominant in step 2.
Due to Lemmas \ref{lm:subroutineA-complexity}, \ref{lm:subroutineB-complexity} and \ref{lm:subroutineC-complexity}; the total communication complexity is
$O(|E|n)\times n=O(|E|n^2)$ (each party with $d$ neighbors
incurs $O(d n^2)$ communication complexity);
the time complexity is $O(Dn)\times n = O(Dn^2)$;
the number of rounds required is $O(n)\times n=O(n^2)$.
\end{proof}

If each party knows
only the upper bound $N$ of the number of parties in advance,
each party has only to perform Algorithm~I
with 
$N$ instead of $n$.
The correctness in this case is obvious from the proof of 
Theorem~\ref{Theorem: Complexity of Algorithm I for n}.
The complexity is described
simply by replacing every 
$n$ by $N$
in Theorem~\ref{Theorem: Complexity of Algorithm I for n}.

\setcounter{tmptheorem}{\value{theorem}}
\setcounter{theorem}{3}
\addtocounter{theorem}{-1}
\begin{corollary}
Let $\abs{E}$ and $D$ be the number of edges and the maximum degree of the
underlying graph, respectively.
Given a classical variable $\status$ initialized to ``$\eligible$'' and 
the number of parties,  $N$,
Algorithm~I exactly elects a unique leader
in $O(N^2)$ rounds and $O(D N^2)$ time.
Each party connected with $d$ parties
incurs $O(d N^2)$-qubit communication, and 
the total communication complexity over all parties
is $O(\abs{E} N^2)$.
\end{corollary}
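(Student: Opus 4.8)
The plan is to reduce the claim to Theorem~\ref{Theorem: Complexity of Algorithm I for n} by observing that running Algorithm~I with the parameter $N$ in place of $n$ changes nothing structurally: the only effect is that the outer loop now runs from $k:=N$ down to $2$, i.e., for $N-1$ phases, and that the subroutines receive $N$ as their size argument. First I would re-examine the correctness argument of Theorem~\ref{Theorem: Complexity of Algorithm I for n} and check that each of its ingredients survives the substitution. The crucial invariant there is that $k\geq\abs{S_i}$ in every phase $i$, which was established by noting that $k$ starts at $n=\abs{S_1}$ and decreases by one per phase while $\abs{S_i}$ never increases. With the upper bound $N$, the same induction goes through verbatim, the only difference being that the base case now reads $k=N\geq n=\abs{S_1}$; the inequality is merely loosened, not broken.

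Next I would confirm that the per-phase subroutines remain correct when invoked with $N$. Subroutines~A and~C rely (via Lemmas~\ref{lm:subroutineA-correctness} and~\ref{lm:subroutineC-correctness}) only on the fact that their $t$-loop runs at least as many iterations as the diameter of the underlying graph; since that diameter is at most $n-1\leq N-1$, looping up to $N-1$ still suffices to propagate every party's value across the network, so both lemmas continue to apply. Subroutine~B is unaffected, as its behaviour for a given $k$ is independent of whether $k$ originated from $n$ or $N$. Consequently, whenever $k$ descends to the current value $\abs{S_i}$, Lemmas~\ref{lm:subroutineA-correctness} and~\ref{lm:subroutineB-correctness} force an inconsistent state over $S_i$ and step~2.5 strictly reduces the eligible set, while at least one party always survives; exactly as before, the eligible set shrinks to a single leader by the time the loop terminates with $k=2$.

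Finally, the complexity bounds follow by mechanically replacing $n$ with $N$. There are now $N-1$ phases instead of $n-1$, and by Lemmas~\ref{lm:subroutineA-complexity}, \ref{lm:subroutineB-complexity}, and~\ref{lm:subroutineC-complexity} each phase costs $O(N)$ rounds, $O(DN)$ time, and $O(\abs{E}N)$ total communication, with a party of degree $d$ contributing $O(dN)$. Multiplying by the $N-1$ phases yields $O(N^2)$ rounds, $O(DN^2)$ time, $O(\abs{E}N^2)$ total communication, and $O(dN^2)$ per party, as claimed. I do not expect any genuine obstacle here: the only point that warrants care is verifying that loosening the base case of the invariant $k\geq\abs{S_i}$ does not disturb the termination argument, and that the graph diameter is still covered by the loops — both of which hold trivially because $N\geq n$.
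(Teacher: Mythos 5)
Your proposal is correct and follows essentially the same route as the paper, which simply observes that running Algorithm~I with $N$ in place of $n$ is correct "from the proof of Theorem~\ref{Theorem: Complexity of Algorithm I for n}" and that the complexity follows by substituting $N$ for $n$. You merely spell out the details the paper leaves implicit (the loosened base case $k=N\geq\abs{S_1}$ of the invariant and the fact that $N-1$ iterations still cover the diameter), which is a faithful elaboration rather than a different argument.
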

\setcounter{theorem}{\value{tmptheorem}}

\iffalse
\begin{theorem}
Let $\abs{E}$ and $D$ be the number of edges and the maximum degree of the
underlying graph, respectively.
Even when each party initially knows
only the upper bound $N$ of the number of parties,
Algorithm~I is an exact quantum leader election algorithm
that runs in $O(N^2)$ rounds and $O(D N^2)$ time.
Each party connected with $d$ parties
requires $O(d N^2)$ communication complexity,
and the total communication complexity over all parties
is $O(\abs{E} N^2)$.
\end{theorem}
\fi

Furthermore, Algorithm~I is easily modified
so that it works well even in the asynchronous settings.
Note that all parties receive messages via each port in each round.
In the modified version,
each party postpones performing the operations of the $(i+1)$st round
until he finishes receiving all messages that are supposed to be received
in the $i$th round.
If all communication links work in the first-in-first-out manner,
it is easy to recognize the messages sent in the $i$th round for any $i$.
Otherwise, we tag every message,
which increases the communication and time complexity
by the multiplicative factor $O(\log n)$,
in order to know in which round every received message was sent.
This modification enables us to simulate synchronous behavior
in asynchronous networks.

%%% Local Variables: 
%%% mode: latex
%%% TeX-master: "TanKobMat07"
%%% End: 

\section{Quantum leader election algorithm II}
\label{sec:Quantum Leader Election Algorithm II}

Our second algorithm works well even on networks whose underlying graph
is directed (and strongly-connected).
Just for ease of understainding, we first describe the second algorithm 
on \emph{undirected} networks, and then modify it 
in a fairly trivial way 
in Subsection~\ref{subsec:algorithm II  for directed network}
so that it works well
on directed networks. 

To work on networks whose underlying graph is directed,
the second algorithm contains no inverting operations involving quantum
processing: the second algorithm makes the most of a classical elegant
technique, called \emph{view},
to make quantum parts ``one-way.''
As a by-product,
the second algorithm requires less quantum communication than the
first algorithm, although the \emph{total} communication complexity increases.

View was originally introduced by Yamashita and
Kameda~\cite{YamKam96IEEETPDS-1,yamashita-kameda99}
to characterize the topology of anonymous networks on which the leader
election problem can  be solved deterministically.
However, a na\"{\i}ve application of view
incurs exponential classical time/communication complexity.
This paper introduces a new technique called \emph{folded view},
which allows the algorithm to still run
in time/communication polynomial with respect to the number of parties.

\subsection{View and folded view}
\label{subsec: view}

First, we briefly review the classical technique, \emph{view}.
Let ${G=(V,E)}$ be the underlying network topology
and let ${n = \abs{V}}$.
Suppose that each party corresponding to node ${v \in V}$, \emph{or} simply party $v$,
has a value ${x_v \in U}$
for a finite subset $U$
of the set of integers,
and a mapping ${X \colon V \rightarrow U}$ is
defined by ${X(v) = x_v}$.
We use the value given by $X$ to identify the label of node in $G$.
For each $v$ and port numbering $\sigma$,
\emph{view} ${T_{G,\sigma, X} (v)}$ is
a labeled, rooted tree with infinite depth defined recursively as follows:
(1) ${T_{G,\sigma, X} (v)}$ has root $u$ with label $X(v)$, corresponding to $v$,
(2) for each vertex $v_j$ adjacent to $v$ in $G$,
${T_{G,\sigma,X} (v)}$ has vertex $u_j$ labeled with $X(v_j)$,
and an edge from root $u$ to $u_j$ with label ${\Label((v,v_j))}$,
where ${\Label((v,v_j))=(\sigma[v] (v,v_j),\sigma[v_j] (v,v_j))}$,
and (3) $u_j$ is the root of ${T_{G,\sigma,X} (v_j)}$.
It should be stressed that $v$, $v_j$, $u$, and  $u_j$ are
not identifiers of parties and are introduced just for definition. 
For simplicity,
we often use 
${T_{X}(v)}$ instead of ${T_{G,\sigma,X}(v)}$,
because we usually discuss views
of some fixed network with some fixed port numbering.
The \emph{view of depth $h$ with respect to $v$}, denoted by ${T^{h}_{X}(v)}$, is
the subtree of ${T_{X}(v)}$
of depth $h$ with the same root as  ${T_{X}(v)}$.

If two views $T_{X} (v)$ and $T_{X} (v')$ for $v,v'\in V$ are isomorphic (including
edge labels and node labels, but ignoring local names of vertices such
as $u_i$), their relation is denoted by $T_{X} (v)\equiv T_{X} (v')$.
With this relation, 
$V$ is divided into
equivalence classes; $v$ and $v'$ are in the same class if and only if
$T_{X} (v)\equiv T_{X}(v')$.  In
\cite{YamKam96IEEETPDS-1,yamashita-kameda99}, it was proved that all
classes have the same cardinality for fixed $G,\sigma$ and $X$; the cardinality
is denoted by $c_{G,\sigma,X}$, or simply $c_{X}$
(the maximum value of
$c_{G,\sigma,X}$ over all port numbering $\sigma$ is called
\emph{symmetricity} $\gamma(G,X)$ and used to give the necessary and
sufficient condition to exactly solve $\LE_n$ in anonymous 
classical networks).
We denote the set of
non-isomorphic views by $\Gamma _{G,\sigma,X}$, i.e., $\Gamma
_{G,\sigma,X}=\{T_{G,\sigma,X}(v): v\in V\}$, and the set of
non-isomorphic views 
of depth $h$ 
by $\Gamma _{G,\sigma,
  X}^{h}$, i.e., $\Gamma ^h_{G,\sigma,X}=\{T^{h}_{G,\sigma,X}(v):
v\in V\}$.  For simplicity, we may use $\Gamma _{X}$ and $\Gamma
^h_{X}$ instead of $\Gamma _{G,\sigma,X}$ and $\Gamma
^h_{G,\sigma,X}$, respectively.  We can see that $c_{X}=n/\abs{\Gamma _{X}}$, since
the number of views isomorphic to $T_X \in \Gamma _{X}$ is constant
over all $T_X$.  For any subset $S$ of $U$, let $\Gamma _{X}(S)$ be
the maximal subset of $\Gamma _{X}$ such that any view $T_X\in
\Gamma _{X}(S)$ has its root labeled with a value in $S$.  Thus the
number $c_{X}(S)$ of
parties having values in $S$ is $c_{X}|\Gamma_{X}(S)|=n|\Gamma
_{X}(S)|/|\Gamma _{X}|$.  
When $S$ is a singleton set $\{ s \}$, we may use 
$\Gamma_{X}(s)$ and $c_{X}(s)$ instead of $\Gamma_{X}(\{ s\})$ and $c_{X}(\{ s\})$.

To compute $c_{X}(S)$,
every party $v$ 
constructs $T^{2(n-1)}_{X}(v)$, and then computes $|\Gamma_{X}|$ and
$|\Gamma_{X}(S)|$.
To construct ${T^{h}_{X}(v)}$,
in the first round,
every party $v$ constructs ${T^0_{X}(v)}$, i.e., the root of $T^h_{X} (v)$.
If every party $v_j$ adjacent to $v$ has $T^{i-1}_{X}(v_j)$ in the $i$th round, 
$v$ can construct $T^{i}_{X}(v)$ in the $(i+1)$st round
by exchanging a copy of $T^{i-1}_{X}(v)$ for a copy of
$T^{i-1}_{X}(v_j)$ for each $j$.
By induction, in the $(h+1)$st round,
each party $v$ can construct $T^h_{X}(v)$.
It is clear that,
for each $v'\in V$,
at least one node in $T_{X}^{n-1}(v)$
corresponds to $v'$, 
since there is at least
one path of length of at most $(n-1)$ between any pair of parties.
Thus party $v$ computes $|\Gamma_{X}|$ and
$|\Gamma_{X}(S)|$ by checking the equivalence
of every pair of views that have their roots in 
$T_{X}^{n-1}(v)$.
The view equivalence
can be checked in finite steps, since $T_{X} (v)\equiv
T_{X}(v')$ if and only if 
$T^{n-1}_{X}(v)\equiv T^{n-1}_{X}(v')$ for $v,v'\in V$~\cite{norris95}.
This implies that 
$|\Gamma_{X}|$ and
$|\Gamma_{X}(S)|$ can be computed from
$T^{2(n-1)}_{X}(v)$.

Note that the size of ${T^{h}_{X}(v)}$ is exponential in $h$,
which results in exponential time/communication complexity
in $n$
when we construct it if $h=2(n-1)$.
To reduce the time/communication complexity to something bounded by a
polynomial, 
we create the new technique called \emph{folded view}
by generalizing Ordered Binary Decision Diagrams (OBDD)~\cite{Bry86IEEETC}.
A \emph{folded view (f-view) of depth $h$}
is a vertex- and edge-labeled directed acyclic multigraph
obtained by merging nodes at the same level in $T^h_{X}(v)$
into one node
if the subtrees rooted at them are isomorphic.
An f-view is said to be \emph{minimal}
and is denoted by $\widetilde{T}^h_{X}(v)$
if it is obtained by maximally
merging nodes of view $T^h_{X}(v)$ under the above condition.
For simplicity, we may call a minimal f-view just an f-view
in this section.
The number of nodes in each level of an f-view is obviously bounded by $n$,
and thus the total number of nodes in an f-view of depth $h$ is at most $hn$.
Actually, an f-view of depth $h$
can be recursively constructed in a similar manner to view construction
without unfolding intermediate f-views.
Details will be described in Section~\ref{sec: view compression}.

\begin{theorem}
\label{th:f-view}
If each party has a label of a constant-bit value,
every f-view of depth $h$ is constructed in
$O(D^2h^2n(\log n)^2)$ time for each party and $O(h)$ rounds
with $O(D|E|h^2n\log D)$ bits of classical communication.
Once $\widetilde{T}^{2(n-1)}_{X}(v)$ is constructed,
each party can compute
$\abs{\Gamma _X}$ and $\abs{\Gamma _X(S)}$
without communication in $O(Dn^5\log n)$ time,
where $S$ is any subset of range $U$ of $X$, and $\abs{E}$ and $D$ are
the number of edges and the maximum degree, respectively,  of the underlying graph.
\end{theorem}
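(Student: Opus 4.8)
The plan is to prove the two assertions separately: first the cost of \emph{building} the minimal f-view $\widetilde{T}^h_X(v)$, and then the cost of \emph{reading off} $\abs{\Gamma_X}$ and $\abs{\Gamma_X(S)}$ from $\widetilde{T}^{2(n-1)}_X(v)$, both purely locally for the second part. For the construction I would mirror the recursive view construction but carry only the minimized f-view at each round, never unfolding it. In round $i+1$ party $v$ receives $\widetilde{T}^i_X(v_j)$ from each neighbour $v_j$, creates a fresh root labelled $X(v)$, attaches the received roots by edges labelled $\Label((v,v_j))$, and then re-minimizes by merging, level by level from the leaves upward, any two same-level nodes whose (already minimized) subtrees are isomorphic. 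The engine is an OBDD-style canonical relabelling: processing levels bottom-up, each node is given the signature consisting of its own label together with the sorted multiset of (edge label, child-ID) pairs, and a fresh integer ID is assigned to each distinct signature through a dictionary. Two same-level nodes then receive equal IDs exactly when their subtrees are isomorphic, which is precisely the minimality condition. Correctness that the resulting directed acyclic multigraph really represents $T^{i+1}_X(v)$ follows by induction on $i$ from the fact that merging isomorphic subtrees never changes the unfolded tree.

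Since every level of an f-view holds at most $n$ nodes (there are at most $n$ non-isomorphic views of any fixed depth) and at most $D$ out-edges per node, the f-view at round $i$ has $O(in)$ nodes and $O(inD)$ edges; this gives one round per level, i.e. $O(h)$ rounds, and summing the f-view sizes over all rounds contributes the factor $\sum_{i\le h} in = O(h^2n)$. A routine accounting—forming each node's $O(D\log n)$-bit signature, sorting its $\le D$ children, and matching the signature in a balanced dictionary of size $\le n$—yields the stated $O(D^2h^2n(\log n)^2)$ time per party, while transmitting the $O(inD)$ labelled edges (each label an $O(\log D)$-bit pair of port numbers) across all $2\abs{E}$ directed links and all rounds accounts for the $O(D\abs{E}h^2n\log D)$ communication.

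For the counting, the key structural observation is that every party $v'$ is reached from $v$ by a walk of length at most $n-1$, so $T^{2(n-1)}_X(v)$ contains, at some depth $\ell\le n-1$, a node whose depth-$(n-1)$ subtree is exactly $T^{n-1}_X(v')$; conversely, by the equivalence of~\cite{norris95}, $T^{n-1}_X(v')\equiv T^{n-1}_X(v'')$ holds iff $T_X(v')\equiv T_X(v'')$. Hence the number of distinct depth-$(n-1)$ subtrees occurring at depths $\le n-1$ equals $\abs{\Gamma_X}$. I would therefore compute, for every f-view node $w$ of depth $\le n-1$, a canonical ID of its depth-$(n-1)$-\emph{truncated} subtree by running the same bottom-up signature procedure but capping the relative depth at $n-1$ and drawing IDs from one global dictionary; then $\abs{\Gamma_X}$ is the number of distinct such IDs, and $\abs{\Gamma_X(S)}$ is the number of them whose root label lies in $S$. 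As this is local to $\widetilde{T}^{2(n-1)}_X(v)$ no communication is needed, and the bound $O(Dn^5\log n)$ is a comfortable over-estimate of the $O(n^3)$ capped-signature evaluations over a DAG with $O(n^2)$ nodes and $O(n^2D)$ edges.

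The two places I expect the real work to be are the correctness of the folded computation and the multiplicity issue in counting. For correctness I must show that performing the level-wise merging directly on the folded DAGs produces exactly the isomorphism classes one would obtain by first unfolding to the exponential tree $T^h_X(v)$ and merging there; this is where the OBDD-reduction invariant—a node's ID determines its unfolded subtree up to isomorphism—has to be proved by induction on depth, taking care that a node shared by several parents is neither over- nor under-merged. For the counting the delicate step is that a single party $v'$ may surface at several depths of differing parity, so I cannot simply count the nodes at one fixed level; the fix is to deduplicate the capped IDs across all depths $\le n-1$ and to invoke the equivalence of~\cite{norris95} so that truncating at depth $n-1$ loses no information about the full view classes. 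Both points reduce to the same isomorphism-canonicalisation lemma, which I regard as the technical heart of Section~\ref{sec: view compression}.
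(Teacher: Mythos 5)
Your construction half follows essentially the same route as the paper: build the f-view round by round, never unfolding, by grafting the received minimal f-views onto a fresh root and re-minimizing bottom-up level by level; your dictionary of canonical signatures is just a reformulation of the paper's sort-by-$(\Label(u),\ekey(u))$-and-merge-adjacent step, and your correctness invariant (a node's ID determines its unfolded subtree up to isomorphism) is exactly what the paper establishes via its path-set characterization of f-views (Lemmas~\ref{lm:merging operation}--\ref{lm: minimization by merging operation}). The counting half, however, is genuinely different. The paper maintains a set $W$ of representatives and, for each candidate node of depth at most $n-1$, runs a pairwise test (Subroutine~P) that decides whether two sub-f-views share the same length-$(n-1)$ path set by constructing a unique label- and edge-preserving homomorphism between them via simultaneous breadth-first traversal; the technical heart is Lemma~\ref{lm:test common path set}, and the $O(n^3)$ pairs times $O(Dn^2\log n)$ per test gives the stated $O(Dn^5\log n)$. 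You instead canonicalize: assign each truncated subtree an ID by a capped bottom-up signature pass and count distinct IDs. This avoids the homomorphism lemma entirely and, worked out, is asymptotically cheaper than the paper's bound, but it needs one refinement you only gesture at: a single f-view node reachable from two roots $w_1,w_2$ at different depths must be truncated at \emph{different} remaining depths in the two subtrees, so the IDs must be indexed by pairs (node, remaining depth) rather than drawn from ``one global dictionary'' keyed on nodes alone -- your count of ``$O(n^3)$ capped-signature evaluations'' suggests you intend this, but as literally written the single bottom-up pass would conflate subtrees of different truncation depths. With that indexing made explicit, and with the appeal to Norris's theorem to pass from depth-$(n-1)$ classes to full view classes (which both you and the paper use), your argument goes through and comfortably meets the claimed bounds.
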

\begin{remark}
  Kranakis and Krizanc~\cite{kranakis-krizanc-vandenberg94} gave
two algorithms that compute a Boolean function for distributed
inputs
on anonymous
networks.
In their first algorithm, every party
essentially constructs a view of depth $O(n)$
in $O(n^2)$ rounds and the total communication complexity over all
parties of $O(n^6\log n)$, followed by local computation
(their model assumes that every party knows the
topology of the network and thus the number of parties, but their
first algorithm can work even when every party knows only the number of parties).
Thus, our folded view quadratically reduces the number of rounds
required to compute a Boolean function,
with the same total communication complexity.
Note that their second algorithm
can compute a symmetric Boolean function with 
lower communication complexity, i.e., $O(n^5(\log n)^2)$,
and $O(n^3\log n)$ rounds, but it requires 
that every party knows the topology of network.
\end{remark}

\subsection{The algorithm}
\label{Subsection: The Algorithm II}

As in the previous section, 
we assume that the network is synchronous
and each party knows the number $n$ of parties prior to algorithm invocation.
Again our algorithm is easily generalized to 
the asynchronous case.
It is also possible to modify our algorithm
so that it works well even if only the upper bound $N$ of the number of parties
is given, which will be discussed in Subsection~\ref{subsec:algorithm II when only the upperbound is given}.

The algorithm consists of two stages, which we call Stages 1 and 2 hereafter.
Stage 1 aims to have the $n$ parties share a certain type of entanglement,
and thus, this stage requires the parties to exchange quantum messages.
In Stage 1, each party performs Subroutine~Q ${s = \ceil{\log n}}$ times
in parallel to share $s$ pure quantum states
${\ket{\phi^{(1)}}, \ldots, \ket{\phi^{(s)}}}$ of $n$ qubits.
Here, each $\ket{\phi^{(i)}}$ is of the form
${(\ket{x^{(i)}} + \ket{\overline{x}^{(i)}})/\sqrt{2}}$
for an $n$-bit string $x^{(i)}$ and its bitwise negation $\overline{x}^{(i)}$,
and the $l$th qubit of each $\ket{\phi^{(i)}}$
is possessed by the $l$th party.
It is stressed that only one round of quantum communication is necessary in Stage 1.

In Stage 2,
the algorithm decides a unique leader among the $n$ parties
by just local quantum operations and classical communications
with the help of the shared entanglement prepared in Stage 1.
This stage consists of at most $s$ phases,
each of which reduces the number of eligible parties
by at least half.
Let ${S_i \subseteq \{1, \ldots, n\}}$
be the set of all $l$s such that party $l$ is still eligible just
before entering phase $i$.
First every party runs Subroutine~\~{A}
to decide if state $\ket{\phi^{(i)}}$ is
consistent or inconsistent over $S_i$.
Here the consistent/inconsistent strings/states are defined in the
same manner as in the
previous section.
If state $\ket{\phi^{(i)}}$ is consistent,
every party performs Subroutine~\~{B},
which first transforms $\ket{\phi^{(i)}}$
into the $\abs{S_i}$-cat state
${(\ket{0}^{\otimes \abs{S_i}} + \ket{1}^{\otimes \abs{S_i}})/\sqrt{2}}$
shared only by eligible parties
and then calls Subroutine~B described in the previous section
to obtain an inconsistent state over $S_i$.
Each party $l$ then measures his qubits to obtain a label 
and performs Subroutine~\texttilde{C}.
to find the minority among all labels.
The number of eligible parties is then 
reduced by at least half via minority voting with respect to the labels.

More precisely, 
each party $l$ 
having $d_l$ adjacent parties
performs Algorithm~II
described in Figure~\ref{Figure: Quantum leader election algorithm II}
with parameters ``$\eligible$,'' $n$, and $d_l$.
The party who obtains  output ``$\eligible$'' is the unique leader.

\begin{figure}[t]
\begin{center}
\hrulefill\\
  \textbf{Algorithm~II}
\vspace{-2mm}
\begin{description}
\setlength\itemsep{-3pt}
  \item[Input:]
    a classical variable $\status\in \{ \mbox{``$\eligible$''}, \mbox{``$\ineligible$''}\}$,
    integers ${n,d}$
  \item[Output:]
    a classical variable $\status\in \{ \mbox{``$\eligible$''}, \mbox{``$\ineligible$''}\}$
\end{description}
\vspace{-6mm}
\begin{description}
\setlength\itemsep{-3pt}
  \item[\underline{Stage 1}:]~\\
    Let ${s: = \ceil{\log n}}$
    and prepare one-qubit quantum registers
    ${\bfR _0 ^{(1)}, \ldots, \bfR _0 ^{(s)}}$
    and ${\bfR _1^{(1)}, \ldots, \bfR _1^{(s)}}$,
    each of which is initialized to the $\ket{0}$ state.\\
    Perform $s$ attempts of Subroutine~Q in parallel,
    each with $\bfR _0 ^{(i)}$ and $d$ for ${1 \leq i \leq s}$,
    to obtain $d$-bit string $y^{(i)}$
    and to share
    ${\ket{\phi^{(i)}} = (\ket{x^{(i)}} + \ket{\overline{x}^{(i)}})/\sqrt{2}}$
    of $n$ qubits.
  \item[\underline{Stage 2}:]~\\
    Let ${k := n}$.\\
    For ${i := 1}$ to $s$, repeat the following:
\vspace{-6pt}
    \begin{enumerate}
\setlength\itemsep{-3pt}
      \item
        Perform Subroutine~\texttilde{A}
        with $\status$, $n$, $d$, and $y^{(i)}$ 
        to obtain its output $\consistency$.
      \item
        If ${\consistency = \mbox{``$\consistent$,''}}$
        perform Subroutine~\texttilde{B}
        with $\bfR _0 ^{(i)}$, $\bfR _1^{(i)}$, $\status$, $k$, $n$, and $d$.
      \item
        If ${\status = \mbox{``$\eligible$,''}}$
        measure the qubits in $\bfR _0 ^{(i)}$ and $\bfR _1^{(i)}$
        in the $\{ \ket{0}, \ket{1} \}$ basis
        to obtain a nonnegative integer $z\ (0\leq z\leq 3)$;
        otherwise set ${z := -1}$.\\
        Perform Subroutine~\texttilde{C}
        with $\status$, $z$, $n$, and $d$ 
        to compute nonnegative integers $z_{\minor}$ and $c_{z_{\minor}}$.
      \item
        If ${z \neq z_{\minor}}$, let ${\status := \mbox{``$\ineligible$.''}}$\\
        Let ${k := c_{z_{\minor}}}$.
      \item
        If ${k=1}$, terminate and output $\status$.
    \end{enumerate}
  \end{description}
\vspace{-1\baselineskip}
\hrulefill
\end{center}
\vspace{-1\baselineskip}
\caption{Quantum leader election algorithm II.}
\label{Figure: Quantum leader election algorithm II}
\end{figure}

\subsubsection{Subroutine~Q:}
Subroutine~Q is mainly for the purpose of sharing a cat-like quantum state
${\ket{\phi}=(\ket{x}+\ket{\overline{x}})/\sqrt{2}}$ for an $n$-bit
random string $x$.
It also outputs a classical string,
which is used in Stage 2 for each party to obtain the information on ${\ket{\phi}}$ 
via  just classical communication.
This subroutine can be performed in parallel,
and thus Stage 1 involves only one round of quantum communication.
First each party prepares the state $(\ket{0}+\ket{1})/\sqrt{2}$
in a quantum register and computes the XOR of the contents of his own
and each adjacent party's registers. The party then measures the qubits 
whose contents are the results of the XORs. This results in the state of
the form $(\ket{x}+\ket{\overline{x}})/\sqrt{2}$.
Figure~\ref{Figure: Subroutine Q} gives a precise description of
Subroutine~Q.

\begin{figure}[t]
\begin{center}
\hrulefill\\
  \textbf{Subroutine~Q}
\vspace{-2mm}
\begin{description}
\setlength\itemsep{-3pt}
\item[Input:]
a one-qubit quantum register ${\bfR}_0$,
an integer $d$
\item[Output:]
a one-qubit quantum register ${\bfR}_0$,
a binary string $y$ of length $d$
\end{description}
\vspace{-6mm}
\begin{enumerate}
\setlength\itemsep{-3pt}
  \item
    Prepare $2d$ one-qubit quantum registers
    ${\bfR'_1, \ldots, \bfR'_d}$ and ${\bfS_1, \ldots, \bfS_d}$,
    each of which is initialized to the $\ket{0}$ state.
  \item
    Generate the $(d+1)$-cat state
    ${(\ket{0}^{\otimes (d+1)} + \ket{1}^{\otimes (d+1)})/\sqrt{2}}$
    in registers $\bfR_0$, ${\bfR'_1, \ldots, \bfR'_d}$.
  \item
    Exchange the qubit in $\bfR'_i$
    with the party connected via port $i$ for ${1 \leq i \leq d}$
    (i.e., the original qubit in $\bfR'_i$ is sent
     via  port $i$,
     and the qubit received via that port
     is newly set in $\bfR'_i$).
  \item
  Set the content of $\bfS_i$ to ${x_0 \oplus x_i}$, for ${1 \leq i \leq d}$,
    where $x_0$ and $x_i$ denote the contents of $\bfR_0$ and $\bfR'_i$, respectively.
  \item
    Measure the qubit in $\bfS_i$ in the $\{ \ket{0}, \ket{1} \}$ basis
    to obtain bit $y_i$, for ${1 \leq i \leq d}$.\\
    Set ${y:= y_1 \cdots y_d}$.
  \item
    Apply CNOT controlled by the content of $\bfR_0$ and targeted to
    the content of each $\bfR'_i$ for ${i=1,2,\ldots ,d}$ to 
    disentangle $\bfR'_i$s.
  \item
    Output $\bfR_0$ and $y$.
\end{enumerate}
\vspace{-1\baselineskip}
\hrulefill
\end{center}
\vspace{-1\baselineskip}
\caption{Subroutine~Q.}  
\label{Figure: Subroutine Q}
\end{figure}
The next two lemmas are for correctness and complexity.
\begin{lemma}
\label{lm:subroutineQ-correctness}
For an $n$-party distributed system,
suppose that every party $l$ calls Subroutine~Q with a one-qubit register whose content is initialized to
$\ket{0}$ and  the number $d_l$ of his neighbors as input $\bfR_0$ and
$d$, respectively.
After performing Subroutine~Q, all parties share
$(\ket{x}+\ket{\overline{x}})/\sqrt{2}$
with certainty, where $x$ is a randomly chosen $n$-bit string.
\end{lemma}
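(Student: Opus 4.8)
The plan is to track the global state of all $n$ parties through each step of Subroutine~Q, working in the computational basis and parametrizing each basis term by the ``seed bit'' $a_l\in\{0,1\}$ that party $l$ places in $\bfR_0$ when it creates its local cat state in step~2. Since every operation is either a unitary or a computational-basis measurement, it suffices to follow the amplitudes of the basis terms indexed by $a=(a_1,\dots,a_n)\in\{0,1\}^n$, and the randomness of $x$ will emerge from the measurement outcome in step~5.

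First I would write the post-step-2 state. Each party applies step~2 independently, so the global state is a tensor product of local $(d_l+1)$-cat states, which expands to the uniform superposition $2^{-n/2}\sum_{a\in\{0,1\}^n}\ket{a}$ over the seed assignments, where in the term indexed by $a$ party $l$'s register $\bfR_0$ and all of its outgoing registers $\bfR'_1,\dots,\bfR'_{d_l}$ hold the bit $a_l$. Next I would account for the exchange in step~3: after the swap, party $l$'s register $\bfR'_i$ holds the qubit sent by its neighbor $m$ across port $i$, which carries the bit $a_m$, while $\bfR_0$ still carries $a_l$. Thus in step~4 party $l$ sets $\bfS_i$ to $a_l\oplus a_m$, the parity of the edge $\{l,m\}$.

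The crux is the measurement in step~5. I would argue that measuring all the $\bfS_i$ registers across all parties is equivalent to measuring the $\mathbb{F}_2$-linear image $a\mapsto(a_l\oplus a_m)_{\{l,m\}\in E}$ of the seed assignment. Because $G$ is connected, the kernel of this map consists exactly of the two constant assignments $0^n$ and $1^n$, so every attainable parity vector has a fiber of size exactly two, namely $\{x,\overline{x}\}$ for some $x$. Hence the measurement projects the uniform superposition onto $(\ket{x}+\ket{\overline{x}})/\sqrt2$ (the two surviving terms keep equal amplitudes $2^{-n/2}$), with each attainable parity vector occurring with probability $2^{-(n-1)}$, which yields the uniformly random choice of the pair $\{x,\overline{x}\}$ (and $x$ versus $\overline{x}$ is immaterial since the surviving state is symmetric under global complement). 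I would also note that each edge is measured from both endpoints and that the two measurements return the same bit in every basis term, so this double measurement causes no further collapse and no inconsistency. The main obstacle is making this collapse argument fully rigorous: one must verify that the joint outcome of the per-party measurements is always a consistent, attainable parity vector and that its fiber is precisely $\{x,\overline{x}\}$, which is exactly where connectivity of the underlying graph enters.

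Finally I would handle step~6. In the branch $a=x$ the register $\bfR'_i$ holds $x_m$ while $\bfR_0$ holds $x_l$, and in the branch $a=\overline{x}$ it holds $\overline{x}_m$ while $\bfR_0$ holds $\overline{x}_l$; applying CNOT controlled by $\bfR_0$ into $\bfR'_i$ sends the target to $x_l\oplus x_m$ in the first branch and to $(1\oplus x_l)\oplus(1\oplus x_m)=x_l\oplus x_m$ in the second. Since this common value equals the already-measured classical bit $y_i$, every $\bfR'_i$ is disentangled into the product state $\ket{y_i}$, leaving the $\bfR_0$ registers in $(\ket{x}+\ket{\overline{x}})/\sqrt2$ with the $l$th qubit held by party $l$. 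This is exactly the claimed shared state, completing the proof.
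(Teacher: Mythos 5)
Your proposal is correct and follows essentially the same route as the paper's proof: tracking the uniform superposition over seed assignments through steps 2--4, showing that the measured edge parities determine the seed string up to global complementation (your kernel-of-the-linear-map argument is a repackaging of the paper's breadth-first propagation from party 1, both resting on connectivity), and observing that the step-6 CNOTs disentangle the $\bfR'_i$ registers because $x_l\oplus x_m=\overline{x_l}\oplus\overline{x_m}$. Your explicit remark that each edge's parity is measured at both endpoints with identical outcomes in every basis term, so the double measurement causes no further collapse, is a small point the paper leaves implicit.
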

\begin{proof}
After step 2 of Subroutine~Q,
the system state, i.e., the state in 
$\bfR_0$'s, ${\bfR'_1\mbox{'s}, \ldots, \bfR'_d\mbox{'s}}$ and
${\bfS_1\mbox{'s}, \ldots, \bfS_d\mbox{'s}}$
of all parties,
is the tensor product of the states of all parties as described by
formula~(\ref{fo:Q:step2}).
Notice that the state in $\bfR_0$'s and ${\bfR'_1\mbox{'s}, \ldots, \bfR'_d\mbox{'s}}$
of all parties is the
uniform superposition of some basis states in an orthonormal basis
of $2^{\sum _{l=1}^n(d_l+1)}$-dimensional Hilbert space:
the basis states
correspond one-to-one to $n$-bit integers $a$ and each of them
has the form $\ket{a_1}^{\otimes (d_1+1)}\otimes \cdots \otimes \ket{a_n}^{\otimes
  (d_n+1)}$, where $a_l$ is the $l$th bit of the binary expression of
$a$
and $a_l$ is the content of $\bfR_0$ of party $l$.
If we focus on the $l$th party's part of the basis state corresponding
to $a$, step~3 transforms $\ket{a_l}^{\otimes (d_l+1)}$ to 
$\ket{a_l}\left(\bigotimes_{j=1}^{d_l}\ket{a_{l_j}}\right)$,
where party $l$ is connected to party $l_j$ via port $j$.
More precisely, step 3 transforms the system state into the state as described in 
formula~(\ref{fo:Q:step3}).
After step 4, we have the state of 
formula~(\ref{fo:Q:step4}). Next every party $l$ measures the last $d_l$
registers $\bfS_i$'s at step 5.
  \begin{eqnarray}
    \bigotimes _{l=1}^{n}\ket{0}\ket{0}^{\otimes d_l}\ket{0}^{\otimes d_l}&\rightarrow&
    \bigotimes_{l=1}^{n}\frac{\ket{0}^{\otimes
    (d_l+1)}+\ket{1}^{\otimes (d_l+1)}}{\sqrt{2}}\ket{0}^{\otimes d_l}\label{fo:Q:step2}\\
&\rightarrow& 
\frac{1}{\sqrt{2^n}}
\sum _{a=0}^{2^n-1}\bigotimes_{l=1}^{n}\left\{\ket{a_l}
\left(\bigotimes_{j=1}^{d_l}\ket{a_{l_j}}\right)
\ket{0}^{\otimes d_l}\right\}\label{fo:Q:step3}\\
&\rightarrow& 
\frac{1}{\sqrt{2^n}}
\sum _{a=0}^{2^n-1}\bigotimes_{l=1}^{n}\left\{\ket{a_l}
\left(\bigotimes_{j=1}^{d_l}\ket{a_{l_j}}\right)
\left(\bigotimes_{j=1}^{d_l}\ket{a_l\oplus a_{l_j}}\right)\right\}\label{fo:Q:step4}
  \end{eqnarray}

\begin{claim}
\label{claim: correctness of subroutine Q}
  Suppose that every party $l$ has obtained measurement results
  $y{(l)}=y_1{(l)}y_2{(l)}\cdots y_{d_l}{(l)}$ of $d_l$ bits
where $y_j{(l)}\in \{ 0,1\}$.
There are exactly two binary strings $a=a_1a_2\cdots a_n$ 
that satisfy equations
$a_l\oplus a_{l_j}=y_j{(l)} (l=1,\ldots , n, j=1,\ldots ,d_l)$.
If the binary strings are $A$ and $\overline{A}$, then $\overline{A}$ is the bit-wise negation
  of $A$.
\end{claim}
\begin{proof}
We call binary strings $a$ ``solutions'' of the equations.
By definition, there is at least one solution.
If $A$ is such a string, obviously its bit-wise negation
$\overline{A}$ is also a solution by the fact that $a_i\oplus
a_j=\overline{a_i}\oplus \overline{a_j}$ for $1\leq i,j\leq n$.
We will prove that there is the unique solution such that 
$a_1=0$. It follows that
there is the unique solution such that 
$a_1=1$ since the bitwise negation of a solution is also a solution.
This completes the proof.

Let $\{ V_0, V_1, \ldots ,V_{p}\}$ be the partition of the set $V$
 of the indices of parties such that $V_0=\{1\}$ and $V_i=
\Adj(\bigcup_{m=0}^{i-1}V_m)\setminus \bigcup_{m=0}^{i-1}V_m$, where
  $p$ is the maximum length of the shortest path from party 1 to party $l$
  over all $l$, and $\Adj (V')$ for a set $V'\subseteq V$
is the set of neighbors of the parties in $V'$.

Equations $a_l\oplus a_{l_j}=y_j{(l)}$ are equivalent to 
$a_{l_j}=y_j{(l)}\oplus a_l\ (l=1,\ldots , n, j=1,\ldots ,d_l)$.
Assume that $a_1=0$. For all $l$ in $V_1$, $a_l$ is uniquely
determined by the equations.
Similarly, if $a_l$ is fixed for all $l$ in $\bigcup_{m=0}^{i-1}V_{m}$, 
$a_l$ is uniquely determined for all $l$ in $V_{i}$.
Since the underlying graph of the distributed system is connected,
$a_l$ is uniquely determined for all $l$.
\end{proof}

From the above claim, we get the superposition of two basis states
corresponding to $A$ and its bit-wise negation $\overline{A}$  after step 5 as
described by formula~(\ref{fo:Q:step5}), where $A_l$ is the $l$th bit
of $A$.
Step 6 transforms the state into that represented by formula~(\ref{fo:Q:step6}), in which
registers
$\bfR'_i$'s of all parties are disentangled because of 
$\ket{A_l\oplus A_{l_j}}=\ket{\overline{A_l}\oplus \overline{A_{l_j}}}$.
Thus, $\bfR_0$'s
is in the state of $(\ket{x}+\ket{\overline{x}})/\sqrt{2}$.
  \begin{eqnarray}
\lefteqn{
\frac{1}{\sqrt{2}}
\bigotimes_{l=1}^{n}
\left(\ket{A_l}
\bigotimes_{j=1}^{d_l}\ket{A_{l_j}}\right)+
\frac{1}{\sqrt{2}}
\bigotimes_{l=1}^{n}
\left(\ket{\overline{A_l}}
\bigotimes_{j=1}^{d_l}\ket{\overline{A_{l_j}}}\right)
}\label{fo:Q:step5}\\
&\rightarrow& 
\frac{1}{\sqrt{2}}
\bigotimes_{l=1}^{n}
\left(\ket{A_l}
\bigotimes_{j=1}^{d_l}\ket{A_l\oplus A_{l_j}}\right)+
\frac{1}{\sqrt{2}}
\bigotimes_{l=1}^{n}
\left(\ket{\overline{A_l}}
\bigotimes_{j=1}^{d_l}\ket{\overline{A_l}\oplus \overline{A_{l_j}}}\right)\label{fo:Q:step6}
  \end{eqnarray}
\end{proof}
\begin{lemma}
\label{lm:subroutineQ-complexity}
  Let $\abs{E}$ and $D$ be the number of edges and the maximum degree,
  respectively, of the
  underlying graph of an $n$-party
  distributed system. Subroutine~Q takes $O(D)$ time for
  each party, and incurs one round with $2|E|$-qubit communication.
\end{lemma}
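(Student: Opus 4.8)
The plan is to verify each of the three asserted quantities---per-party time, number of rounds, and total qubit communication---by directly inspecting the seven steps of Subroutine~Q, since the subroutine is a fixed-length procedure with no loops whose iteration count grows with $n$.

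First I would bound the per-party time. Steps~1, 4, 5, and~6 each touch the $2d$ work registers a constant number of times apiece and so cost $O(d)$ time. Step~2 generates the $(d+1)$-cat state on $\bfR_0,\bfR'_1,\ldots,\bfR'_d$, which is done by one Hadamard on $\bfR_0$ followed by $d$ CNOTs targeting $\bfR'_1,\ldots,\bfR'_d$, again $O(d)$ time. Step~3 sends one qubit and receives one qubit on each of the $d$ ports, costing $O(d)$ local effort to marshal. Summing over the constantly many steps, a party with $d$ neighbors spends $O(d)$ time, and since $d\le D$ for every party, the time bound is $O(D)$ as claimed.

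Next I would count rounds. Among all seven steps, only step~3 involves communication with other parties; every other step is purely local. Because the $d$ qubit exchanges in step~3 are performed simultaneously, one per port, they constitute a single round of simultaneous message passing. Hence Subroutine~Q incurs exactly one round.

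Finally I would compute the total communication. In step~3 each party $l$ transmits exactly $d_l$ qubits (one outgoing qubit per incident port). Summing the qubits sent over all $n$ parties gives $\sum_{l=1}^{n} d_l = 2\abs{E}$ by the handshaking lemma, since each edge contributes to the degree of its two endpoints. As no other step transmits any qubits, the total communication is $2\abs{E}$ qubits. The only place requiring a moment's care---rather than a genuine obstacle---is this degree-sum identity; everything else is routine step-by-step accounting. This completes the verification of all three bounds.
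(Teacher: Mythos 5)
Your proposal is correct and matches the paper's own (much terser) proof: both observe that all local work is linear in the party's degree $d_l\le D$, that only the qubit-exchange step communicates and does so in a single round, and that summing $d_l$ over all parties gives $2\abs{E}$ qubits. The extra step-by-step accounting you supply is just a more explicit version of the same argument.
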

\begin{proof}
  Each party $l$ performs just one-round communication of $d_l$
  qubits. The local computations can be done in time linear in $d_l$.
\end{proof}

\subsubsection{Subroutine~\texttilde{A}:}

Suppose that, after Subroutine~Q,
$n$-qubit state ${\ket{\phi}=(\ket{x}+\ket{\overline{x}})/\sqrt{2}}$
is shared by the $n$ parties such that the $l$th party has
the $l$th qubit.
Let $x_l$ be the $l$th bit of $x$, 
and let $X$ and $\overline{X}$ be mappings
defined by ${X(v) = x_l}$ and ${\overline{X}(v) = \overline{x_l}}$
for each $l$, respectively,
where $v\in V$ represents the node corresponding to the $l$th party  in
the underlying graph $G=(V,E)$ of the network topology.
For any $v$ in $V$, let $W[v]:V\rightarrow \{ 0,1\}\times \{
\mbox{``$\eligible$''}, \mbox{``$\ineligible$''}\}$ be the mapping
defined
as $(Y[v],Z)$, where
$Y[v]$ is $X$ if $X(v)=0$ and $\overline{X}$ otherwise,
and $Z:V\rightarrow \{
\mbox{``$\eligible$''}, \mbox{``$\ineligible$''}\}$ maps
$v\in V$ to the value of $\status$ possessed by the party
corresponding to $v$. 
We denote $(\overline{Y[v]},Z)$ by $\overline{W[v]}$,
where $\overline{Y[v]}=\overline{X}$ if $Y[v]=X$
and $Y[v]=X$ otherwise.

Subroutine~\texttilde{A} 
checks the consistency of $\ket{\phi}$, but in  quite a different way
from Subroutine~A.
Every party $l$ constructs the folded view
$\widetilde{T}^{n-1}_{W[v]}(v)$
by using the output $y$ of Subroutine~Q.
The folded view is constructed
by the f-view construction algorithm in~Figure~\ref{fig:f-view
  construction} in subsection~\ref{subsec:minimal-folded-view}
with slight modification;
the modification is required since mapping $W[v]$ is not necessarily 
common over all parties $v$.
The construction still involves only classical communication.
By checking if the nodes for eligible
parties in the folded view have the same labels, 
Subroutine~\texttilde{A} can decide 
whether $\ket{\phi}$
is
consistent or not over the set of the indices of eligible parties.
Figure~\ref{Figure: Subroutine tildeA} gives a precise
description of Subroutine~\texttilde{A}. The next lemmas present the
correctness and complexity of Subroutine~\texttilde{A}.

\begin{figure}[t]
\begin{center}
\hrulefill\\
  \textbf{Subroutine~\texttilde{A}}
\vspace{-2mm}
\begin{description}
\setlength\itemsep{-3pt}
\item[Input:]
a classical variable $\status\in \{ \mbox{``$\eligible$''}, \mbox{``$\ineligible$''}\}$,
integers ${n, d}$,
a binary string $y$ of length $d$
\item[Output:]
a classical variable $\consistency\in \{ \consistent, \inconsistent\}$
\end{description}
\vspace{-6mm}
\begin{enumerate}
\setlength\itemsep{-3pt}
  \item
    Set $\widetilde{T}^0_{W[v]}(v)$ to a node labeled with $(0,
    \status)$, where 
$W[v]:V\rightarrow \{ 0,1\}\times \{
\mbox{``$\eligible$''}, \mbox{``$\ineligible$''}\}$ be the mapping
defined
as $(Y[v],Z)$, 
$Y[v]$ is $X$ if $X(v)=0$ and $\overline{X}$ otherwise,
and  $Z$
is the underlying mapping
naturally induced by the values of $\status$.
  \item
    For ${i:=1}$ to $(n-1)$, do the following:
\vspace{-6pt}
    \begin{itemize}
\setlength\itemsep{0pt}
      \item[2.1]
        Send $\widetilde{T}^{i-1}_{W[v]}(v)$ and 
receive $\widetilde{T}^{i-1}_{W[v_j]}(v_j)$ via port $j$,
        for ${1 \leq j \leq d}$,
        where node $v_j$ corresponds to the party connected via port $j$.
      \item[2.2]
        If the $j$th bit $y_j$ of $y$ is $1$, transform 
$\widetilde{T}^{i-1}_{W[v_j]}(v_j)$ into 
$\widetilde{T}^{i-1}_{\overline{W[v_j]}}(v_j)$ by
        negating the first element of every node label
        for ${1 \leq j \leq d}$, where $\overline{W[v_j]}$ represents $(\overline{Y[v_j]},Z)$.
      \item[2.3]
        Set the root of $\widetilde{T}^i_{W[v_j]}(v)$ to the node labeled with $(0, \status)$.\\
        Set the $j$th child of the root of $\widetilde{T}^i_{W[v]}(v)$
        to $\widetilde{T}^{i-1}_{\overline{W[v_j]}}(v_j)$, for ${1
          \leq j \leq d}$ such that $y_j=1$.\\
        Set the $j$th child of the root of $\widetilde{T}^i_{W[v]}(v)$
        to $\widetilde{T}^{i-1}_{W[v_j]}(v_j)$, for ${1
          \leq j \leq d}$ such that $y_j=0$.\\
        For every level of $\widetilde{T}^i_{W[v]}(v)$,
        merge nodes at that level into one node
        if the views rooted at them are isomorphic.
    \end{itemize}
\vspace{-6pt}
  \item
    If both label $(0, \mbox{``$\eligible$''})$
    and label $(1, \mbox{``$\eligible$''})$
    are found among the node labels in $\widetilde{T}^{n-1}_{W[v]}(v)$,
    let ${\consistency := \mbox{``$\inconsistent$''}}$;
    otherwise let ${\consistency := \mbox{``$\consistent$.''}}$
  \item
    Output $\consistency$.
\end{enumerate}
\vspace{-1\baselineskip}
\hrulefill
\end{center}
\vspace{-1\baselineskip}
\caption{Subroutine~\texttilde{A}.}  
\label{Figure: Subroutine tildeA}
\end{figure}

\begin{lemma}
\label{lm:subroutinetildeA-correctness}
Suppose that the $n$ parties share $n$-qubit cat-like
state $(\ket{x}+\ket{\overline{x}})/\sqrt{2}$, 
where $x$ is $n$-bit string $X(v_1)X(v_2)\cdots X(v_n)$ 
for $v_i\in V$
and $\overline{x}$ is the bitwise negation of $x$.
Let $S$ be the set of the indices of the parties among the $n$ parties
whose variable $\status$ is ``$\eligible$,''
and let $v\in V$ be the corresponding node of party $l$.
If every party $l$ runs Subroutine~\texttilde{A} with the following objects as input:
\begin{itemize}
\item a classical variable $\status\in \{ \mbox{``$\eligible$''}, \mbox{``$\ineligible$''}\}$,
\item $n$ and the number $d_l$ of the neighbors of party $l$,
\item a binary string $y=y_1\cdots y_{d_l}$ of length $d_l$ such that
$y_j=X(v)\oplus X(v_j)$ for $j=1,\dots, d_l$ where $v_j$
is the $j$th adjacent node of $v$,
\end{itemize}
Subroutine~\texttilde{A} outputs classical valuable $\consistency$, which has
value ``$\consistent$'' if
$(\ket{x}+\ket{\overline{x}})/\sqrt{2}$ is consistent over
$S$, and ``$\inconsistent$'' otherwise.
\end{lemma}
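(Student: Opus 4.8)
The plan is to reduce the claim about the state $\ket{\phi}=(\ket{x}+\ket{\overline{x}})/\sqrt{2}$ to a combinatorial statement about which labels occur in the folded view, and then to check that the construction in Subroutine~\texttilde{A} really produces the view with the intended labelling. First I note that, since $\overline{x}$ is the bitwise negation of $x$, the string $\overline{x}$ is consistent over $S$ exactly when $x$ is; hence $\ket{\phi}$ is consistent over $S$ if and only if all eligible parties share a single value of $x_l$, that is, iff $X$ is constant on $\{v_l:l\in S\}$. So it suffices to show that Subroutine~\texttilde{A} outputs ``$\inconsistent$'' precisely when two eligible parties carry distinct $X$-values.

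The core of the argument is to track what the first label component encodes. By definition $W[v]=(Y[v],Z)$ with $Y[v]=X$ when $X(v)=0$ and $Y[v]=\overline{X}$ otherwise, so for the node of $\widetilde{T}^{n-1}_{W[v]}(v)$ that corresponds to a party $u'$ the first component of its label equals $Y[v](u')=X(u')\oplus X(v)$, which is $0$ iff $u'$ agrees with $v$ under $X$ and $1$ otherwise, while the second component is simply the status of $u'$. I would then prove by induction on $i$ that Steps~2.1--2.3 build exactly the minimal folded form of $T^i_{W[v]}(v)$ carrying this labelling. The base case holds because the root label $(0,\status)$ equals $(X(v)\oplus X(v),\status(v))$. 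For the inductive step the one delicate point is that a neighbour $v_j$ sends $\widetilde{T}^{i-1}_{W[v_j]}(v_j)$, labelled under its own normalization $Y[v_j]$ rather than $Y[v]$; since $Y[v](u')\oplus Y[v_j](u')=(X(u')\oplus X(v))\oplus(X(u')\oplus X(v_j))=X(v)\oplus X(v_j)=y_j$ does not depend on $u'$, the two labellings differ by a global flip of the first component exactly when $y_j=1$. This is precisely the relabelling carried out in Step~2.2, after which the received subtree is correctly expressed under $W[v]$ and may be attached as the $j$th child. The subsequent level-wise merging only identifies nodes whose subtrees (and hence labels) coincide, so it preserves the \emph{set} of node labels that occur, which is all Step~3 inspects.

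It remains to pass from the view to the output. The correspondence between nodes of the depth-$(n-1)$ view and parties is a structural fact depending only on $G$ and the port numbering, not on the labelling; as recalled earlier in the paper, because $G$ has diameter at most $n-1$ every party $u'$ is represented by at least one node, and this survives folding. Hence the collection of first components over nodes whose second component is ``$\eligible$'' is exactly $\{X(u')\oplus X(v):u'\in S\}$. This set contains both $0$ and $1$ iff $S$ holds one party agreeing with $v$ under $X$ and another disagreeing with it, i.e.\ iff $X$ is non-constant on $S$; by the first paragraph this is equivalent to $\ket{\phi}$ being inconsistent over $S$. Thus Step~3 declares ``$\inconsistent$'' in exactly the inconsistent case and ``$\consistent$'' otherwise, proving the lemma.

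The step I expect to be the main obstacle is the inductive construction above, namely verifying that the mutually different per-party normalizations can be reconciled using only the local XOR data $y_j=X(v)\oplus X(v_j)$. The clean way to see this is that the first-component discrepancy between any two normalizations is a coboundary of $X$, so it telescopes along paths and is cancelled edge-by-edge by the Step~2.2 flips; one must also confirm that the modification to the generic f-view construction algorithm (which presumes a labelling common to all parties) is harmless, since the flip in Step~2.2 is applied to each received subtree before any merging takes place.
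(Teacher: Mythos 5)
Your proposal is correct and follows essentially the same route as the paper's proof: the same reduction of consistency of $(\ket{x}+\ket{\overline{x}})/\sqrt{2}$ over $S$ to constancy of $X$ on $S$, the same induction on the depth of the folded view, and the same key observation that $y_j=X(v)\oplus X(v_j)$ determines exactly when the neighbour's normalization $Y[v_j]$ agrees with $Y[v]$, so that the Step~2.2 flip reconciles the two labellings before attachment. Your explicit formula $Y[v](u')=X(u')\oplus X(v)$ and the coboundary phrasing are a slightly more transparent packaging of the paper's statement that $Y[v]=Y[v_j]$ iff $y_j=0$, but the argument is the same.
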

\begin{proof}
It will be proved later that steps 1 and 2 construct an f-view of depth $(n-1)$ 
for mapping either $(X,Z)$ or
$(\overline{X},Z)$.
Since the f-view is made by merging those nodes
at the same depth which are the roots of isomorphic views,
the f-view contains at least one node that 
has the same label as $(X(v),Z(v))$ or $(\overline{X}(v),Z(v))$
for any $v\in V$.
Once the f-view is constructed, every party can know whether
$X$ is constant over all $l\in S$ or not in step 3
by checking the labels including ``$\eligible$.''
Notice that no party needs to know for which mapping of 
$(X,Z)$ or $(\overline{X},Z)$ it has constructed the f-view.

In what follows, we prove that steps 1 and 2 construct 
an f-view for mapping either $X$ or $\overline{X}$.
The proof is by induction on depth $i$ of the f-view.
Clearly, step 1 generates $\widetilde{T}^0_{W[v]}(v)$.
Assume that every party  $l'$ has constructed
$\widetilde{T}^{i-1}_{W[v']}(v')$ where node $v'$ represents party $l'$.
In order to construct $\widetilde{T}^{i}_{W[v]}(v)$,
party $l$ needs $\widetilde{T}^{i-1}_{W[v]}(v_j)$ for every node $v_j$ adjacent
to $v$.
Although $W[v]$ is not always identical to $W[v_j]$,
we can transform $\widetilde{T}^{i-1}_{W[v_j]}(v_j)$ to 
$\widetilde{T}^{i-1}_{W[v]}(v_j)$.
Since 
$y_j$ is equal to $X(v)\oplus X(v_j)=\overline{X}(v)\oplus \overline{X}(v_j)$,
each of $X$ and $\overline{X}$ gives the same value for $v$ and $v_j$ if and
only if $y_j=0$.
This fact, together with
$Y[v](v)=Y[v_j](v_j)=0$, implies that
$Y[v]$ is identical to $Y[v_j]$ if and only if $y_j=0$.
It follows that, if $y_j=0$, 
$\widetilde{T}^{i-1}_{W[v_j]}(v_j)$ is isomorphic to
$\widetilde{T}^{i-1}_{W[v]}(v_j)$, and otherwise 
$\widetilde{T}^{i-1}_{\overline{W[v_j]}}(v_j)$ is isomorphic to
$\widetilde{T}^{i-1}_{W[v]}(v_j)$.
In the latter case, the party corresponding to $v$
negates the first
elements of all node labels in $\widetilde{T}^{i-1}_{W[v_j]}(v_j)$
to obtain $\widetilde{T}^{i-1}_{\overline{W[v_j]}}(v_j)$.
Thus step 3 can construct $\widetilde{T}^{i}_{W[v]}(v)$.
This completes the proof.
\end{proof}
\begin{lemma}
  Let $\abs{E}$ and $D$ be the number of edges and the maximum degree,
  respectively, of the underlying graph of an $n$-party
  distributed system. Subroutine~\texttilde{A} takes $O(D^2n^3(\log n)^2)$ time for
  each party, and incurs $O(n)$ rounds with classical communication of $O(D|E|n^3\log
  D)$ bits.
\label{lm:subroutinetildeA-complexity}
\end{lemma}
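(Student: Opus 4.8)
The plan is to reduce the claim to Theorem~\ref{th:f-view}, since Subroutine~\texttilde{A} is essentially the f-view construction of that theorem run to depth $h=n-1$, augmented with two cheap local passes (the label negation in step~2.2 and the consistency test in step~3). First I would recall, from the proof of Lemma~\ref{lm:subroutinetildeA-correctness}, that steps~1 and~2 construct the minimal f-view $\widetilde{T}^{\,n-1}_{W[v]}(v)$ for one of the mappings $(X,Z)$ or $(\overline{X},Z)$; hence the dominant cost is exactly that of constructing a depth-$(n-1)$ f-view. The node labels here are pairs $(Y[v](v),\status)\in\{0,1\}\times\{\eligible,\ineligible\}$, i.e.\ constant-bit values, so the hypothesis of Theorem~\ref{th:f-view} is satisfied.

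Next I would bound the overhead of the modifications. Recall that an f-view of depth $h$ has at most $hn$ nodes. In iteration $i$ (for $1\le i\le n-1$), each party receives from each of its at most $D$ neighbors an f-view of depth $i-1$, on which step~2.2 performs a single scan negating the first coordinate of every label; since the labels are of constant size, one such negation costs $O(in)$, and summing $O(Din)$ over all iterations gives $O(Dn^3)$ in total. The final step~3 scans the depth-$(n-1)$ f-view, which has $O(n^2)$ nodes, once for the two target labels, at a cost of $O(n^2)$. Both quantities are dominated by the construction time below, and neither contributes any communication.

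Finally I would instantiate Theorem~\ref{th:f-view} at $h=n-1$. This yields time $O(D^2(n-1)^2 n(\log n)^2)=O(D^2n^3(\log n)^2)$ for each party, $O(n-1)=O(n)$ rounds, and $O(D\abs{E}(n-1)^2 n\log D)=O(D\abs{E}n^3\log D)$ bits of classical communication; the overheads of the previous paragraph do not affect these asymptotics, which establishes the lemma. The one point deserving explicit mention — and the only mild obstacle — is that the non-uniformity of the mapping $W[v]$ across parties does not change the complexity: the reconciliation of a neighbour's f-view to the local mapping is handled entirely by the node-local negation of step~2.2, whose cost was just shown to be negligible, so the structural recursion underlying Theorem~\ref{th:f-view} carries over unchanged.
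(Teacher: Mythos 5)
Your proposal is correct and follows essentially the same route as the paper's own proof: both reduce the cost of steps~1 and~2 to Theorem~\ref{th:f-view} instantiated at depth $h=n-1$ (noting the labels are constant-bit), and both check that the extra label-negation pass of step~2.2 and the final scan in step~3 are local, communication-free, and asymptotically dominated by the f-view construction. Your accounting of the step~2.2 overhead is slightly more explicit than the paper's, but the conclusion is identical.
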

\begin{proof}
Steps 1 and 2 are basically the f-view construction algorithm in
Figure~\ref{fig:f-view construction} in subsection~\ref{subsec:minimal-folded-view} except step 2.2;  this step takes
$O(Dn^2)$ time
since an f-view of depth $O(n)$ has 
$O(Dn^2)$ edges. 
Thus, steps 1 and 2 take
$O(D^2n^3(\log n)^2)$ time for
  each party, incur $O(n)$ rounds and exchange $O(D|E|n^3\log D)$ bits by
Theorem~\ref{th:f-view}.
Step 3 takes $O(Dn^2)$ time.
\end{proof}

\subsubsection{Subroutine~\texttilde{B}:}
Suppose that ${\ket{\phi}=(\ket{x}+\ket{\overline{x}})/\sqrt{2}}$
shared by the $n$ parties
is consistent over the set $S$ of the indices of eligible parties.
Subroutine~\texttilde{B} has the goal of transforming $\ket{\phi}$ into an
inconsistent state over $S$. Let $k$ be $\abs{S}$.
First every ineligible party measures its qubit in the $\{
\ket{+},\ket{-}\}$ basis, where $\ket{+}$ and $\ket{-}$ denote
$(\ket{0}+\ket{1})/\sqrt{2}$ and $(\ket{0}-\ket{1})/\sqrt{2}$, respectively.
As a result,
the state shared by the eligible parties
becomes either
${\pm (\ket{0}^{\otimes k} + \ket{1}^{\otimes k})/\sqrt{2}}$
or
${\pm (\ket{0}^{\otimes k} - \ket{1}^{\otimes k})/\sqrt{2}}$.
The state 
${\pm (\ket{0}^{\otimes k} - \ket{1}^{\otimes k})/\sqrt{2}}$
is shared
if and only if
the number of ineligible parties
that measured $\ket{-}$ is odd, as will be proved in Lemma~\ref{lm:n-cat to k-cat}.
In this case,
every eligible party
applies unitary operator $W_k$ to its qubit
so that the shared state is transformed into 
${\pm (\ket{0}^{\otimes k} + \ket{1}^{\otimes k})/\sqrt{2}}$,
where the family $\{ W_k \}$ of unitary operators is defined by
\[
  W_k
  =
  \begin{pmatrix}
  1 & 0\\
  0 & e^{i \frac{\pi}{k}}
  \end{pmatrix}.
\]

Again let $v$ denote the node corresponding to
the party that invokes the subroutine.
Figure~\ref{Figure: Subroutine tildeB} gives a precise description of Subroutine~\texttilde{B}.
The correctness and complexity of the subroutine will be described in
Lemmas~\ref{lm:subroutinetildeB-correctness}~and~\ref{lm:subroutinetildeB-complexity},
respectively.
\begin{figure}[t]
\begin{center}
\hrulefill\\
  \textbf{Subroutine~\texttilde{B}}
\vspace{-2mm}
\begin{description}
\setlength\itemsep{-3pt}
\item[Input:]
one-qubit quantum registers ${\bfR _0, \bfR _1}$,
a classical variable $\status\in \{ \mbox{``$\eligible$''}, \mbox{``$\ineligible$''}\}$,
integers ${k, n, d}$
\item[Output:]
one-qubit quantum registers ${\bfR _0, \bfR _1}$
\end{description}
\vspace{-6mm}
\begin{enumerate}
\setlength\itemsep{-3pt}
  \item
    Let ${w:=0}$.
  \item
    If ${\status = \mbox{``$\ineligible$,''}}$
    measure the qubit in $\bfR_0$
    in the $\{ \ket{+}, \ket{-} \}$ basis.\\
    If this results in $\ket{-}$, let ${w:=1}$.
  \item
    Construct f-view $\widetilde{T}^{(2n-1)}_{W}(v)$
    to count the number $p$ of parties with ${w=1}$,
    where $W$ is the underlying mapping
    naturally induced by the $w$ values of all parties.
  \item
    If $p$ is odd and ${\status = \mbox{``$\eligible$,''}}$
    apply $W_k$ to the qubit in $\bfR_0$.
  \item If ${\status = \mbox{``$\eligible$,''}}$ perform Subroutine~B with $\bfR _0$, $\bfR _1$ and $k$.
  \item
    Output quantum registers $\bfR_0$ and $\bfR _1$.
\end{enumerate}
\vspace{-1\baselineskip}
\hrulefill
\end{center}
\vspace{-1\baselineskip}
\caption{Subroutine~\texttilde{B}.}  
\label{Figure: Subroutine tildeB}
\end{figure}

\begin{lemma}
\label{lm:subroutinetildeB-correctness}
Suppose that the $n$ parties share $n$-qubit cat-like
state $\ket{\phi}=(\ket{x}+\ket{\overline{x}})/{\sqrt{2}}$, where $x$
is any $n$-bit string that is consistent over $S$, 
and $\overline{x}$ is the bitwise negation of $x$.
If each party $l$ runs Subroutine~\texttilde{B} with the following objects as input:
\begin{itemize}
  \item one-qubit register $\bfR_0$, which stores one of the $n$ qubits in
  state $\ket{\phi}$,
  \item one-qubit register $\bfR_1$, which is initialized to $\ket{0}$,
  \item a classical variable $\status$, the value of which is 
  ``$\eligible$'' if $l$ is in $S$ and ``$\ineligible$''
  otherwise,
  \item integers $k$, $n$, and the number $d_l$ of neighbors of party $l$,
\end{itemize}
Subroutine~\texttilde{B} outputs two one-qubit registers ${\bfR _0, \bfR_1}$
such that, if given $k$ is equal to $\abs{S}$, the qubits in the registers
satisfy the conditions:
\begin{itemize}
\item the $2k$ qubits possessed by all parties $l'$ for $l'\in S$ are in an inconsistent state
over $S$,
\item the $2(n-k)$ qubits possessed by all parties $l'$ for $l'\not\in S$ are in a classical
  state (as a result of measurement).
\end{itemize}
\end{lemma}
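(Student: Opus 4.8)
The plan is to walk through the five steps of Subroutine~\texttilde{B} in order, tracking the joint state of the $2n$ qubits held in the registers $\bfR_0$ and $\bfR_1$ of all parties, and to reduce the two genuinely quantum steps (the Hadamard-basis measurement and the phase correction) to the already-established correctness of Subroutine~B, i.e.\ Lemma~\ref{lm:subroutineB-correctness}.

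First I would analyze step~2. Since $x$ is consistent over $S$, its restriction to the eligible indices is a single constant bit $b$, so the restriction of $\ket{x}$ to $S$ is $\ket{b}^{\otimes k}$ and that of $\ket{\overline{x}}$ is $\ket{\overline{b}}^{\otimes k}$. Writing $\ket{\phi}$ as a superposition of its eligible and ineligible parts and re-expressing each ineligible qubit in the Hadamard basis via $\ket{c}=(\ket{+}+(-1)^{c}\ket{-})/\sqrt{2}$, I would show that conditioning on a measurement outcome in which exactly $p$ of the ineligible parties observe $\ket{-}$ projects the ineligible qubits onto a product of $\ket{+}/\ket{-}$ states (hence disentangles them and leaves them, together with $\bfR_1=\ket{0}$, in the definite post-measurement states claimed for $l'\notin S$), while collapsing the eligible qubits, up to an irrelevant global phase, onto $(\ket{0}^{\otimes k}+(-1)^{p}\ket{1}^{\otimes k})/\sqrt{2}$. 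The relative sign is exactly $(-1)^{p}$ because passing from $x$ to $\overline{x}$ flips every ineligible bit and therefore contributes a factor $(-1)^{\abs{M}}=(-1)^{p}$, where $M$ is the set of ineligible parties observing $\ket{-}$. This sign-tracking is the crux of the argument, and it is precisely the statement I would isolate as Lemma~\ref{lm:n-cat to k-cat}.

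Next I would verify steps~3 and~4. By Theorem~\ref{th:f-view}, constructing $\widetilde{T}^{(2n-1)}_{W}(v)$ from the mapping $W$ induced by the $w$-values lets every party compute the exact number $p$ of parties with $w=1$, and in particular its parity, using only the knowledge of $n$. If $p$ is even, the eligible qubits already form the $k$-cat state $(\ket{0}^{\otimes k}+\ket{1}^{\otimes k})/\sqrt{2}$ and nothing is done; if $p$ is odd, each eligible party applies $W_k=\mathrm{diag}(1,e^{i\pi/k})$, which multiplies the $\ket{1}^{\otimes k}$ component by $(e^{i\pi/k})^{k}=-1$ and so turns $(\ket{0}^{\otimes k}-\ket{1}^{\otimes k})/\sqrt{2}$ into $(\ket{0}^{\otimes k}+\ket{1}^{\otimes k})/\sqrt{2}$. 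Hence, after step~4 the $k$ eligible parties share a genuine $k$-cat state (up to global phase), independently of the parity of $p$.

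Finally, since by hypothesis $k=\abs{S}$ and each eligible party holds a fresh ancilla $\ket{0}$ in $\bfR_1$, step~5 feeds a $k$-cat state to Subroutine~B, and Lemma~\ref{lm:subroutineB-correctness} immediately yields that the $2k$ qubits of the eligible parties are inconsistent over $S$. The leftover global phase from step~2 is harmless, since inconsistency concerns only which computational-basis strings carry nonzero amplitude. Combined with the disentanglement established in step~2, this gives both asserted properties. The only delicate point is the phase bookkeeping in step~2 --- making sure the two terms $\ket{x}$ and $\ket{\overline{x}}$ contribute relative sign exactly $(-1)^{p}$ and that the ineligible registers factor out cleanly --- while the remaining steps are short phase computations and a direct appeal to the correctness of Subroutine~B.
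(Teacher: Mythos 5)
Your proposal is correct and follows essentially the same route as the paper: step~2 is handled by exactly the statement the paper isolates as Lemma~\ref{lm:n-cat to k-cat} (the post-measurement eligible state is $(\ket{0}^{\otimes k}\pm\ket{1}^{\otimes k})/\sqrt{2}$ with sign $(-1)^{p}$), step~4's $W_k$ fixes the sign since $(e^{i\pi/k})^{k}=-1$, and step~5 reduces to Lemma~\ref{lm:subroutineB-correctness}. The only cosmetic difference is that you establish the sign-tracking sub-lemma by a direct Hadamard-basis expansion of $\ket{x}$ and $\ket{\overline{x}}$, whereas the paper proves it by peeling off one ineligible qubit at a time and inducting; both are sound.
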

\begin{proof}
Lemma~\ref{lm:n-cat to k-cat} guarantees that, after step 2,
the eligible parties (i.e., the parties who have $\status=\mbox{``$\eligible$''}$) share 
$(\ket{0}^{\otimes k}+\ket{1}^{\otimes k})/{\sqrt{2}}$
($(\ket{0}^{\otimes k}-\ket{1}^{\otimes k})/{\sqrt{2}}$)
if the number of those parties who  have measured
$\ket{-}$
is even (respectively, odd).
When the eligible parties share 
$(\ket{0}^{\otimes k}-\ket{1}^{\otimes k})/{\sqrt{2}}$,
step 4 transforms the shared state
into 
$(\ket{0}^{\otimes k}+\ket{1}^{\otimes k})/{\sqrt{2}}$.
Due to Lemma~\ref{lm:subroutineB-correctness}, the eligible parties share an
inconsistent state over $S$ after step 5. This completes the proof.
\end{proof}

\begin{lemma}
\label{lm:subroutinetildeB-complexity}
  Let $\abs{E}$ and $D$ be the number of edges of the underlying graph of an $n$-party
  distributed system. Subroutine~\texttilde{B} takes $O(Dn^5\log n)$ time for
  each party, takes $O(n)$ rounds and requires $O(D|E|n^3\log D)$-bit
communication.
\end{lemma}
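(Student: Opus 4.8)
The plan is to bound the cost of Subroutine~\texttilde{B} by a step-by-step accounting, noting that only step~3 is expensive and everything else is either a constant-cost local operation or an already-analyzed subroutine. First I would dispatch the cheap steps: steps~1 and~2 consist of at most a single local measurement together with constant bookkeeping, so they cost $O(1)$ time and no communication; step~4 applies the one-qubit unitary $W_k$ at most once, again $O(1)$ time; and step~5 calls Subroutine~B, which by Lemma~\ref{lm:subroutineB-complexity} runs in $O(1)$ time with no communication. Consequently all rounds, all communication, and the dominant running time arise in step~3.

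For step~3 I would invoke Theorem~\ref{th:f-view} directly. Each party's value $w$ is a single bit, so the node labels are constant-bit values and the hypothesis of the theorem holds. The f-view built here has depth $h = 2n-1 = O(n)$; substituting $h = O(n)$ into the theorem yields construction time $O(D^2 n^3(\log n)^2)$, round complexity $O(n)$, and classical communication of $O(D\abs{E}n^3\log D)$ bits. To obtain the parity-determining count $p$ of parties with $w=1$, I would observe that $p = c_W(1) = n\abs{\Gamma_W(1)}/\abs{\Gamma_W}$; since the depth-$(2n-1)$ f-view already contains $\widetilde{T}^{2(n-1)}_W(v)$, the quantities $\abs{\Gamma_W}$ and $\abs{\Gamma_W(1)}$ are extracted by the second part of Theorem~\ref{th:f-view} in $O(Dn^5\log n)$ additional time and no further communication.

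Finally I would combine the bounds. The round and communication totals are governed entirely by the construction phase, namely $O(n)$ rounds and $O(D\abs{E}n^3\log D)$ bits, matching the statement. The running time is the sum $O(D^2 n^3(\log n)^2) + O(Dn^5\log n)$ of construction and extraction. The one point deserving explicit verification---and essentially the only non-routine part of the argument---is that the extraction term dominates: dividing both terms by $Dn^3\log n$ leaves $D\log n$ against $n^2$, and since $D \le n-1$ we have $D\log n = O(n^2)$, so the sum collapses to $O(Dn^5\log n)$, as claimed.
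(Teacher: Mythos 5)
Your proposal is correct and follows essentially the same route as the paper: the paper's proof likewise observes that all steps other than step~3 are $O(1)$ local operations (citing Lemma~\ref{lm:subroutineB-complexity} for step~5) and then appeals directly to Theorem~\ref{th:f-view} for the f-view construction and counting costs. The only difference is that you make explicit the arithmetic check that the extraction term $O(Dn^5\log n)$ dominates the construction term $O(D^2n^3(\log n)^2)$, which the paper leaves implicit.
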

\begin{proof}
  Since Subroutine~B takes $O(1)$ time and does no communication, 
step 3 is dominant. The proof is completed by
Theorem~\ref{th:f-view}.
\end{proof}

\begin{lemma}
\label{lm:n-cat to k-cat}
Let $S$ be an arbitrary subset of $\{ 1,2,\ldots n\}$ parties such that $|S|=k$.
Suppose that $n$ parties share $n$-qubit cat-like
state $(\ket{x}+\ket{\overline{x}})/\sqrt{2}$, where $x$ is any
$n$-bit string that is consistent over $S$, 
and $\overline{x}$ is the bitwise negation of $x$.
If every party $l$ for $l\not\in S$ measures his qubit with respect to the Hadamard
basis 
$\{\ket{+},\ket{-}\}$,
the resulting state is
$(\ket{0}^{\otimes k} +\ket{1}^{\otimes k})/\sqrt{2}$
($(\ket{0}^{\otimes k}-\ket{1}^{\otimes k})/\sqrt{2}$)
when the number of those parties 
is even (respectively, odd) who have measured
$\ket{-}$.
\end{lemma}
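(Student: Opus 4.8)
The plan is to separate the $n$ qubits into those held by the $k$ parties in $S$ and those held by the $n-k$ parties outside $S$, use the consistency of $x$ over $S$ to pin down the $S$-part of the two superposed strings, and then evaluate the Hadamard-basis measurements on the remaining qubits by a direct computation of inner products.

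First, since $x$ is consistent over $S$, all bits of $x$ indexed by $S$ share a common value $b\in\{0,1\}$, so the bits of $\overline{x}$ indexed by $S$ are all $\overline{b}=1-b$. Writing each basis string as its $S$-part followed by its complementary part and letting $x_l$ denote the $l$th bit of $x$, the shared state takes the form
\[
\frac{1}{\sqrt2}\Bigl(\ket{b}^{\otimes k}\bigotimes_{l\notin S}\ket{x_l} + \ket{\overline{b}}^{\otimes k}\bigotimes_{l\notin S}\ket{\overline{x}_l}\Bigr),
\]
which cleanly isolates the $k$ qubits of the parties in $S$ as a superposition of two product states.

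Next I would compute the (unnormalized) post-measurement state on the $S$-qubits for an arbitrary outcome of the Hadamard measurements performed by the parties outside $S$; the key point is that the measurement projector factorizes across the complementary qubits. Using $\langle +|c\rangle = 1/\sqrt2$ and $\langle -|c\rangle = (-1)^c/\sqrt2$ for $c\in\{0,1\}$, an outcome $\ket{+}$ contributes a factor $1/\sqrt2$ regardless of the underlying bit, whereas an outcome $\ket{-}$ on a qubit carrying bit $c$ contributes $(-1)^c/\sqrt2$. Letting $P\subseteq\{1,\dots,n\}\setminus S$ be the set of parties that observe $\ket{-}$ and $p=\abs{P}$, the coefficient of $\ket{b}^{\otimes k}$ becomes $2^{-(n-k)/2}(-1)^{\sum_{l\in P}x_l}$ and that of $\ket{\overline{b}}^{\otimes k}$ becomes $2^{-(n-k)/2}(-1)^{\sum_{l\in P}\overline{x}_l}$.

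The decisive algebraic step is to relate these two phases via $\overline{x}_l=1-x_l$: since $\sum_{l\in P}\overline{x}_l = p-\sum_{l\in P}x_l$, we get $(-1)^{\sum_{l\in P}\overline{x}_l}=(-1)^p(-1)^{\sum_{l\in P}x_l}$. Factoring out the common, physically irrelevant phase and renormalizing, the state left on the $S$-qubits is $\bigl(\ket{b}^{\otimes k}+(-1)^p\ket{\overline{b}}^{\otimes k}\bigr)/\sqrt2$, which depends on the outcome only through the parity of $p$. Finally I would settle the two cases for $b$: for $b=0$ this is exactly $(\ket{0}^{\otimes k}+(-1)^p\ket{1}^{\otimes k})/\sqrt2$, and for $b=1$ the same expression results after extracting one further global phase $(-1)^p$ (using $(-1)^{-p}=(-1)^p$). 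In both cases the relative sign is $+$ precisely when $p$ is even and $-$ precisely when $p$ is odd, giving the claim. The one point requiring care is the bookkeeping of phases: one must keep the global phase (which does not change the physical state) separate from the relative sign $(-1)^p$ (which does), and verify that the $b=1$ case collapses onto the $b=0$ case up to such a global phase.
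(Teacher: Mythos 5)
Your proof is correct, but it takes a genuinely different route from the paper's. The paper proves two one-step claims about measuring a \emph{single} qubit of a state of the form $(\ket{z_1\dots z_m}\pm\ket{\overline{z}_1\dots\overline{z}_m})/\sqrt{2}$ in the Hadamard basis (outcome $\ket{+}$ preserves the relative sign, outcome $\ket{-}$ flips it, up to global phase), and then obtains the lemma by induction on the number of measured parties, so the parity dependence emerges as an accumulation of sign flips. You instead collapse all $n-k$ measurements into a single projection, compute the two amplitudes $2^{-(n-k)/2}(-1)^{\sum_{l\in P}x_l}$ and $2^{-(n-k)/2}(-1)^{\sum_{l\in P}\overline{x}_l}$ directly, and extract the relative sign $(-1)^p$ from the identity $\sum_{l\in P}\overline{x}_l = p-\sum_{l\in P}x_l$. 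Your version is more direct and exhibits the parity dependence in one closed formula, and it also makes explicit that the claimed post-measurement state holds only up to a global phase (the $b=1$ case), a point the paper relegates to the phrase ``up to global phases'' in its claim (b); the paper's version is more modular, since its single-measurement claims are stated for arbitrary $z$ and are reusable, at the cost of carrying the $\pm$ case split through the induction. Both arguments are sound and establish the same statement.
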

\begin{proof}
From the next two claims (a) and (b), and the induction on the number of parties,
the lemma follows.
  \begin{itemize}
  \item [(a)] If $m$ parties share $m$-qubit 
state $(\ket{z_1\dots z_m}+\ket{\overline{z}_1\dots
  \overline{z}_m})/\sqrt{2}$ for any $z_i\in \{0,1\}$ $(i=1,\dots ,m)$
($\overline{z}_i$ is the negation of $z_i$)
and the last party measures his qubit with respect to the Hadamard
basis, then
the resulting state is
$(\ket{z_1\dots z_{m-1}}+\ket{\overline{z}_1\dots
  \overline{z}_{m-1}})/\sqrt{2}$
($(\ket{z_1\dots z_{m-1}}-\ket{\overline{z}_1\dots
  \overline{z}_{m-1}})/\sqrt{2}$)
when he measured $\ket{+}$ ($\ket{-}$).
  \item [(b)] If $m$ parties share $m$-qubit 
state $(\ket{z_1\dots z_m}-\ket{\overline{z}_1\dots
  \overline{z}_m})/\sqrt{2}$ for any $z_i\in \{0,1\}$ $(i=1,\dots ,m)$,
and the last party measures his qubit with respect to the Hadamard
basis, then
the resulting state is
$(\ket{z_1\dots z_{m-1}}-\ket{\overline{z}_1\dots
  \overline{z}_{m-1}})/\sqrt{2}$
($(\ket{z_1\dots z_{m-1}}+\ket{\overline{z}_1\dots
  \overline{z}_{m-1}})/\sqrt{2}$) up to global phases
when he measured $\ket{+}$ ($\ket{-}$).
  \end{itemize}

We first prove claim (a). By simple calculation, we have
\[
\frac{\ket{z_1\dots z_m}+\ket{\overline{z}_1\dots
  \overline{z}_m}}{\sqrt{2}}
=\frac{\ket{z_1\dots z_{m-1}}+\ket{\overline{z}_1\dots
  \overline{z}_{m-1}}}{\sqrt{2}}\ket{+} + 
 \frac{\ket{z_1\dots z_{m-1}}-\ket{\overline{z}_1\dots
  \overline{z}_{m-1}}}{\sqrt{2}}\ket{-}.
\]
Thus, claim (a) follows.

Similarly, claim (b) is proved by the next equation:
\[
\frac{\ket{z_1\dots z_m}-\ket{\overline{z}_1\dots
  \overline{z}_m}}{\sqrt{2}}
=\frac{\ket{z_1\dots z_{m-1}}-\ket{\overline{z}_1\dots
  \overline{z}_{m-1}}}{\sqrt{2}}\ket{+} +(-1)^{z_m} 
 \frac{\ket{z_1\dots z_{m-1}}+\ket{\overline{z}_1\dots
  \overline{z}_{m-1}}}{\sqrt{2}}\ket{-}.
\]
\end{proof}

\subsubsection{Subroutine~\texttilde{C}:}
Suppose that each party $l$ has value $z_l$.
Subroutine~C is a classical algorithm
that computes value $z_{\minor}$
such that the number of parties with value $z_{\minor}$
is non-zero and the smallest among all possible non-negative $z_l$ values.
It is stressed that 
the number of parties with value $z_{\minor}$
is at most half the number of parties having non-negative $z_l$
values,
and that the parties having non-negative $z_l$ values are eligible 
from the construction of Algorithm~II.
Figure~\ref{Figure: Subroutine tildeC} gives a precise description of Subroutine~\texttilde{C}.

\begin{figure}[t]
\begin{center}
\hrulefill\\
  \textbf{Subroutine~\texttilde{C}}
\vspace{-2mm}
\begin{description}
\setlength\itemsep{-3pt}
\item[Input:]
integers $z\in \{ -1,0,1,2,3 \}$, $n$, $d$
\item[Output:]
integers $z_{\minor}$, $c_{z_{\minor}}$
\end{description}
\vspace{-6mm}
\begin{enumerate}
\setlength\itemsep{-3pt}
  \item
    Construct f-view $\widetilde{T}^{(2n-1)}_{Z}(v)$,
    where $Z$ is the underlying mapping
    naturally induced by the $z$ values of all parties.
  \item
    For ${i:=0}$ to ${3}$,
    count the number, $c_i$, of parties having a value ${z=i}$
    using $\widetilde{T}^{(2n-1)}_{Z}(v)$.\\
    If ${c_i = 0}$, let ${c_i := n}$.
  \item
  Let $z_{\minor}\in \{ m \mid c_m=\min_{0 \leq i \leq 3} c_i\}$.
  \item
    Output $z_{\minor}$ and $c_{z_{\minor}}$.
\end{enumerate}
\vspace{-1\baselineskip}
\hrulefill
\end{center}
\vspace{-1\baselineskip}
\caption{Subroutine~\texttilde{C}}
\label{Figure: Subroutine tildeC}
\end{figure}

The next two lemmas give the correctness and complexity of Subroutine~\texttilde{C}.
\begin{lemma}
  \label{lm:subroutinetildeC-correctness}
Suppose that each party $l$ among $n$ parties has an integer $z_l\in
\{ -1,0,1,2,3\}$.
If every party $l$ runs Subroutine~\texttilde{C} with $z_l$, $n$ and the number $d_l$ of
neighbors as input, Subroutine~\texttilde{C} outputs 
$z_{\minor}\in \{ z_1,\ldots ,z_n\}\setminus \{ -1\}$, and $c_{z_{\minor}}$ such that
the number $c_{z_{\minor}}$ of parties having $z_{\minor}$ is
not more than that of parties having any other $z_l$
(ties are broken arbitrarily). 
\end{lemma}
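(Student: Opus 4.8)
The plan is to reduce the correctness of Subroutine~\texttilde{C} almost entirely to Theorem~\ref{th:f-view} and then verify the elementary bookkeeping performed in its steps~2 and~3. First I would observe that step~1 has each party $v$ build $\widetilde{T}^{2(n-1)}_{Z}(v)$, where $Z$ is the labeling induced by the values $z_l$. By Theorem~\ref{th:f-view}, from this single f-view every party can compute $\abs{\Gamma_Z}$ and $\abs{\Gamma_Z(S)}$ for every subset $S$ of the range of $Z$ without further communication. Taking $S=\{i\}$ for each $i\in\{0,1,2,3\}$ and using the identity $c_Z(\{i\})=n\abs{\Gamma_Z(\{i\})}/\abs{\Gamma_Z}$ recalled in Subsection~\ref{subsec: view}, the quantity $c_i$ produced in step~2 is exactly the number of parties whose value equals $i$. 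I would stress here that $\abs{\Gamma_Z}$ and $\abs{\Gamma_Z(\{i\})}$ are global quantities independent of the root $v$; hence every party computes the same tuple $(c_0,c_1,c_2,c_3)$ and, after step~3, necessarily agrees on the same $z_{\minor}$. This agreement is what makes the subsequent minority vote well defined across the network.

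Next I would analyze the reassignment ``if $c_i=0$ let $c_i:=n$'' together with the choice of $z_{\minor}$ in step~3. Its purpose is to prevent a value occurring nowhere in the network (true count $0$) from being selected as the minority. Since at least one party is eligible, at least one $z_l$ lies in $\{0,1,2,3\}$, so at least one of the four true counts is nonzero and $\{z_1,\dots,z_n\}\setminus\{-1\}$ is nonempty. Every genuinely occurring value has true count between $1$ and $n$, whereas every non-occurring value has its count raised to $n$; consequently the minimum taken in step~3 is attained at a value that genuinely occurs, so $z_{\minor}\in\{z_1,\dots,z_n\}\setminus\{-1\}$ as claimed. By minimality, the stored count $c_{z_{\minor}}$ (which, for an occurring value, was never reassigned and hence equals the true count) is no larger than the count of any other occurring value $z_l$, which is precisely the asserted minority property with ties broken arbitrarily.

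Since Theorem~\ref{th:f-view} may be assumed, the substantive content of the lemma is already delivered, and the hard part is only a degenerate corner case: if a single value is shared by all $n$ parties, then after the reassignment every $c_i$ equals $n$ and step~3 could in principle return a non-occurring value. In the setting where Subroutine~\texttilde{C} is invoked this cannot happen, because the labels $z_l$ are obtained by measuring a state that is inconsistent over the eligible set (guaranteed by Subroutine~\texttilde{B}); thus at least two distinct values occur among the eligible parties, and at least one occurring value has count strictly below $n$, forcing the minimum to be attained at an occurring value. I would make this hypothesis explicit and conclude that under it the selection in step~3 always yields a valid minority value, completing the proof.
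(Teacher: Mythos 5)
Your proof is correct and follows essentially the same route as the paper's: invoke Theorem~\ref{th:f-view} (equivalently, the identity ${c_Z(\{i\})=n\abs{\Gamma_Z(\{i\})}/\abs{\Gamma_Z}}$) to conclude that step~2 computes the true counts, and then observe that the reassignment ${c_i:=n}$ for non-occurring values forces the minimum in step~3 onto a value that actually occurs. You are in fact slightly more careful than the paper's own proof, which does not address the degenerate case in which a single value is held by all $n$ parties (there the reassignment argument alone would fail, since every $c_i$ would equal $n$); your observation that this case is excluded because the $z_l$ arise from measuring a state that is inconsistent over the eligible set is a genuine, if minor, repair.
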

\begin{proof}
The first line of step 2 in Figure~\ref{Figure: Subroutine tildeC} counts the number $c_i$ of parties having $i$
as $z$ for each $i\in \{0,1,2,3\}$ by using f-view.
Since $c_i=0$ implies $z_{\minor}\neq i$, $c_i$ is set to $n$ so that
$i$ cannot be selected as $z_{\minor}$ in step 3. 
Thus, $z_{\minor}$ is selected among $\{ z_1,\ldots ,z_n\}\setminus \{ -1\}$.
\end{proof}
\begin{lemma}
  \label{lm:subroutinetildeC-complexity}
  Let $\abs{E}$ and $D$ be the number of edges and the maximum degree of the underlying graph of an $n$-party
  distributed system. Subroutine~\texttilde{C} takes $O(Dn^5\log n)$ time for
  each party, takes $O(n)$ rounds, and requires $O(D|E|n^3\log D)$-bit
communication.
\end{lemma}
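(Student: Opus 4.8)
The plan is to observe that Subroutine~\texttilde{C} performs essentially one nontrivial operation---constructing the f-view $\widetilde{T}^{(2n-1)}_{Z}(v)$ in step~1---and then extracts the four counts $c_i$ from that f-view by purely local computation in step~2; steps~3 and~4 are constant-time local bookkeeping over the four candidate values $i \in \{0,1,2,3\}$. So I would reduce the entire analysis to an application of Theorem~\ref{th:f-view}, exactly as in the proofs of Lemmas~\ref{lm:subroutinetildeA-complexity} and~\ref{lm:subroutinetildeB-complexity}.

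First I would note that the label of each node is the value $z \in \{-1,0,1,2,3\}$, a constant-bit quantity, so the hypothesis of Theorem~\ref{th:f-view} is met. Taking $h = 2n-1 = O(n)$, the theorem gives that step~1 constructs $\widetilde{T}^{(2n-1)}_{Z}(v)$ in $O(D^2 n^3(\log n)^2)$ time for each party, in $O(n)$ rounds, with $O(D\abs{E}n^3\log D)$ bits of classical communication. Since this is the only step that involves message passing, the round count and the communication cost of the whole subroutine are already $O(n)$ and $O(D\abs{E}n^3\log D)$, respectively.

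Next I would account for the counting in step~2. Computing each $c_i$ amounts to evaluating $c_Z(i) = n\abs{\Gamma_Z(i)}/\abs{\Gamma_Z}$ from the already-built f-view, which by the second half of Theorem~\ref{th:f-view} costs $O(Dn^5\log n)$ time and requires no further communication. As there are only four values of $i$, the total counting time remains $O(Dn^5\log n)$. Adding the $O(D^2 n^3(\log n)^2)$ construction time and using $D \le n-1$ (so that $D^2 n^3(\log n)^2 = O(Dn^5\log n)$), the overall time per party is $O(Dn^5\log n)$, which establishes the claimed bounds.

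I expect no genuine obstacle here: the lemma is a direct corollary of Theorem~\ref{th:f-view}, and the only point needing a line of care is verifying that the counting term $O(Dn^5\log n)$ dominates the construction term $O(D^2 n^3(\log n)^2)$, which follows immediately from $D \le n-1$.
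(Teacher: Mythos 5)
Your proposal is correct and follows exactly the paper's route: the paper's own proof is the one-liner ``Steps 1 and 2 are dominant. The proof is completed by Theorem~\ref{th:f-view},'' and you have simply filled in the instantiation $h=2n-1=O(n)$ and the (correct) observation that $D\le n-1$ makes the construction term $O(D^2n^3(\log n)^2)$ dominated by the counting term $O(Dn^5\log n)$.
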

\begin{proof}
Steps 1 and 2 are dominant. The proof is completed by Theorem~\ref{th:f-view}.
\end{proof}

\subsection{Complexity analysis}

Now we prove Theorem~\ref{Theorem: Complexity of Algorithm II for n}.
\setcounter{tmptheorem}{\value{theorem}}
\setcounter{theorem}{2}
\addtocounter{theorem}{-1}
\begin{theorem}
Let $\abs{E}$ and $D$ be the number of edges and the maximum degree of the
underlying graph, respectively.
Given the number $n$ of parties,
Algorithm~II exactly elects a unique leader
in $O(Dn^5(\log n)^2)$ time and $O(n\log n)$ rounds
of which only the first round requires quantum communication.
The total communication complexity over all parties
is $O(D\abs{E} n^3(\log D)\log n)$
which includes the communication of only $O(\abs{E} \log n)$ qubits.
\label{Theorem: Complexity of Algorithm II for n}
\end{theorem}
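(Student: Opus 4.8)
The plan is to establish exact correctness first, by a loop invariant over the phases of Stage~2, and then to read off the three complexity bounds by summing the per-subroutine costs already proved. For correctness I would begin by invoking Lemma~\ref{lm:subroutineQ-correctness} to guarantee that after Stage~1 the $n$ parties share, for each ${i \in \{1,\dots,s\}}$ with ${s = \ceil{\log n}}$, the cat-like state ${\ket{\phi^{(i)}} = (\ket{x^{(i)}} + \ket{\overline{x}^{(i)}})/\sqrt{2}}$, together with the classical strings $y^{(i)}$ that drive the folded-view computations of Stage~2. Since the $s$ states occupy disjoint registers $\bfR_0^{(i)}$, phase $i$ consumes only $\ket{\phi^{(i)}}$ and leaves the others untouched, so the phases can be analysed independently.

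The key invariant I would carry through the induction is that, just before phase $i$, the parameter $k$ equals $\abs{S_i}$, the \emph{exact} number of eligible parties. This holds initially since ${k = n = \abs{S_1}}$. For the inductive step I would split on the output of Subroutine~\texttilde{A}. If \texttilde{A} reports ``$\inconsistent$'' (Lemma~\ref{lm:subroutinetildeA-correctness}), then $x^{(i)}$ is inconsistent over $S_i$, so measuring $\bfR_0^{(i)}$ (with $\bfR_1^{(i)}$ still $\ket{0}$) yields labels that are not all equal among eligible parties, i.e.\ at least two distinct labels occur. If \texttilde{A} reports ``$\consistent$'', then $x^{(i)}$ is consistent over $S_i$, and because the invariant supplies ${k = \abs{S_i}}$, Lemma~\ref{lm:subroutinetildeB-correctness} (via Lemma~\ref{lm:n-cat to k-cat} and Lemma~\ref{lm:subroutineB-correctness}) turns the eligible parties' joint state into an inconsistent one over $S_i$, so the subsequent measurement again produces at least two distinct labels. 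In either case the minority label counted by Subroutine~\texttilde{C} (Lemma~\ref{lm:subroutinetildeC-correctness}, using the exact counts extracted from $\widetilde{T}^{2(n-1)}_{Z}(v)$ through Theorem~\ref{th:f-view}) is held by at least one but at most half of the eligible parties. Hence after minority voting ${\abs{S_{i+1}} = c_{z_{\minor}}}$ lies in ${[\,1,\ \abs{S_i}/2\,]}$, and the assignment ${k := c_{z_{\minor}}}$ restores the invariant. Since $\abs{S_i}$ never reaches zero yet is at least halved each phase, within $\ceil{\log n}$ phases the unique survivor (the party reaching ${k=1}$) is elected.

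For the complexity I would aggregate the subroutine bounds. Stage~1 runs ${s = O(\log n)}$ copies of Subroutine~Q, so by Lemma~\ref{lm:subroutineQ-complexity} it costs one round of quantum communication with $O(\abs{E}\log n)$ qubits and $O(D\log n)$ local time. Stage~2 has at most ${s = O(\log n)}$ phases, and within each phase Lemmas~\ref{lm:subroutinetildeA-complexity}, \ref{lm:subroutinetildeB-complexity}, and~\ref{lm:subroutinetildeC-complexity} show the dominant cost is $O(Dn^5\log n)$ time (from \texttilde{B} and \texttilde{C}, which dominate the $O(D^2n^3(\log n)^2)$ of \texttilde{A} since ${D \le n}$), together with $O(n)$ rounds and $O(D\abs{E}n^3\log D)$ bits of purely classical communication. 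Multiplying by the number of phases gives $O(Dn^5(\log n)^2)$ time, $O(n\log n)$ rounds, and $O(D\abs{E}n^3(\log D)\log n)$ total bits; adding Stage~1 leaves these unchanged and confines all $O(\abs{E}\log n)$ qubits of quantum communication to the single first round.

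I expect the genuine obstacle to be the correctness argument, not the bookkeeping: the delicate point is maintaining that $k$ equals $\abs{S_i}$ \emph{exactly} rather than merely as an upper bound, since Subroutine~\texttilde{B} produces an inconsistent state only when ${k = \abs{S_i}}$, and simultaneously verifying that the ``$\inconsistent$'' branch really forces at least two labels among the eligible parties after the raw measurement of $\bfR_0^{(i)}$. Both rest on the exactness of the folded-view counting in Theorem~\ref{th:f-view}, which is therefore the true engine behind the stated bound.
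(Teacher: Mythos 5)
Your proposal is correct and follows essentially the same route as the paper's own proof: the same induction maintaining that $k$ equals the exact number of eligible parties before each phase, the same appeal to Lemmas~\ref{lm:subroutineQ-correctness}, \ref{lm:subroutinetildeA-correctness}, \ref{lm:subroutinetildeB-correctness}, and~\ref{lm:subroutinetildeC-correctness} to force at least two distinct labels and hence a halving via minority voting, and the same aggregation of the per-subroutine complexity bounds over $O(\log n)$ phases. Your explicit case split on the output of Subroutine~\texttilde{A} merely spells out a step the paper treats more tersely.
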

\setcounter{theorem}{\value{tmptheorem}}
\begin{proof}
  Lemma~\ref{lm:subroutineQ-correctness} guarantees that Stage~1 works
  correctly. We will prove that steps 1 to 5 of Stage~2 decrease the number of
  eligible parties by at least half,
without eliminating  all eligible parties,
if there are at least two eligible parties.
This directly leads to the correctness of Algorithm~II, since 
$s:=\lceil \log n \rceil$.

The proof is by induction on phase number $i$.
At the beginning of the first phase,
$k$ obviously represents  the number of eligible parties.
Next we prove that if $k$ is equal to the number of eligible parties
immediately before entering phase $i$,
steps 1 to 5 decrease the number of
eligible parties by at least half
without eliminating all such parties, and set $k$ to the updated number
of the eligible parties.
By Lemmas~\ref{lm:subroutinetildeA-correctness} and
\ref{lm:subroutinetildeB-correctness} and the assumption that 
$k$ is the number of eligible parties, 
only eligible parties share an
inconsistent state with certainty after steps 1 and 2. Thus, it is impossible that all
eligible parties get the same value by measurement at step 3.
Subroutine~\texttilde{C} correctly computes $z_{\rm minor}$ and the number $c_{z_{\rm
    minor}}$ as proved in 
Lemma~\ref{lm:subroutinetildeC-correctness}.
Hence, step 4 reduces eligible parties by at least half with certainty
and sets $k$ to the updated number of the eligible parties.

Next, we analyze the complexity of Algorithm~II.
By Lemma~\ref{lm:subroutineQ-complexity}, Stage 1 takes
$O(D)$ time for
each party and one round, and requires $O(\abs{E}\log n)$-qubit communication.
Stage 2 iterates Subroutines~\texttilde{A},  \texttilde{B} and \texttilde{C} at most $O(\log n)$
times.
By Lemmas~\ref{lm:subroutinetildeA-complexity},
\ref{lm:subroutinetildeB-complexity} and \ref{lm:subroutinetildeC-complexity},
Subroutines~\texttilde{A},  \texttilde{B} and \texttilde{C}
take $O(n)$ rounds, $O(Dn^5\log n)$ time and require $O(D\abs{E}n^3\log D)$
classical bit communication for each iteration.
Hence, Stage 2 takes $O(n\log n)$ rounds and $O(Dn^5(\log n)^2)$ time,
and
requires $O(D\abs{E}n^3(\log D)\log n)$ classical bit communication.
This completes the proof.
\end{proof}

\subsection{Generalization of the algorithm}
\label{subsec:algorithm II when only the upperbound is given}
In the case where only the upper bound $N$ of the number of parties is given, 
we cannot apply Algorithm~II as it is, since Algorithm~II strongly depends
on counting the exact number of eligible parties and this
requires the exact number of parties.

We modify Algorithm~II so that it outputs $\status=\mbox{``$\error$''}$
and halts (1) if steps 1 to 5 of
Stage 2 are iterated over $\log n$ times, or (2) if it is found that
non-integer values are being stored into the variables whose values should be
integers. 
Notice that we can easily see that this modified Algorithm~II can run (though it may
halt with output $\mbox{``$\error$''}$) even when it is given the wrong number of
parties as input, unless the above condition (2) becomes true during execution.
Let the modified Algorithm~II be $\LE(\status,n,d)$.

The basic idea is to run $\LE(\mbox{``$\eligible$''},m,d)\ (2\leq
m\leq N)$ in parallel.
Here we assume that every party has one processor,
and all local computations are performed sequentially.
Message passing, on the other hand, is done in parallel,
i.e., at each round the messages of 
$\LE(\mbox{``$\eligible$''},m,d)\ (2\leq
m\leq N)$ are packed into one message and sent to adjacent parties.
Although this parallelism cannot reduce time/communication complexity,
it can reduce the  number of rounds needed.
Let $M$ be the largest $m\in \{ 2,3,\ldots ,N\}$ such that $\LE(\mbox{``$\eligible$''},m,d)$ 
terminates with output $\mbox{``$\eligible$''}$ or $\mbox{``$\ineligible.$''}$ 
The next lemma implies that
$M$ is equal to the hidden number of parties, i.e., $n$,  and thus
$\LE(\mbox{``$\eligible$''},M,d)$ elects the unique leader.
Figure~\ref{Figure:GeneralizedAlgorithm II} 
describes this generalized algorithm.
We call it the Generalized Algorithm II.

\begin{figure}[t]
\begin{center}
\hrulefill\\
  \textbf{Generalized Algorithm~II}
\vspace{-2mm}
\begin{description}
\setlength\itemsep{-3pt}
\item[Input:]
    a classical variable $\status\in \{ \mbox{``$\eligible$''}, \mbox{``$\ineligible$''}\}$,
    integers ${N,d}$
  \item[Output:]
    a classical variable $\status\in \{ \mbox{``$\eligible$''}, \mbox{``$\ineligible$''}\}$
\end{description}
\vspace{-6mm}
\begin{enumerate}
\setlength\itemsep{-3pt}
\item Run in parallel $\LE(\mbox{``$\eligible$''},m,d)$ for $m=2,3,\ldots
  ,N$.\\
\item Output $\status$ returned by $\LE(\mbox{``$\eligible$''},M,d)$, where
$M$ is the largest value of $m\in \{ 2,3,\ldots
  ,N\}$ such that $\LE(\mbox{``$\eligible$''},m,d)$ 
outputs $\status \in \{ \mbox{``$\eligible$''}, \mbox{``$\ineligible$''}\}$.
\end{enumerate}
\vspace{-1\baselineskip}
\hrulefill
\end{center}
\vspace{-1\baselineskip}
\caption{Generalized Algorithm~II.}  
\label{Figure:GeneralizedAlgorithm II}
\end{figure}

\begin{lemma}
  For any number $m$ larger than the number $n$ of parties, 
if every party $l$ runs $\LE(\mbox{``$\eligible$''},m,d_l)$,
it always outputs ``$\error$,''
where $d_l$ is the number of neighbors of party $l$.
\label{lm:modifiedAlgorithmII}
\end{lemma}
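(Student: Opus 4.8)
The plan is to argue by contradiction: suppose that for some $m>n$ the execution of $\LE(\mbox{``$\eligible$''},m,d_l)$ does \emph{not} output ``$\error$,'' and derive a contradiction. I first isolate the only way the wrong count $m$ (in place of the true $n$) enters the computation nontrivially. Every folded view is built to depth $2m-1$ (in Subroutine~\texttilde{C}) or $m-1$ (in Subroutine~\texttilde{A}); since $m>n$, these exceed the depths $2(n-1)$ and $n-1$ at which the view-equivalence classes stabilize (by the stabilization of views~\cite{norris95} together with Theorem~\ref{th:f-view}), so each party still extracts the \emph{exact} topological quantities $\abs{\Gamma_Z}$ and $\abs{\Gamma_Z(S)}$. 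The counts, however, are formed as $m\abs{\Gamma_Z(S)}/\abs{\Gamma_Z}$ rather than the true $n\abs{\Gamma_Z(S)}/\abs{\Gamma_Z}$; writing the class size $c_Z=n/\abs{\Gamma_Z}\in\mathbb{Z}_{>0}$, every number returned by Subroutine~\texttilde{C} equals the factor $(m/n)$ times the corresponding true number of parties.

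The core of the argument is a uniform lower bound on the quantities that feed $k$. In Subroutine~\texttilde{C} each $c_i$ for $i\in\{0,1,2,3\}$ is either a genuine count $m\abs{\Gamma_Z(i)}/\abs{\Gamma_Z}=(m/n)\cdot t$, where $t\ge 1$ is the true number of parties labelled $i$, or—when that true number is zero—reset to the input value $m$. In the first case $c_i\ge m/n$; in the second $c_i=m\ge m/n$. Hence after the reset every $c_i\ge m/n$, so the value $c_{z_{\minor}}=\min_{0\le i\le 3}c_i$ written into $k$ satisfies $c_{z_{\minor}}\ge m/n$. Since $m>n$ gives $m/n>1$, we obtain $k\ge m/n>1$ after every update; together with the initialization $k:=m>1$ this shows $k>1$ throughout. (If some $c_{z_{\minor}}$ were not an integer, condition~(2) of the modification would already force the output ``$\error$,'' so in the non-$\error$ run all these values are integers exceeding $1$.)

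Because $k$ is always strictly greater than $1$, the test ``$k=1$'' in step~5 is never satisfied, so the algorithm can never halt through its normal termination; it must therefore iterate all $\ceil{\log m}$ phases without terminating, whereupon condition~(1) of the modification forces the output ``$\error$,'' contradicting the assumption. This proves the lemma. The step I expect to demand the most care is precisely the lower bound: I must treat the zero-reset convention exactly (so that phantom labels of zero true count contribute the value $m$, not $0$, to the minimization) and confirm that the various ways the run can misbehave when $k\neq\abs{S_i}$—namely that the wrong-parameter application of $U_k$/$V_k$ inside Subroutine~B may leave the eligible parties in a consistent state, or that minority voting may even empty the eligible set—cause no damage: in each such case $c_{z_{\minor}}$ is still either a positive multiple of $m/n$ or the reset value $m$, hence still exceeds $1$, which is all the argument needs.
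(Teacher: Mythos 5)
Your proposal is correct and follows essentially the same route as the paper's own proof: both reduce the claim to showing that every count $c_i$ produced by Subroutine~\texttilde{C} is at least $m/n>1$ (using $\abs{\Gamma(i)}\geq 1$, $\abs{\Gamma}\leq n$, and the zero-reset convention), so that $k=c_{z_{\minor}}$ never reaches $1$ and the normal termination test never fires, forcing the ``$\error$'' output. The extra observations you flag (view stabilization at depth $n-1$, the non-integer escape clause, and the harmlessness of the wrong-parameter $U_k$/$V_k$ applications) are sound but not needed beyond what the paper already uses.
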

\begin{proof}
  It is sufficient to prove that $k$ is never equal to 1
in the modified Algorithm II,
since step 5 of Stage 2 outputs $\status$ only when $k=1$.
$k$ is set to $c_{z_{\minor}}$ at step 4 of Stage 2 and 
$c_{z_{\minor}}$ is computed at step 3 of Subroutine~\texttilde{C}.
We prove that, for any $i$, $c_i> 1$ at step 2 of 
Subroutine~\texttilde{C}, from which
the lemma follows.
Subroutine~\texttilde{C} computes
$$c_i=m\frac{|\Gamma ^{2(m-1)}(i)|}{|\Gamma ^{2(m-1)}|}.$$
If $i$ is in $\{0,1,2,3\}\setminus \{z_1,\ldots, z_n\}$ where $z_l$ is
the $z$ value of party $l$, $c_i$ is set to $n$; otherwise 
$$c_i\geq m/n>1,$$
since $|\Gamma ^{2(m-1)}(i)|\geq 1$ and 
$|\Gamma ^{2(m-1)}|\leq n$.
\end{proof}

\setcounter{tmptheorem}{\value{theorem}}
\setcounter{theorem}{4}
\addtocounter{theorem}{-1}
Now we prove a corollary of Theorem~\ref{Theorem: Complexity of
  Algorithm II for n}.
\begin{corollary}
Let $\abs{E}$ and $D$ be the number of edges and the maximum degree of the
underlying graph, respectively.
Given the upper bound $N$ of the number of parties,
the Generalized Algorithm~II exactly elects a unique leader
in $O(DN^6(\log N)^2)$ time and $O(N\log N)$ rounds
of which only the first round requires quantum communication.
The total communication complexity over all parties
is $O(D\abs{E} N^4(\log D)\log N)$
which includes the communication of only $O(\abs{E} N\log N)$ qubits.
\label{corollary: Complexity of Algorithm II for N}
\end{corollary}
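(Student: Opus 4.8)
The plan is to reduce everything to the single-process analysis of Theorem~\ref{Theorem: Complexity of Algorithm II for n}, handling correctness first. To establish correctness I would pin down the value of $M$. By Lemma~\ref{lm:modifiedAlgorithmII} every process with parameter $m>n$ halts with output ``$\error$'' with certainty, so $M\le n$. For the converse, the process with parameter $m=n$ is exactly Algorithm~II run on the true number of parties, augmented only by the two extra halting conditions, neither of which can fire: by the halving argument inside Theorem~\ref{Theorem: Complexity of Algorithm II for n} it terminates within $\lceil\log n\rceil$ phases (so the phase-count condition never triggers), and every quantity $c_{z_{\minor}}=n\abs{\Gamma(z_{\minor})}/\abs{\Gamma}$ it computes is a true party count, hence an integer (so the non-integer condition never triggers). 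This process therefore outputs ``$\eligible$''/``$\ineligible$'' and elects a unique leader, giving $M\ge n$ and thus $M=n$ with certainty. Since the final output is that of the $m=M=n$ process, Generalized Algorithm~II inherits the exactness and correctness of Algorithm~II; the processes with $m<n$ are irrelevant because $m<M$.

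For complexity, the central observation is that in every process both the input parameter $m$ and the true network size $n$ are at most $N$, and that the explicit phase cap forces at most $\lceil\log m\rceil+1=O(\log N)$ phases even for the ``wrong'' processes. Hence each f-view built inside Subroutines~\texttilde{A}, \texttilde{B}, and~\texttilde{C} has depth $O(m)=O(N)$ over a network of $n\le N$ parties; because every level of an f-view contains at most $n\le N$ distinct nodes, feeding $m>n$ causes no exponential blow-up. The bounds of Lemmas~\ref{lm:subroutinetildeA-complexity}--\ref{lm:subroutinetildeC-complexity}, being polynomial in the f-view depth and the network size, together with Theorem~\ref{th:f-view}, then yield for a single process precisely the Theorem~\ref{Theorem: Complexity of Algorithm II for n} bounds with every $n$ replaced by $N$: $O(DN^5(\log N)^2)$ time, $O(N\log N)$ rounds, $O(D\abs{E}N^3(\log D)\log N)$ classical bits, and (by Lemma~\ref{lm:subroutineQ-complexity}) $O(\abs{E}\log N)$ qubits, all confined to the first round.

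Finally I would aggregate over the $N-1$ processes according to their schedule. Local computation is sequential, so time and total classical communication \emph{add}, multiplying the per-process bounds by $N-1=O(N)$ to give $O(DN^6(\log N)^2)$ time and $O(D\abs{E}N^4(\log D)\log N)$ bits. Message passing is performed in parallel, all processes' round-$t$ messages being packed into one, so the round count is the \emph{maximum} over processes, namely $O(N\log N)$; and the quantum communication, occurring only in the first round of each process, sums to $\sum_{m=2}^{N}O(\abs{E}\log m)=O(\abs{E}N\log N)$ qubits, still within that single first round. These match the claimed bounds exactly. I expect the main obstacle to be this complexity step rather than correctness: one must argue carefully that supplying an \emph{incorrect} $m$ (especially $m>n$) leaves each process both well-defined and polynomially bounded, which is exactly what the phase cap and the $\le n$ per-level width of an f-view secure.
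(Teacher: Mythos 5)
Your proposal is correct and follows essentially the same route as the paper: correctness via Lemma~\ref{lm:modifiedAlgorithmII} (plus the observation that the $m=n$ process is just Algorithm~II and cannot trigger the extra halting conditions), and complexity by bounding each process by the Theorem~\ref{Theorem: Complexity of Algorithm II for n} figures with $n$ replaced by $N$, then summing over the $N-1$ processes for time and communication while taking the maximum for rounds. Your treatment is in fact somewhat more careful than the paper's one-line "replace $n$ with $m$" step, since you explicitly note that the f-view depth scales with $m$ while its per-level width is bounded by the true $n$, both of which are at most $N$.
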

\setcounter{theorem}{\value{tmptheorem}}
\begin{proof}
  From Lemma~\ref{lm:modifiedAlgorithmII}, the correctness is obvious.
  As for the complexity, we can obtain the complexity of the modified
  Algorithm~II, i.e., $\LE(\mbox{``$\eligible$''},m,d)$, simply by
  replacing $n$ with $m$ in the complexity of the (original) Algorithm
  II.  Since the Generalized Algorithm~II runs
  $\LE(\mbox{``$\eligible$''},m,d)$ for $m=2,\ldots ,N$ in parallel,
  the number of rounds required is the same as the maximum of that of
  the modified Algorithm~II over $m=2,\ldots N$.  The
  time/communication complexity is $O(\sum _{m=2}^NC(m))=O(N\cdot
  C(N))$, where $C(m)$ is that of 
Algorithm~II stated in Theorem~\ref{Theorem: Complexity of Algorithm
  II for n} for an $m$-party case.
\end{proof}

In fact, it is possible to reduce the time/communication complexity
at the expense of the number of rounds.
Suppose that every party $l$ runs $\LE(\mbox{``$\eligible$''},m,d_l)$
sequentially in decreasing order of $m$ starting at $N$.
When $\LE(\mbox{``$\eligible$''},m,d_l)$ outputs $\status$ that is either
``$\eligible$'' or ``$\ineligible$,''
the algorithm halts. From Lemma~\ref{lm:modifiedAlgorithmII}, it is clear that
the algorithm halts when $m=n$, which saves the time and communication
that would be otherwise required by $\LE(\mbox{``$\eligible$''},m,d_l)$
for $m<n$.

\subsection{Modification for directed network topologies}
\label{subsec:algorithm II  for directed network}
Algorithm~II can be easily modified so that it can be applied to the
network topologies whose underlying graph is directed and strongly-connected.

We slightly modify the network model as follows.
In a quantum distributed system,
every party can perform quantum computation and communication,
and each pair of parties
has at most one \emph{uni-directional} quantum communication link 
in each direction
between them.
For each pair of parties, there is at least one directed path 
between them for each direction.
When the
parties and links are regarded as nodes and edges, respectively,
the topology of the distributed system is expressed by a 
strongly-connected graph, denoted by ${G=(V,E)}$. 
Every party has two kinds of \emph{ports}: \emph{in-ports} and \emph{out-ports};
they correspond one-to-one to incoming and outgoing communication
links, respectively, incident to the party.
Every port of party $l$ has a unique label $i$, $(1 \leq i \leq d_l)$,
where $d_l$ is the number of parties adjacent to $l$ and
$d_l=d_l^I+d_l^O$ for the number $d_l^I$ ($d_l^O$) of in-ports
(resp. out-ports) of party $l$.
For 
$G=(V,E)$,
the port numbering $\sigma$ is defined in the same way as in the case of
the undirected graph model.
Just for ease of explanation, we assume that
in-port $i$ of party $l$ corresponds to the incoming communication link
connected to the $i$th party 
among all adjacent parties that have an outgoing communication link destined to
party $l$; out-port $i$ of party $l$ is also interpreted in a similar way.

The view for the strongly-connected underlying graph can be naturally defined.
For each $v$ and port numbering $\sigma$,
\emph{view} ${T_{G,\sigma, X} (v)}$ is
a labeled, rooted tree with infinite depth defined recursively as follows:
(1) ${T_{G,\sigma, X} (v)}$ has the root $w$ with label $X(v)$, corresponding to $v$,
(2) for the source $v_j$ of every directed edge coming to $v$ in $G$,
${T_{G,\sigma,X} (v)}$ has vertex $w_j$ labeled with $X(v_j)$,
and an edge from root $w$ to $w_j$ with label ${\Label((v,v_j))}$
given by ${\Label((v,v_j))=(\sigma[v] (v,v_j),\sigma[v_j] (v,v_j))}$,
and (3) $w_j$ is the root of ${T_{G,\sigma,X} (v_j)}$.
${T^{h}_{X}(v)}$ is defined in the same way as in the case of
the undirected graph model.
The above definition also gives a way of constructing ${T^{h}_{G,\sigma,X}(v)}$.
It is stressed that every party corresponds to at least one node of
the view if the underlying graph is strongly-connected.
It can be proved 
in almost the same way as~\cite{YamKam96IEEETPDS-1,yamashita-kameda99}
that 
the equivalence classes 
with respect to the isomorphism of views
have the same cardinality for fixed $G=(V,E),\sigma$ and $X$;
$c_{G,\sigma, X}(S)$ can be computed from a view.
\begin{lemma}
For the distributed system whose underlying graph $G=(V,E)$ is strongly-connected,
the number of views isomorphic to view $T$ is constant over all $T$
for fixed $\sigma$ and $X$.
\end{lemma}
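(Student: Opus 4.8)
The plan is to package the view-isomorphism classes into a quotient graph and exhibit the quotient map as a directed covering, and then to read off equal fiber sizes from strong connectivity. Write $v\equiv v'$ when $T_X(v)\equiv T_X(v')$, let $C_1,\dots,C_r$ be the resulting equivalence classes on $V$, and let $\varphi\colon V\to\{C_1,\dots,C_r\}$ send each node to its class. The first step is the basic structural fact read directly off the recursive definition of the directed view: an isomorphism of rooted labelled trees $T_X(v)\to T_X(v')$ must fix the root and carry the child reached by the in-edge with label $(i,b)$ to the child with the same label. Hence if $v\equiv v'$ then $v$ and $v'$ have the same in-degree, the same list of in-edge labels, and for each in-port $i$ their $i$th in-neighbours satisfy $f_i(v)\equiv f_i(v')$, where $f_i(w)$ denotes the in-neighbour of $w$ via in-port $i$ (the subtree hanging below the port-$i$ child of $v$ is exactly $T_X(f_i(v))$). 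Consequently $f_i$ descends to a well-defined map $\bar f_i$ on classes, and the in-edge label $(i,b_i)$ attached to in-port $i$ is constant on each class.

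Using this, I would build the quotient graph $H$: its vertices are the classes, and for each class $C$ and each in-port $i$ of any node in $C$ it carries one in-edge into $C$, coming from $\bar f_i(C)$ with label $(i,b_i)$. By construction $\varphi$ is a label-preserving graph homomorphism under which, for every $v$, the in-edges of $v$ biject with the in-edges of $\varphi(v)$; this is exactly the \emph{in-covering} property, and it is where the directed view (which only records in-edges) is used. Since $\varphi$ is a surjective homomorphism and every directed path of $G$ maps to a directed walk of $H$, the strong connectivity of $G$ forces $H$ to be strongly connected as well.

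The heart of the argument is a one-sided counting bound. For each edge $x\to y$ of $H$, say with label $(a,b)$, define $g\colon\varphi^{-1}(y)\to\varphi^{-1}(x)$ by sending $v$ to its in-port-$a$ in-neighbour; this is well defined because in-degree and in-edge labels are constant on the class $y$, and it lands in $\varphi^{-1}(x)$ because $\bar f_a(y)=x$. I claim $g$ is injective: if $g(v)=g(v')=u$, then both the edge $u\to v$ and the edge $u\to v'$ carry out-port label $b$ at $u$ (the label $(a,b)$ is preserved), but out-port $b$ at $u$ names a unique out-edge, hence a unique target, so $v=v'$. Therefore $\abs{\varphi^{-1}(y)}\le\abs{\varphi^{-1}(x)}$ whenever $x\to y$ is an edge of $H$.

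Finally I would upgrade this inequality to equality. Along any directed path the fiber size is non-increasing, so for any two classes $x,z$ the existence of a directed path from $x$ to $z$ and of one from $z$ to $x$, guaranteed by strong connectivity, yields $\abs{\varphi^{-1}(z)}\le\abs{\varphi^{-1}(x)}\le\abs{\varphi^{-1}(z)}$, whence all fibers have the same size. Since the fiber $\varphi^{-1}(C)$ is precisely the set of nodes whose view is isomorphic to the common view $T$ of $C$, this is exactly the claim that the number of views isomorphic to $T$ is constant over all $T$. The main obstacle, and the single place where strong connectivity is essential, is that the in-only covering condition produces just the one-sided injection $g$ (non-increase of fiber size) rather than the two-sided bijection one gets for free in the undirected model; strong connectivity is what closes the gap. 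I would take care that the injectivity of $g$ rests solely on uniqueness of out-port labels at the source, so that no out-neighbour information — which the view does not record — is ever required.
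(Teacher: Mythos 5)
Your proof is correct. The engine of your argument is the same one the paper uses: following a labelled in-edge backwards from a node to its in-neighbour is injective on an isomorphism class, because the edge label records the out-port at the source and out-ports name edges uniquely, so no out-neighbour information (which the view does not carry) is ever needed; strong connectivity then supplies the reverse inequality. Where you differ is in the packaging. The paper argues globally and in one shot: given two non-isomorphic views $T(v)$ and $T(v')$, strong connectivity guarantees that $T(v')$ appears as a subtree of $T(v_1)$ at the end of some label sequence $s$, and reading $s$ inside each of the $m$ views isomorphic to $T(v)$ produces $m$ distinct parties in the class of $T(v')$, giving one inequality directly and the other by swapping the roles of $v$ and $v'$. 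You instead factor that long-path injection into single-edge injections between fibers of a quotient graph $H$ on the isomorphism classes, observe that fiber size is non-increasing along edges of $H$, and close the loop using strong connectivity of $H$ (inherited from $G$). Your decomposition buys a cleaner structural picture --- the quotient map is exhibited as an in-covering, and the equal-fiber conclusion is the standard covering-space fact adapted to the one-sided (in-edges only) setting --- at the cost of having to set up $H$ and verify that in-degrees and in-edge labels are class invariants; the paper's version avoids the quotient construction entirely but leaves the injectivity of the path-tracing map ($u_i \neq u_j$ for $i \neq j$) as a one-line assertion that your edge-by-edge argument makes fully explicit.
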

\begin{proof}
Let $T(v)$ and $T(v')$ be any two non-isomorphic views
for fixed $\sigma$ and $X$, and
let $\{T(v_1),\ldots, T(v_m) \}$ be the set of all views isomorphic to
$T(v)$, where $v\in \{v_i\mid i=1,\ldots ,m\}\subseteq V$.
Since the underlying graph is strongly-connected, 
there is a subtree of $T(v_1)$ which is isomorphic to 
$T(v')$. 
Let $s$ be the sequence of edge and node labels from the root of $T(v_1)$
to the root $u_1$ of the subtree.
Since all views in $\{T(v_1),\ldots, T(v_m) \}$ are
isomorphic
to one another, there is a node $u_i$ that can be reached from the root of 
$T(v_i)$ along $s$ for each $i=1,\ldots m$.
Clearly, every $u_i$ is the root of a tree 
isomorphic to 
$T(v')$. 
Since $v_i$ and $v_j$ correspond to different parties if $i\neq j$,
$u_i$ and $u_j$ also correspond to different parties if $i\neq j$.
This implies that the number of views 
isomorphic to $T(v')$ is not less than that of views 
isomorphic to $T(v)$.
By replacing $T(v)$ with $T(v')$,
we can see that the number of views 
isomorphic to $T(v)$ is not less than that of views 
isomorphic to $T(v')$.
This completes the proof.
\end{proof}

Since f-view is essentially a technique for compressing a tree structure
by sharing isomorphic subtrees, 
f-view also works for views of any strongly-connected underlying graph.

From the above, it is not difficult to see that Subroutines \texttilde{A}, 
{\texttilde{B}} and {\texttilde{C}} work well (with only slight modification),
since they use only classical communication.
In what follows, we describe a modification,
called Subroutine~Q',
 to Subroutine~Q.
With the subroutine, the correctness of the modification to
 Algorithm~II will be obvious
for any strongly-connected underlying graph.

Subroutine~Q' simply restricts Subroutine~Q so that
every party can send qubits only via out-ports
and receive qubits only via in-ports.
Figure~\ref{Figure: Subroutine Q'} gives a precise description of
Subroutine~Q';
the subroutine requires
two integers $d^I$ and $d^O$ together with
$\bfR_0$ as input, which are taken to be $d^I_l$ and $d^O_l$, respectively.
Thus, Algorithm~II needs to be slightly modified so that it can handle
$d^I$ and $d^O$ instead of $d$.

\begin{figure}[t]
\begin{center}
\hrulefill\\
  \textbf{Subroutine~Q'}
\begin{description}
\setlength\itemsep{-3pt}
\item[Input:]
a one-qubit quantum register ${\bfR}_0$,
integers $d^I$ and $d^O$
\item[Output:]
a one-qubit quantum register ${\bfR}_0$,
a binary string $y$ of length $d^I$
\end{description}
\vspace{-6mm}
\begin{enumerate}
\setlength\itemsep{-3pt}
  \item
    Prepare $d^O$ one-qubit quantum registers
    ${\bfR'_1, \ldots, \bfR'_{d^O}}$ and 
$2d^I$ one-qubit quantum registers ${\bfR''_1, \ldots, \bfR''_{d^I}}$,
${\bfS_1, \ldots, \bfS_{d^I}}$,
    each of which is initialized to the $\ket{0}$ state.
  \item
    Generate the $(d^O+1)$-cat state
    ${(\ket{0}^{\otimes (d^O+1)} + \ket{1}^{\otimes (d^O+1)})/\sqrt{2}}$
    in registers $\bfR_0$, ${\bfR'_1, \ldots, \bfR'_{d^O}}$.
  \item
     Send the qubit in $\bfR'_i$
     to the party connected via out-port $i$ for ${1 \leq i \leq d^O}$.\\
     Receive the qubit from the party connected via in-port $i$ and
     store it in one-qubit register $\bfR''_i$  for ${1 \leq i \leq d^I}$.
  \item
  Set the content of $\bfS_i$ to ${x_0 \oplus x_i}$, for ${1 \leq i \leq d^I}$,
    where $x_0$ and $x_i$ denote the contents of $\bfR_0$ and $\bfR''_i$, respectively.
  \item
    Measure the qubit in $\bfS_i$ in the $\{ \ket{0}, \ket{1} \}$ basis
    to obtain bit $y_i$, for ${1 \leq i \leq d^I}$.\\
    Set ${y:= y_1 \cdots y_{d^I}}$. 
  \item
    Apply CNOT controlled by the content of $\bfR_0$ and targeted to
    the content of each $\bfR''_i$ for ${i=1,2,\ldots ,d^I}$ to 
    disentagle $\bfR''_i$s.
  \item
    Output $\bfR_0$ and $y$.
\end{enumerate}
\vspace{-1\baselineskip}
\hrulefill
\end{center}
\vspace{-1\baselineskip}
\caption{Subroutine~Q'}  
\label{Figure: Subroutine Q'}
\end{figure}
We can prove the next lemma in a similar way to
Lemma~\ref{lm:subroutineQ-correctness}.
\begin{lemma}
\label{lm:subroutineQ'-correctness}
For an $n$-party distributed system,
suppose that every party $l$ calls Subroutine~Q' with a one-qubit register whose content is initialized to
$\ket{0}$ and  $d^I_l$ and $d^O_l$ as input $\bfR_0$,
$d^I$ and $d^O$, respectively.
After performing Subroutine~Q', all parties share
$(\ket{x}+\ket{\overline{x}})/{\sqrt{2}}$
with certainty, where $x$ is a random $n$-bit string.
\end{lemma}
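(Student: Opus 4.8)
The plan is to follow the proof of Lemma~\ref{lm:subroutineQ-correctness} almost verbatim, tracking only the differences caused by splitting each party's links into in-ports and out-ports. First I would follow the state through steps~1--3. After step~2, party $l$ holds the $(d^O_l+1)$-cat state across $\bfR_0$ and $\bfR'_1,\ldots,\bfR'_{d^O_l}$, so the joint state of all $\bfR_0$'s and $\bfR'$'s is the uniform superposition $\frac{1}{\sqrt{2^n}}\sum_{a}\bigotimes_{l=1}^{n}\ket{a_l}^{\otimes(d^O_l+1)}$ over $n$-bit strings $a$, where $a_l$ is the content of $\bfR_0$ of party $l$. In step~3, party $l$ sends the qubit of $\bfR'_i$ along out-port $i$ and deposits the qubit arriving on in-port $j$ into $\bfR''_j$; since each out-neighbor receives one copy of $a_l$, the qubit newly placed in $\bfR''_j$ carries $\ket{a_{l_j}}$, where $l_j$ is the source of the incoming edge on in-port $j$.

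Next I would treat steps~4 and~5. Step~4 writes into each $\bfS_j$ the value $a_l\oplus a_{l_j}$, one value for every in-edge $(l_j,l)$. Measuring the $\bfS_j$'s in step~5 collapses the superposition onto the strings $a$ consistent with the observed bits $y_j(l)=a_l\oplus a_{l_j}$. The crux is the directed analogue of Claim~\ref{claim: correctness of subroutine Q}: the system $a_l\oplus a_{l_j}=y_j(l)$, now ranging over in-edges only, still has exactly two solutions, which are bitwise negations of each other.

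To establish this analogue I would observe that each equation $a_l\oplus a_{l_j}=y_j(l)$ is symmetric in $l$ and $l_j$, so it constrains the pair of vertices joined by the corresponding edge of the underlying undirected graph irrespective of that edge's orientation. Because $G$ is strongly-connected, its underlying undirected graph is connected; fixing $a_1=0$ then forces a unique value at every vertex by propagation along the $\Adj$-based layering used in the original claim, and the choice $a_1=1$ yields exactly the bitwise-negated solution. Hence only two strings $A$ and $\overline{A}$ survive, and after step~5 the $\bfR_0$'s and $\bfR''$'s are in $\bigl(\bigotimes_l(\ket{A_l}\bigotimes_j\ket{A_{l_j}})+\bigotimes_l(\ket{\overline{A_l}}\bigotimes_j\ket{\overline{A_{l_j}}})\bigr)/\sqrt{2}$. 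Finally step~6 applies a CNOT from $\bfR_0$ onto each $\bfR''_j$, replacing its content by $a_l\oplus a_{l_j}$; using $A_l\oplus A_{l_j}=\overline{A_l}\oplus\overline{A_{l_j}}$ this value is identical in both branches, so the $\bfR''$'s disentangle and the $\bfR_0$'s are left in $(\ket{x}+\ket{\overline{x}})/\sqrt{2}$ with $x=A$. The only genuinely new point compared with Lemma~\ref{lm:subroutineQ-correctness} is this connectivity step, which I expect to be the main (if minor) obstacle: one must check that the XOR constraints, though harvested only along in-edges, collectively touch every edge of the underlying undirected graph, and this is exactly where strong-connectivity is used.
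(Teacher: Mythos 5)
Your proposal is correct and follows essentially the same route as the paper's own proof, which likewise traces the state through steps 1--6 exactly as in Lemma~\ref{lm:subroutineQ-correctness} and then appeals to ``a similar argument to Claim~\ref{claim: correctness of subroutine Q} (using the strong connectivity of the underlying graph).'' The one detail you spell out more explicitly than the paper --- that each XOR equation is symmetric in the two endpoints of its directed edge, so the constraint system propagates over the connected underlying undirected graph and admits exactly the two complementary solutions --- is precisely the content the paper leaves implicit, and your treatment of it is sound.
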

\begin{proofsketch}
After step 2,
the state in $\bfR_0$'s and ${\bfR''_1\mbox{'s}, \ldots, \bfR''_{d^I}\mbox{'s}}$
of all parties is a
uniform superposition of some basis states in an orthonormal basis
of $2^{\sum _{l=1}^n(d^O_l+1)}$-dimensional Hilbert space:
the basis states
correspond one-to-one to $n$-bit integers $a$ and each of the basis states
is of the form $\ket{a_1}^{\otimes (d^O_1+1)}\otimes \cdots \otimes \ket{a_n}^{\otimes
  (d^O_n+1)}$, where $a_l$ is the $l$th bit of the binary expression of $a$.
If we focus on the $l$th party's part of the basis state corresponding
to $a$, step 3 transforms $\ket{a_l}^{\otimes (d^O_l+1)}$ into 
$\ket{a_l}\left(\bigotimes_{j=1}^{d^I_l}\ket{a_{l_j}}\right)$,
where we assume that party $l$ is connected to party $l_j$ via in-port
$j$.
Notice that the total number of qubits over all parties is preserved, since
$\sum _{l=1}^nd^I_l=\sum _{l=1}^nd^O_l$.
More precisely, steps 1 to 4 transform the system state as follows:
  \begin{eqnarray*}
    \bigotimes _{l=1}^{n}\ket{0}\ket{0}^{\otimes
    d^O_l}\ket{0}^{\otimes d^I_l}&\rightarrow&
    \bigotimes_{l=1}^{n}\frac{\ket{0}^{\otimes
    (d^O_l+1)}+\ket{1}^{\otimes (d^O_l+1)}}{\sqrt{2}}\ket{0}^{\otimes d^I_l}\\
&\rightarrow& 
\frac{1}{\sqrt{2^n}}
\sum _{a=0}^{2^n-1}\bigotimes_{l=1}^{n}\left\{\ket{a_l}
\left(\bigotimes_{j=1}^{d^I_l}\ket{a_{l_j}}\right)
\ket{0}^{\otimes d^I_l}\right\}\\
&\rightarrow& 
\frac{1}{\sqrt{2^n}}
\sum _{a=0}^{2^n-1}\bigotimes_{l=1}^{n}\left\{\ket{a_l}
\left(\bigotimes_{j=1}^{d^I_l}\ket{a_{l_j}}\right)
\left(\bigotimes_{j=1}^{d^I_l}\ket{a_l\oplus a_{l_j}}\right)\right\}.\\
  \end{eqnarray*}
After every party measures registers $\bfS_i$'s at step 5,
the state transformation can be described as follows, due to a similar
argument to Claim~\ref{claim: correctness of subroutine Q}
(using the strong connectivity of the underlying graph):
  \begin{eqnarray*}
\lefteqn{
\frac{1}{\sqrt{2}}
\bigotimes_{l=1}^{n}
\left(\ket{A_l}
\bigotimes_{j=1}^{d^I_l}\ket{A_{l_j}}\right)+
\frac{1}{\sqrt{2}}
\bigotimes_{l=1}^{n}
\left(\ket{\overline{A_l}}
\bigotimes_{j=1}^{d^I_l}\ket{\overline{A_{l_j}}}\right)
}\\
&\rightarrow& 
\frac{1}{\sqrt{2}}
\bigotimes_{l=1}^{n}
\left(\ket{A_l}
\bigotimes_{j=1}^{d^I_l}\ket{A_l\oplus A_{l_j}}\right)+
\frac{1}{\sqrt{2}}
\bigotimes_{l=1}^{n}
\left(\ket{\overline{A_l}}
\bigotimes_{j=1}^{d^I_l}\ket{\overline{A_l}\oplus \overline{A_{l_j}}}\right).
  \end{eqnarray*}
Finally, the qubits in $\bfR_0$'s
are in the state $(\ket{x}+\ket{\overline{x}})/{\sqrt{2}}$.
\end{proofsketch}

It is easy to see that the complexity has the same order as that
in the case of undirected networks: Theorem~\ref{Theorem: Complexity
  of Algorithm II for n} and Corollary~\ref{corollary: Complexity of
  Algorithm II for N} hold, if we define
$D$ as the maximum value of the sum of the numbers of in-ports and
out-ports over all parties.

\section{Folded view and its algorithms}
\label{sec: view compression}
View size is exponential against its depth since a view is a tree. Therefore, 
exponential communication bits are needed 
if the implementation simply exchanges intermediate views. 
Here we introduce a technique to compress views by
sharing isomorphic subtrees of the views.
We call a compressed view \emph{a folded view} (or an \emph{f-view}).
The key observation is that there are at most $n$ isomorphic subtrees
in a view when the number of parties is $n$.
This technique reduces not only communication complexity but also
local computation time by \emph{folding} all intermediate views and 
constructing larger f-views 
\emph{without unfolding} intermediate f-views.

In the following, we present an algorithm that constructs an f-view for
each party instead of a view of depth $(n-1)$, and then describe an
algorithm that counts the number of the non-isomorphic views by using the
constructed f-view.
For simplicity, we 
assume the underlying graph of the network
is undirected.
It is not difficult to generalize the algorithms 
to the case of directed network topologies as described at the end of
this section.

\subsection{Terminology}
The folded view has all information possessed by the corresponding view.
To describe such information,
we introduce a new notion, \emph{``path set,''} which is  equivalent
to a view in the sense that any view can be reconstructed from 
the corresponding path set, and vice versa.

A path set, $P_{G,\sigma,X}(v)$, is defined for view
$T_{G,\sigma,X}(v)$.  
Let $u_{\rm root}$ be the root of $T_{G,\sigma,X}(v)$.
Suppose that every edge of a view is directed
and its source is the end closer to $u_{\rm root}$.
$P_{G,\sigma,X}(v)$ is the set
of directed labeled paths starting at $u_{\rm root}$ with infinite length in
$T_{G,\sigma,X}(v)$.  More formally,  let $s(p)=(\Label(u_0),\Label(e_0),\Label(u_1),\cdots )$
be the sequence of labels of those nodes and edges which form an
infinite-length directed path
$p=(u_0,e_0,u_1,\cdots )$
starting at $u_0(=u_{\rm root})$, where $u_i$ is a node, $e_i$ is the directed
edge from $u_i$ to $u_{i+1}$ in 
$T_{G,\sigma,X}(v)$, and $\Label(u_i)$ and $\Label(e_i)$ are the
labels of $u_i$ and $e_i$, respectively.
It is stressed that $u_i$ and $e_i$ are not node or edge identifiers,
and are just used for definition.
$P_{G,\sigma,X}(v)$ is the set of $s(p)$ for all $p$  in
$T_{G,\sigma,X}(v)$. For $T^h_{G,\sigma,X}(v)$, we naturally define
$P^h_{G,\sigma,X}(v)$, i.e., the set of all sequences of labels of
those nodes
and edges which form directed paths of length $h$ starting at $u_{\rm root}$ in 
$T^h_{G,\sigma,X}(v)$.
In the following, we simply call a sequence in a path set, a
\emph{path}, and identify
the common length of the paths in a path set
with the \emph{length of the path set}.

By the above definition, 
$P^h_{G,\sigma,X}(v)$ is easily obtained by traversing view $T^h_{G,\sigma,X}(v)$.
On the other hand, given $P^h_{G,\sigma,X}(v)$, we can construct the view rooted at
$u_{\rm root}$ by sharing the maximal common prefix of any
pair of paths in $P^h_{G,\sigma,X}(v)$.
In this sense, $P^h_{G,\sigma,X}(v)$ has all information possessed by view
$T^h_{G,\sigma,X}(v)$. 
Let $u^j$ be any node at depth $j$ in $T^h_{G,\sigma,X}(v)$,
and suppose that $u^j$ corresponds to node $v_{u^j}$ of $G$.
Since a view is defined recursively, 
we can define the path set $P^{h'}_{G,\sigma,X}(v_{u^j})$ 
for the $h'$-depth subtree rooted at
$u^j$,
as the set of $h'$-length directed paths starting at $u^j$
for $h'\leq h-j$.
To avoid complicated notations, we may use
$P^{h'}_{G,\sigma,X}(u^j)$ instead of $P^{h'}_{G,\sigma,X}(v_{u^j})$.
We call $P^{h'}_{G,\sigma,X}(u^j)$ 
\emph{the path set of length
  $h'$ defined
  for $u^j$}.
In particular,
when $h'$ is the length from $u^j$ to a leaf, i.e.,
$h-j$,
we call $P^{h-j}_{G,\sigma,X}(u^j)$
\emph{the path set defined  for $u^j$}.
For any node $u$ of a view,
we use $\depth(u)$ to represent the depth of $u$, i.e., 
the length of the path from the root to $u$,
in the view.

Finally, for any node $u$ in a view and its corresponding party $l$, 
if an outgoing edge $e$ of $u$ corresponds to the communication link
incident to party $l$ via port $i$,
we call edge $e$ ``the \emph{$i$-edge of $u$}''
and denote the destination of $e$  by $\Adj_i(u)$.

\subsection{Folded view}
\label{subsec:folded view}
We now define a key operation, called the ``merging operation,'' which
folds a view.
\begin{definition}[Merging operation]
\label{def:merging-operation}
For any pair of nodes $u$ and $u'$ at the same depth in a view, 
the merging operation eliminates one of the nodes, $u$ or $u'$, and
its outgoing edges,
and redirects
  all incoming edges of the eliminated node to the remaining one,
if $u$ and $u'$ satisfy the following conditions:
(1) $u$ and $u'$ have the same label, and (2) when $u$ and $u'$
have outgoing edges (i.e., neither $u$ nor $u'$ is a leaf), 
$u$ and $u'$ have the same number of outgoing edges, and the $i$-edges
of $u$ and $u'$ have the same label and
are directed to the same node for all $i$. 
\end{definition}
Obviously, the merging operation never eliminates the root of a view.
Further, the merging operation does not change the length of the
directed path from the root
to each (remaining) node. 
Thus, we define the depth of each node $u$ that remains after 
applying
the merging operation as 
the length of
the path from the root
to $u$ and denote it again by $\depth (u)$.
We call the directed acyclic graph obtained by  applying the merging operation to
a view, a \emph{folded view (f-view)}, and
define \emph{the size of an f-view} as the number of nodes in the f-view.
Since views of finite depth are sufficient for our use, we only consider
f-views that are obtained by applying the merging operation to views of
finite depth hereafter.
For any view $T^h_{G,\sigma,X}(v)$, its minimal f-view is uniquely determined
up to isomorphism as will be proved later and is denoted by
$\widetilde{T}^h_{G,\sigma,X}(v)$.
We can extend the definition of the path set to f-views:
the path set of length $h$ defined for node $u$ in an f-view is the
set of all directed labeled paths of length $h$ from $u$ in the f-view.
For any node $u$ in an f-view, we often use $d_u$ to represent the
number of the outgoing edges of $u$ when describing algorithms later.

The next lemma is essential.
\begin{lemma}
\label{lm:merging operation}
For any (f-)view, the merging operation does not change the path set defined for every
node of the (f-)view if the node exists after the operation.
Thus, the path set of any f-view obtained from a view by
applying the
merging operation is identical to the path set of the view.
\end{lemma}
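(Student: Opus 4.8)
The plan is to prove the statement by reasoning directly about what the merging operation does to the set of directed labeled paths. Recall that the merging operation eliminates one of two nodes $u, u'$ at the same depth that satisfy conditions (1) and (2) of Definition~\ref{def:merging-operation}, and redirects all incoming edges of the eliminated node to the surviving one. The key observation is that conditions (1) and (2) together guarantee that $u$ and $u'$ are the roots of \emph{isomorphic} subtrees: they carry the same label, and their $i$-edges have matching labels and matching destinations for all $i$. First I would make this precise by showing, by a short downward induction on the depth of the subtree, that the path set defined for $u$ equals the path set defined for $u'$ before the operation, so that identifying them loses no information.

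Next I would verify the two claims of the lemma separately. For the first claim, fix any node $w$ that survives the operation, and consider a path of length $h$ starting at $w$ in the original (f-)view. I would argue that such a path maps to a path of the same length and same label-sequence in the merged structure, and conversely every path in the merged structure lifts to a path in the original structure. The only place where a path could be affected is where it passes through (or starts at) one of the merged nodes; since the outgoing-edge structure of the surviving node is identical (by condition (2)) to that of the eliminated node, any path that would have continued through the eliminated node can continue identically through the survivor, and any path through the survivor was already present. Because the labels of nodes and edges are exactly what a path record consists of, this correspondence is label-preserving, and hence $P^h(w)$ is unchanged for every surviving $w$.

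The second claim follows by a straightforward induction on the number of merging operations applied to obtain the f-view from the view. The base case (zero operations) is trivial, and the inductive step is exactly the first claim applied to the current intermediate f-view: each single merge preserves the path set defined for the root (the root is never eliminated, as noted right after Definition~\ref{def:merging-operation}), so composing finitely many merges leaves the root's path set identical to that of the original view. Since the root's path set encodes the entire view (as established in the discussion of path sets, where any view is reconstructible from its path set), this gives the desired identity.

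The main obstacle I anticipate is the bookkeeping in the first claim when a path passes through a node that is itself the \emph{destination} of a redirected incoming edge, rather than being one of the two merged nodes. I would need to handle the fact that after merging, several former incoming edges now point to a single node, so I must check that no \emph{spurious} path is created: a path arriving at the survivor via a redirected edge must have a valid continuation, which it does precisely because the survivor and the eliminated node had identical outgoing structure. Making the lifting map well-defined in both directions — and confirming it is a bijection on path sets, not merely a surjection or injection — is the delicate part; once the isomorphism of the two merged subtrees is established at the outset, however, this reduces to a careful but routine case analysis on where along the path the merged nodes occur.
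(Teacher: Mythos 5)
Your proposal is correct and takes essentially the same route as the paper's own (very terse) proof: both rest on the observation that conditions (1) and (2) force the two merged nodes to have identical labels and identical outgoing edges, hence identical path sets, so redirecting the incoming edges of the eliminated node to the survivor cannot change the path set defined for any remaining node. You merely spell out the path-by-path correspondence and the induction over successive merges that the paper leaves implicit.
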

\begin{proof}
  Let $u'$ be the node that will be merged into $u$ (i.e., $u'$ will be
  eliminated). 
By the definition of the merging operation, the
  set of the maximal-length directed paths starting at $u$ is identical to the set
  of those starting at $u'$. Thus, by eliminating $u'$ and 
redirecting 
  all incoming edges of $u'$ to $u$, the path set defined for every remaining
  node does not change.
\end{proof}
We can characterize f-views by using ``path sets.''
Informally, for every distinct path set $P$ defined for a node at any depth $j$ in a view,
any f-view obtained from the view has at least one node
which defines $P$, at depth $j$.

Before giving a formal characterization of f-views
in the next lemma, we need to define some notations.
Suppose that $u^j$ is any node at depth $j$
in $T^h_{G,\sigma,X}(v)$.
We define $\calP^{j}_{G,\sigma,X}(v)$
as the set of path sets
$P^{h-j}_{G,\sigma,X}(u^j)$'s
defined for all $u^j$'s.
For any path set $P$, 
let $P|_x$ be the set 
obtained by
cutting off the first node and edge
from all those paths in $P$ that
have $x$ as the first edge label.

\begin{lemma}
\label{lm:f-view}
Let $G^f(V^f,E^f)$
be any labeled connected directed acyclic graph such
that $V^f$ is the union of disjoint sets $V^f_j\ (j=0,\ldots, h)$ of
nodes with $\abs{V^f_0}=1$, 
$E^f$ is the union of disjoint sets $E^f_j\subseteq V^f_j\times V^f_{j+1}\
(j=0,\ldots, h-1)$ of directed edges, and
every node in $V^f_j\ (j=1,\ldots, h)$ can be reached from $u_r\in
V^f_0$ via directed edges in $E^f$.
$\widehat{T}_{G,\sigma,X}^h(v)$ is an f-view of $T^h_{G,\sigma,X}(v)$
if and only if
$\widehat{T}_{G,\sigma,X}^h(v)$ is $G^f(V^f,E^f)$ such that
there is a mapping $\psi$ from $V^f_j$
onto $\calP^{j}_{G,\sigma,X}(v)$
 for each $j=0,\ldots, h$ satisfying the following two conditions:
  \begin{enumerate}
\item[C1] each node $u\in V^f$ has the label that is identical to
the common label of the first nodes of paths in $\psi(u)$,
\item[C2] 
for each $u\in V^f$, 
$u$ has an outgoing edge $(u,u')$ with label $x$ if and only if
there is a path in $\psi (u)$ whose first edge is labeled with $x$,
and $\psi (u')=\psi (u)|_{x}$.
  \end{enumerate}
\end{lemma}
\begin{proof}
($\Rightarrow$) We will prove that, for any f-view obtained by applying the merging operation to 
$T^h_{G,\sigma,X}(v)$, there exists $\psi$ that satisfies C1 and C2.
  From Lemma~\ref{lm:merging operation}, the merging operation does not change the
  path set defined for any node (if it exists after the operation). It follows that the path set defined for any
  node at depth $j$ in the f-view belongs to $\calP^{j}_{G,\sigma,X}(v)$.
Conversely, for every path set $P$ in $\calP^{j}_{G,\sigma,X}(v)$, 
there is at least one node at depth $j$ in the f-view such that
the path set defined for the node is $P$; this is because
the merging operation just merges two nodes defining the same path set.
Let $\psi$ be the mapping that
maps every node $u$ of the f-view 
to the path set defined for $u$.
From the above argument,
$\psi$ is a mapping
from $V^f_j$ onto $\calP^{j}_{G,\sigma,X}(v)$ and
meets C1. 
To show that $\psi$ meets C2, we use simple induction on the sequence of the
  merging operation.
By the definition,  $T^h_{G,\sigma,X}(v)$ meets C2.
Suppose that one application of the merging operation
transformed an f-view to a smaller f-view,
and that $\psi$ meets C2 for the f-view before the operation.
If we define $\psi '$ for the smaller f-view as the mapping obtained by
  restricting $\psi$ to the node set of the smaller f-view,
$\psi'$ meets C2  by
the definition of the merging operation.

($\Leftarrow$) We will prove that any graph $G^f$
for which there is $\psi$ satisfying C1 and C2,
can be
obtained
by applying the merging operation (possibly, more than once) to $T^h_{G,\sigma,X}(v)$.
We can easily show by induction that the set $P$ of all maximal-length labeled directed paths starting at $u_0\in V^f_0$
is identical to the path set 
$P^h_{G,\sigma,X}(v)$
of $T^h_{G,\sigma,X}(v)$ from the
definition of $G^f$. 
We will give an 
inversion of the merging
operation 
that does not change $P$
when it is applied to $G^f$,
and show that we can obtain
the view that defines $P^h_{G,\sigma,X}(v)$
by repeatedly applying the inversion to $G^f$
until the inversion cannot be applied any more.
This view is isomorphic to $T^h_{G,\sigma,X}(v)$,
since the view is uniquely determined for a fixed path set.
It follows that $G^f$ can be obtained from 
$T^h_{G,\sigma,X}(v)$
by reversing the sequence of the inversion, i.e., applying the merging
operation repeatedly.

The inverse operation of the merging operation is defined as follows:
if some node $u^j\in V^f_j \ (1\leq j\leq h)$ has multiple incoming edges, say,
$e_1,\ldots, e_t\in E_{j-1}$, the inverse operation
makes a copy $u'$ of $u^j$ together with its outgoing edges 
(i.e, creates a new node $u'$ with the same label as $u^j$,
and edge $(u',w)$ with label $x$ if and only if 
edge $(u^j,w)$ has label $x$ for every outgoing edge $(u^j,w)$ of $u^j$)
and
redirects $e_2,\ldots,e_t$ to $u'$
($e_1$ is still directed to the original node $u^j$).
Let $G^{f'}=(V_{f'},E_{f'})$ be the resulting graph.
Consider mapping $\psi'$ such that
$\psi'$ is identical to $\psi$ of $G^f$
for all nodes except $u'$, and $\psi'$ maps $u'$ to $\psi (u)$.
Then $\psi'$ is a mapping
from $V^{f'}_j$ onto $\calP^{j}_{G,\sigma,X}(v)$ and
meets C1 and C2. 
The sets of maximal-length paths from $u_0\in V^f_0$ and $u_0'\in
V^{f'}_0$
are obviously identical to each other.
Thus, the inverse operation can be applied repeatedly
until there are no nodes
that have multiple incoming edges,
which does not change the set of 
maximal-length paths.
It follows that $G^f$ is transformed into a 
view that defines $P^h_{G,\sigma,X}(v)$, i.e., $T^h_{G,\sigma,X}(v)$,
by repeatedly applying the operation until it can no longer be applied.
\end{proof}
From this lemma, we obtain the next corollary.
\begin{corollary}
\label{cr:minimal-f-view}
Any minimal f-view $\widetilde{T}^h_{G,\sigma,X}(v)$ is unique up to
isomorphism and has exactly $|\calP^{j}_{G,\sigma,X}(v)|$ nodes at
depth $j\ (0\leq j\leq h)$.
The minimal f-view of depth $h$ for any $n$-party distributed system has $O(hn)$ nodes
and $O(hDn)$ edges, where $D$ is the maximum degree over all nodes of the underlying
graph.
\end{corollary}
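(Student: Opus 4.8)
The plan is to derive everything from the path-set characterization in Lemma~\ref{lm:f-view} together with the observation that a view of an $n$-party network contains at most $n$ pairwise non-isomorphic subtrees at any fixed depth.

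First I would pin down the node count at each depth. Recall that the minimal f-view is obtained by applying the merging operation maximally. I would argue, by induction on depth from the leaves upward, that after maximal merging two nodes at the same depth $j$ have been identified \emph{if and only if} they define the same path set. The ``only if'' direction is immediate from Lemma~\ref{lm:merging operation}, which states that the merging operation never alters the path set defined for a surviving node. For the ``if'' direction one uses the precise form of Definition~\ref{def:merging-operation}: two nodes are mergeable exactly when they carry the same label and, when they are not leaves, have the same number of outgoing edges whose $i$-edges carry the same labels and already point to the \emph{same} child node for every $i$. Since, by the inductive hypothesis, the children at depth $j+1$ have after maximal merging become one node per distinct element of $\calP^{j+1}_{G,\sigma,X}(v)$, two depth-$j$ nodes acquire identical outgoing structure precisely when their own path sets coincide. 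Hence the surviving nodes at depth $j$ are in bijection with the distinct path sets in $\calP^{j}_{G,\sigma,X}(v)$, giving exactly $\abs{\calP^{j}_{G,\sigma,X}(v)}$ nodes.

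Next, uniqueness up to isomorphism follows directly from Lemma~\ref{lm:f-view}: in a minimal f-view the map $\psi$ is a bijection from the depth-$j$ node set onto $\calP^{j}_{G,\sigma,X}(v)$, condition C1 fixes the label of every node, and condition C2 forces the edge labeled $x$ leaving the node assigned to a path set $P$ to terminate at the unique node assigned to $P|_x$. Thus the entire labelled graph is determined by the family $\{\calP^{j}_{G,\sigma,X}(v)\}_j$ together with the operations $P\mapsto P|_x$, so any two minimal f-views of $T^h_{G,\sigma,X}(v)$ are isomorphic.

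Finally I would bound the size. Each path set in $\calP^{j}_{G,\sigma,X}(v)$ is the path set of the subtree rooted at some node $u^j$, which is the depth-$(h-j)$ view of the party $v_{u^j}\in V$ that $u^j$ represents; two such path sets coincide exactly when the corresponding views are isomorphic. Hence $\abs{\calP^{j}_{G,\sigma,X}(v)}$ is at most the number of isomorphism classes of depth-$(h-j)$ views over the parties of $G$, which is at most $n$. Summing the at most $n$ nodes over the $h+1$ depths yields $O(hn)$ nodes, and since each node corresponds to a party and so has at most $D$ outgoing edges, this gives $O(hDn)$ edges. The only delicate point is the ``if'' direction of the first step — showing that maximal merging collapses exactly the nodes with equal path sets, even though the merging rule is phrased in terms of identical children rather than equal path sets — and this is precisely what the bottom-up induction is designed to handle.
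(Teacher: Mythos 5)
Your proof is correct and follows essentially the same route as the paper: uniqueness and the per-depth node count both rest on the path-set characterization of Lemma~\ref{lm:f-view} (the mapping $\psi$ being a bijection onto $\calP^{j}_{G,\sigma,X}(v)$ for minimal f-views, with C1 and C2 forcing the labels and edges), and the $O(hn)$/$O(hDn)$ bounds come from $\abs{\calP^{j}_{G,\sigma,X}(v)}\leq n$ together with the degree bound $D$. The only divergence is that you establish the node count via a bottom-up induction showing that maximal merging identifies exactly the nodes with equal path sets — which is essentially the content of the paper's separate Lemma~\ref{lm: minimization by merging operation} — whereas the paper's own proof of the corollary invokes the bijectivity of $\psi$ directly.
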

\begin{proof}
When $\psi$ in Lemma~\ref{lm:f-view} is a \emph{bijective} mapping from $V^f_j$
to $\calP^{j}_{G,\sigma,X}(v)$ for all $j$,
the f-view is minimal.
Thus, the f-view has $|\calP^{j}_{G,\sigma,X}(v)|$ nodes at
depth $j\ (0\leq j\leq h)$. 

Let $\widetilde{T}^h_{X,a}(v)$ and $\widetilde{T}^h_{X,b}(v)$ be
any two minimal f-views
of $T^h_{G,\sigma,X}(v)$, and 
let $\psi _a$ and $\psi _b$ be 
their corresponding bijective mappings $\psi$
defined in Lemma~\ref{lm:f-view}, respectively.
If we define $\phi=\psi _b^{-1}\psi _a$ 
for the inverse mapping $\psi^{-1} _b$ of $\psi _b$,
$\phi$ is a bijective mapping from the node set of 
$\widetilde{T}^h_{X,a}(v)$ to that of $\widetilde{T}^h_{X,b}(v)$.
Suppose that any node $u_a$ at depth $j$ of 
$\widetilde{T}^h_{X,a}(v)$ is mapped by $\psi _a$
to some path set $P$ in $\calP^{j}_{G,\sigma,X}(v)$, 
which is mapped to some node $u_b$ at depth $j$ by 
$\psi ^{-1}_b$. Obviously, $u_a$ and $u_b$ have the same degree and 
have the same label as the first node of paths in $P$.
Let $(u_a,u_a')$ be an edge with any label $x$. 
Node $u_a'$ is then mapped to $\psi_a (u)|_x$, 
which is mapped to node $u'_b$
incident to the directed edge with label $x$
emanating from $u_b$.
For each $u_a$ and $x$,
thus, there is edge $(\phi(u_a),\phi(u_a'))$ with label $x$ if 
edge $(u_a,u_a')$ with label $x$ exists in $\widetilde{T}^h_{X,a}(v)$.
By a similar argument, 
there is edge $(\phi^{-1}(u_b),\phi^{-1}(u_b'))$ with label $x$ if 
edge $(u_b,u_b')$ with label $x$ exists in $\widetilde{T}^h_{X,b}(v)$
over all $u_b$ and $x$.
Thus, $\phi$ is an isomorphism from 
$\widetilde{T}^h_{X,a}(v)$ to $\widetilde{T}^h_{X,b}(v)$.

For the second part of the lemma, if there are $n$ parties,
it is obvious that
$|\calP^{j}_{G,\sigma,X}(v)|\leq n$ for any $j\ (0\leq j\leq h)$.
Since each node has at most $D$ outgoing edges, the lemma follows.
\end{proof}
\subsection{Folded-view minimization}
The idea of the minimization algorithm is to
repeatedly apply the merging operation to the (f-)view to be
minimized until it can no longer be applied.
This idea works well
because of the next lemma.
\begin{lemma}
\label{lm: minimization by merging operation}
  Let $\widehat{T}^{h}_{G,\sigma,X}(v)$ be an f-view for view $T^{h}_{G,\sigma,X}(v)$. $\widehat{T}^{h}_{G,\sigma,X}(v)$ is isomorphic to the
  minimal f-view $\widetilde{T}^{h}_{G,\sigma,X}(v)$ if and only if
  the merging operation is not
  applicable to $\widehat{T}^{h}_{G,\sigma,X}(v)$.
\end{lemma}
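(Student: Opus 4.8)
The plan is to prove both implications through the correspondence between nodes of an f-view and the path sets they define, which is established in Lemma~\ref{lm:f-view} and exploited in Corollary~\ref{cr:minimal-f-view}. Recall that any f-view $\widehat{T}^{h}_{G,\sigma,X}(v)$ carries a mapping $\psi$ sending the nodes at each depth $j$ \emph{onto} $\calP^{j}_{G,\sigma,X}(v)$, and that by the proof of Corollary~\ref{cr:minimal-f-view} the f-view is minimal precisely when $\psi$ is a \emph{bijection} at every level, i.e.\ when distinct nodes at the same depth always define distinct path sets. Thus the lemma reduces to the statement: the merging operation is inapplicable to $\widehat{T}^{h}_{G,\sigma,X}(v)$ if and only if no two distinct nodes at a common depth define the same path set.

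The ``only if'' direction (minimal $\Rightarrow$ inapplicable) is the easy one, and I would argue its contrapositive. Suppose the merging operation applies to some pair $u,u'$ at depth $j$. As noted in the proof of Lemma~\ref{lm:merging operation}, the merging conditions force the maximal-length paths starting at $u$ and at $u'$ to coincide, so $u$ and $u'$ define the same path set. Then $\psi$ is not injective at depth $j$, giving $\abs{V^f_j}>\abs{\calP^{j}_{G,\sigma,X}(v)}$, whereas Corollary~\ref{cr:minimal-f-view} says the minimal f-view has exactly $\abs{\calP^{j}_{G,\sigma,X}(v)}$ nodes at depth $j$; since an isomorphism preserves the number of nodes at each depth, $\widehat{T}^{h}_{G,\sigma,X}(v)$ cannot be isomorphic to the minimal f-view.

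For the ``if'' direction (inapplicable $\Rightarrow$ minimal) I would prove, by \emph{downward} induction on the depth $j$ from $h$ to $0$, that no two distinct nodes at depth $j$ define the same path set. The base case $j=h$ is immediate: two leaves with equal path sets share a label and, being leaves, satisfy the merging conditions, contradicting inapplicability. For the inductive step, take distinct $u,u'$ at depth $j$ with a common path set $P$. Then $u$ and $u'$ have the same label, and for every outgoing edge of $u$ with label $x$ there is a matching edge of $u'$ with label $x$, the two destinations defining the same path set $P|_x$. The key point is that the first component of an edge label $x$ is the source port number $i$, so same-label edges are exactly same-port $i$-edges; and by the induction hypothesis at depth $j+1$, the equality of the child path sets forces the destinations $\Adj_i(u)$ and $\Adj_i(u')$ to be the \emph{same} node. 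Hence $u$ and $u'$ have identical labels, identical outgoing edge labels, and identical destinations, so the merging operation applies to them---a contradiction. Combining this injectivity with the surjectivity of $\psi$ from Lemma~\ref{lm:f-view} shows $\psi$ is bijective at every level, so $\widehat{T}^{h}_{G,\sigma,X}(v)$ is minimal by Corollary~\ref{cr:minimal-f-view}.

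The main obstacle is precisely the gap between ``defines the same path set'' and ``is mergeable'': the merging operation requires two nodes to point to the \emph{same} child node, not merely to children defining equal path sets. This is exactly why the argument must run bottom-up, using the induction hypothesis at depth $j+1$ to upgrade path-set equality of the children into genuine node identity before concluding that $u$ and $u'$ are mergeable.
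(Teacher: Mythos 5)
Your proof is correct and follows essentially the same route as the paper's: the paper picks the largest depth at which the node count exceeds the number of distinct path sets and derives a contradiction there, which is just a repackaging of your downward induction, and both arguments hinge on the same key point that injectivity at the deeper level forces the same-labeled children of two path-set-equivalent nodes to be the \emph{identical} node, thereby upgrading equal path sets to mergeability. Your handling of the easy direction via node counts at each depth is more explicit than the paper's one-line remark, but nothing of substance differs.
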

\begin{proof}
  Obviously, no more merging operations can be applied to the minimal
  f-view. We will prove the other direction in the following.
Suppose that there is a non-minimal f-view $\widehat{T}^h$ expressing $P^{h}_{G,\sigma,X}(v)$, to which no
  more merging operations
  can be applied. $\widehat{T}^h$ must have more than $|\calP^{j}_{G,\sigma,X}(v)|$ nodes at depth $j$
  for some $j$.
Let $j'$ be the largest such $j$.  
From Lemma~\ref{lm:merging operation}, 
for any node $u^{j'}$ at depth $j'$ in $\widehat{T}^h$, 
the path set defined for $u^{j'}$ is in $\calP^{j'}_{G,\sigma,X}(v)$
if $\widehat{T}^h$ is an f-view of $T^{h}_{G,\sigma,X}(v)$.
Thus, $\widehat{T}^h$ must have at
  least one pair of nodes at depth $j'$ such that the path sets
  defined for
  the two nodes are identical. 
This implies the next facts:
the two nodes have the same label;
if $j'+1\geq h$,
the outgoing edges of the two nodes
  with the same edge label are directed to the same node at depth $j'+1$, since no two
  nodes at depth $j'+1$ have the same path set. Thus, the merging
  operation can still be applied to the node pair. This is a contradiction.
\end{proof}
\subsubsection{The algorithm}
The minimization algorithm applies the merging operation to every node
in the (f-)view in a bottom-up manner, i.e.,
in decreasing order of node depth.
Clearly, this ensures that no application of the merging operation at any depth $j$
creates a new node pair at depth larger than $j$ to which the
merging operation is applicable.
Thus, no more merging operations can be applied when the algorithm halts.
It follows that the algorithm outputs the minimal f-view
by Lemma~\ref{lm: minimization by merging operation}.

In order to apply the merging operation, we need to be able to decide
if two edges are directed to the same node, which implies that we need to
identify each node. We thus assign a unique identifier, denoted by
$\id (u)$, to each node $u$ in the (f-)view. 
In order to efficiently check condition (2)
of Definition~\ref{def:merging-operation} (i.e., the definition of the merging operation), we also construct a data structure for each node that includes
the label of the node, and 
the labels and destination node id of all outgoing edges of the node: 
the data structure, called \emph{key}, for node $u$ is of the form of
$(\Label(u),\ekey(u))$. Here, 
$\Label(u)$ is the label of $u$;
$\ekey(u)$
is a linked list
of pairs $(x_1,y_1)\cdots (x_{d_u},y_{d_u})$
of the label $x_i$ and
destination $y_i$, respectively, of the $i$-edge of $u$ for all $i$,
where $d_u$ is the number of the outgoing edges of $u$.

We prepare another linked list $V^f_j$ 
of all nodes at depth $j$ for each $j$ to make sure that
the merging operation is applied to 
the set of all nodes at depth $j$ before moving on to depth $j-1$.
Data structures  $\id(u)$ and $\ekey(u)$ for every node $u$ and $V^f_j$
for every depth $j$
can be constructed by one traversal of the
input f-view in a breadth-first manner.

We then perform the merging operation in a bottom-up manner.
For each depth $j$ from $h$ to $1$, the next operations are
performed. (Notice that the algorithm terminates at $j=1$, since depth 0 holds
only the root, which is never removed.)
First, we sort nodes $u$'s in $V^f_j$ by regarding their keys, i.e.,  $(\Label(u),\ekey(u))$ as a binary string,
which makes
all nodes having the same key adjacent to each
other in $V^f_j$.
For each maximal subsequence of those nodes $u$'s in $V^f_j$ which have the same pair $(\Label(u),\ekey(u))$,
we will eliminate all nodes but the first node in the subsequence
and redirect all incoming edges of the eliminated nodes to the first
node, which realizes the merging operation.
Here we introduce variables $\primary$ and $\primarykey$
to store 
the first node $u$ and its key $(\Label(u),\ekey(u))$, respectively,
of the subsequence currently being processed.
Concretely, we perform the next operation on every node $u$ in $V^f_j$ in sorted order:
if $(\Label(u),\ekey(u))$ is equal to $\primarykey$ (i.e., conditions (1) and (2)
of the merging operation are met),
remove $u$ from $V^f_j$,
redirect all incoming edges of $u$ to $\primary$, 
remove $u$ and all its outgoing edges from $\widehat{T}^h$,
and set $\id(u)$ to $\id (\primary)$ 
to make $\ekey$ consistent with the merger;
otherwise, set $\primary$ to $u$ and $\primarykey$ to $(\Label(u),\ekey(u))$.

More precisely, the minimization algorithm described in
Figure~\ref{fig:f-view minimization} is invoked with a(n) (f-)view $T^h$ and its
depth $h$ (actually, $h$ can be computed from $T^h$, but we give $h$ as
input to simplify the algorithm). 
The minimization algorithm calls Subroutine~Traversal~(I) shown
in Figure~\ref{fig:traversal I} to
compute $\id$, $\ekey$ and $V_j^f$,
where
$\CONTINUE$ starts the
next turn of the inner-most loop where it runs with the updated index;
$\DEQUEUE(Q)$ removes an element from FIFO queue $Q$ and returns the
element; $\ENQUEUE (Q,q)$ appends element $q$
to FIFO queue $Q$;
$\CON(L,l)$ appends element $l$ to the end of list $L$.

The minimization algorithm will be used as a
subroutine when constructing the minimal f-view 
from scratch as described later.
\subsubsection{Time Complexity}
The next lemma states the time complexity of the minimization algorithm.
\begin{figure}[t]
\begin{center}
\hrulefill\\
\textbf{F-View Minimization Algorithm}
\vspace{-2mm}
\begin{description}
\setlength\itemsep{-3pt}
  \item[Input:]
    a(n) (f-)view $\widehat{T}^h$ of depth $h$, and a positive integer $h$
  \item[Output:]
    minimal f-view $\widetilde{T}^h$
\end{description}
\vspace{-6mm}
\noindent
\begin{enumerate}
\setlength\itemsep{-3pt}
\item Call Subroutine~Traversal~(I)
with $\widehat{T}^h$,\\
to compute $\id$, $\ekey$ and $V^f_j\ (j=1,\ldots ,h)$ by breadth-first traversal of $\widehat{T}^h$.
\item For $j:=h$ down to 1, do the following steps.
\vspace{-6pt}
  \begin{enumerate}
\setlength\itemsep{0pt}
\item Initialize $\primarykey$ to an empty list.
  \item Sort all elements $u$ in $V^f_j$ 
       by the value obtained by regarding $(\Label(u),\ekey(u))$ as a binary string.
  \item While $V^f_j\neq \emptyset$, repeat the following steps.
\vspace{-6pt}
\begin{enumerate}
\setlength\itemsep{0pt}
\item Remove the first element of $V^f_j$ and set $u$ to the element.
  \item If $(\Label(u),\ekey(u))=\primarykey$, \\
\hspace{5mm}    redirect all incoming edges of $u$ to $\primary$,\\
\hspace{5mm}    eliminate $u$ and all its outgoing edges (if they
exist), and\\
\hspace{5mm}    set $\id(u):=\id(\primary)$ 
to make $\ekey$ consistent with this merger;\\
\hspace{4mm}otherwise,\\
\hspace{6mm}set $\primary:=u$ and $\primarykey:=(\Label(u),\ekey(u))$.
\end{enumerate}
\end{enumerate}
\vspace{-6pt}
\item Output the resulting graph $\widetilde{T}^h$.
\end{enumerate}
\vspace{-1\baselineskip}
\hrulefill
\end{center}
\vspace{-1\baselineskip}
  \caption
[Folded view minimization algorithm.]
{F-view minimization algorithm.}
  \label{fig:f-view minimization}
\end{figure}
\begin{figure}[t]
\begin{center}
\hrulefill\\
\textbf{Subroutine~Traversal~(I)}
\vspace{-2mm}
\begin{description}
\setlength\itemsep{-3pt}
  \item[Input:]
    a(n) (f-)view $\widehat{T}^h$
  \item[Output:]
$\id$, $\ekey$ and $V^f_j\ (j=1,\ldots, h)$ 
\end{description}
\vspace{-6mm}
\begin{enumerate}
\setlength\itemsep{-3pt}
\item
  Perform $\ENQUEUE$$(Q,u_r)$,
where $u_r$ is the root of $\widehat{T}^h$ and $Q$
  is a FIFO queue initialized to an empty queue.\\
  Set $\depth(u_r):=0$ and $\fsize:=1$.\\
\item 
  Set $\id(u_r):=\fsize$ and then set $\fsize := \fsize +1$.
\item While $Q$ is not empty, repeat the following steps.
\vspace{-6pt}
\begin{enumerate}
\setlength\itemsep{0pt}
  \item[3.1] Set $u:=\DEQUEUE(Q)$ and then initialize $\ekey(u)$ to an
    empty list.\\
    If $u$ is a leaf, $\CONTINUE$.
  \item[3.2] For $i:=1$ to $d_u$, where $d_u$ is the degree of $u$,
\vspace{-6pt}
\begin{enumerate}
\setlength\itemsep{0pt}
    \item[3.2.1] Set $u_i:=\Adj_i(u)$.\\
                 If $u_i$ has already been traversed, $\CONTINUE$.
    \item[3.2.2] Set $\id(u_i):=\fsize$ and then set $\fsize:=\fsize+1$.
    \item[3.2.3] Set $\depth(u_i):=\depth(u)+1$.\\
                 Perform $\CON(V^f_{\depth(u_i)},u_i)$ and $\ENQUEUE$$(Q,u_i)$.
    \item[3.2.4] Perform $\CON(\ekey(u), (\Label((u,u_i)),\id(u_i)))$.\\
\end{enumerate}
\end{enumerate}
\vspace{-6pt}
\item Output $\id$, $\ekey$ and $V^f_j\ (j=1,\ldots ,h)$.
\end{enumerate}
\vspace{-1\baselineskip}
\hrulefill
\end{center}
\vspace{-1\baselineskip}
  \caption{Subroutine~Traversal~(I).%
}
  \label{fig:traversal I}
\end{figure}
\begin{lemma}
\label{lm:complexity of f-view minimization}
If the input of the minimization algorithm is an f-view with node set
$V^f$ for an $n$-party distributed system, and any node label is
represented by an $O(\log L)$-bit value for some positive integer $L$,
the time complexity of the algorithm is $O(|V^f|(\log |V^f|)(\log
L+D\log (D|V^f|)))$, where $D$ is the maximum degree of the nodes in
the underlying graph.
\end{lemma}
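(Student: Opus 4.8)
The plan is to charge the running time to the two phases of the algorithm separately---the breadth-first pass of Subroutine~Traversal~(I) and the bottom-up sequence of sort-and-merge passes in step~2---and then show that the sorting cost dominates, reproducing the stated bound. Throughout I work in the bit-complexity model suggested by the appearance of $\log L$, $\log D$, and $\log|V^f|$ in the statement, so that comparing or copying two $\ell$-bit strings costs $O(\ell)$.

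First I would pin down the bit-length $\ell$ of the sort key $(\Label(u),\ekey(u))$ attached to each node $u$. The node label $\Label(u)$ occupies $O(\log L)$ bits by hypothesis. Each entry of $\ekey(u)$ is a pair $(\Label((u,u_i)),\id(u_i))$, in which an edge label is a pair of port numbers each bounded by the degree $D$, hence $O(\log D)$ bits, while every identifier handed out by $\fsize$ is at most $|V^f|$ and so costs $O(\log|V^f|)$ bits. Since $u$ has $d_u\le D$ outgoing edges, $\ekey(u)$ has length $O(D\log(D|V^f|))$, and therefore $\ell=O(\log L+D\log(D|V^f|))$, exactly the second factor appearing in the target bound.

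Next I would bound each phase against $\ell$. The traversal visits every node once and does $O(d_u)$ work per node to build $\ekey(u)$ and append the node to its depth list $V^f_{\depth(u)}$, for $O(|V^f|\,\ell)$ time in total (this is where the incoming-adjacency information needed later for redirection is also recorded). For the main loop, fix a depth $j$ and let $m_j=|V^f_j|$. Step~2(b) sorts the $m_j$ keys with $O(m_j\log m_j)$ comparisons at $O(\ell)$ each, i.e.\ $O(m_j(\log m_j)\ell)$ time. The scan in step~2(c) compares each key once against $\primarykey$ (costing $O(m_j\ell)$ in aggregate) and, on a match, redirects the incoming edges of the eliminated node and sets $\id(u):=\id(\primary)$ so that the parent keys subsequently read the canonical identifier; since each edge is redirected at most once---its destination is eliminated only once, and a node that survives a pass is never later removed---the total redirection cost over the whole run is at most $O(|E^f|\log|V^f|)=O(D|V^f|\log|V^f|)$.

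Finally I would sum the sorting cost over depths. Using $\log m_j\le\log|V^f|$ together with $\sum_j m_j\le|V^f|$ gives
\[
\sum_{j=1}^{h} m_j(\log m_j)\ell \;\le\; \ell(\log|V^f|)\sum_{j=1}^{h} m_j \;=\; O\!\bigl(|V^f|(\log|V^f|)\,\ell\bigr),
\]
which after substituting $\ell=O(\log L+D\log(D|V^f|))$ is precisely the claimed bound. The traversal cost $O(|V^f|\ell)$, the scan cost $O(|V^f|\ell)$, and the redirection cost $O(D|V^f|\log|V^f|)$ are each smaller than this by at least a factor $\log|V^f|$ and are therefore absorbed. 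The step I expect to require the most care is the combination of the key-size accounting with the verification that maintaining identifier consistency across levels (the $\id(u):=\id(\primary)$ update plus the redirection of incoming edges) stays within the edge budget rather than inflating the per-merge cost; once that is established, the monotonicity estimate $\sum_j m_j\log m_j\le|V^f|\log|V^f|$ closes the argument.
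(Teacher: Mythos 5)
Your proposal is correct and follows essentially the same route as the paper's proof: identify the key bit-length $\ell=O(\log L+D\log(D|V^f|))$, charge $O(\ell)$ per comparison in the per-depth sorts to get the dominant term $O(|V^f|(\log|V^f|)\ell)$, and absorb the traversal, scan, and edge-redirection costs by noting each node is examined and each edge redirected at most once. The only difference is presentational (you make the summation $\sum_j m_j\log m_j\le|V^f|\log|V^f|$ explicit where the paper states the aggregate comparison count directly), so no further comment is needed.
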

\begin{proof}
  We first consider Subroutine~Traversal~(I) in
  Figure~\ref{fig:traversal I}.  It is easy to see that step~1 takes
  constant time, and that step~2 takes $O(\log|V^f|)$ time.  Notice
  that, for a standard implementation of $\DEQUEUE$, $\ENQUEUE$ and
  $\CON$, each call of them takes constant time.  Step 3 traverses the
  input (f-)view in a breadth-first manner.  Hence, the time required
  for step 3 is proportional to the number of edges, which is at most
  $D|V^f|$.  Step~3.1 takes constant time,  while steps~3.2.2 and
  3.2.3 take $O(\log \abs{V^f})$.  It follows that step~3 takes
  $O(D|V^f|\log \abs{V^f})$ time.  The total time complexity of
  Traversal~(I) is $O(D|V^f|\log \abs{V^f})$.
  
  Now we consider the minimization algorithm in Figure~\ref{fig:f-view
    minimization}.  In step 2, steps 2.2 and 2.3 are dominant.  In
  step 2.2, sorting all elements in $V^f_j$ for all $j$ needs
  $O(|V^f|\log |V^f|)$ comparisons and takes $O(\log
  L+D\log(D\abs{V^f}))$ time for each comparison, since $\Label(u)$
  and $\ekey(u)$ have $O(\log L+D\log(D\abs{V^f}))$ bits for any $u$
  ($\ekey(u)$ has at most $D$ pairs of an edge label and a node $\id$,
  which are $\lceil\log D\rceil$ bits and $\lceil\log|V^f|\rceil$
  bits, respectively).   Thus step 2.2 takes $O(|V^f|(\log|V^f|)(\log
  L+D\log(D|V^f|))$ time.
  
  Step 2.3 repeats steps 2.3.1 and 2.3.2 at most $|V^f|$ times, since
  steps 2.3.1 and 2.3.2 are each performed once for every node in
  $\widehat{T}^h$ except the root (the steps are never performed on the
  root).  Clearly, each run of step 2.3.1 takes constant time.  In
  each run of step 2.3.2,  (a) it takes $O(\log L +D\log (D|V^f|))$
  time to compare $(\Label(u),\ekey(u))$ with  $\primarykey$,
  (b) it takes $O(d_u^I)$ time to redirect all incoming edges of $u$
  to $\primary$ where $d_u^I$ is the number of the incoming edges of
  $u$, and (c) it takes $O(d_u)$ time to remove $u$ and all its
  outgoing edges (if they exist) from $\widehat{T}^h$.  For more precise
  explanations of (b) and (c), the next data structure is assumed to
  represent $\widehat{T}^h$: each $u$ has two linked lists of incoming
  edges and outgoing edges such that  each edge is registered in the
  incoming-edge list of its destination and the outgoing-edge list of
  its source, and the two entries of the edge lists have pointers to
  each other.
  By using this data structure, we can easily see that the edge
  redirection in (b) and the edge removal in (c) can be done in
  constant time for each edge, since only a constant number of
  elements need to be appended to or removed from the linked lists.

  Finally, it takes
  constant time to set $\id (u)$,  $\primary$ and $\primarykey$ to new
  values.
  The total time required for step 2.3 is proportional to
\begin{eqnarray*}
\lefteqn{
O\left(
|V^f|(\log L+D\log (D|V^f|)) +
\sum _{u\in V^f}d_u^I
+
\sum _{u\in V^f}d_u
\right)
}
\\
&
=
&O\left(
|V^f|(\log L+D\log (D|V^f|)) +D|V^f|+D|V^f|
\right),
\end{eqnarray*}
since no edge can be redirected or removed more than once in step~2.3.2.  Hence, step 2.3 takes $O(|V^f|(\log L+D\log (D|V^f|))$ time.

By summing up these elements, the total time complexity is given as
$O(|V^f|(\log|V^f|)(\log L+D\log(D|V^f|)).$
\end{proof}
\subsection{Minimal folded-view construction}
\label{subsec:minimal-folded-view}
We now describe the entire algorithm that constructs a minimal f-view
of depth $h$ from scratch by using the f-view minimization algorithm
as a subroutine.  
This construction algorithm is almost the same as
the original view construction algorithm except that parties exchange 
and perform local operations on
f-views instead of views: every party
constructs an f-view $\widehat{T}^j_{G,\sigma,X}(v)$ of depth $j$  by
connecting each received minimal f-view
$\widetilde{T}^{j-1}_{G,\sigma,X}(v_i)$ of depth $j-1$ with the root
without unfolding them, and then applies the f-view minimization
algorithm to $\widehat{T}^j_{G,\sigma,X}(v)$.  
It is stressed that
$\widehat{T}^{j}_{G,\sigma,X}(v)$ is an f-view, since
$\widehat{T}^{j}_{G,\sigma,X}(v)$ can be constructed from view
$T^{j}_{G,\sigma,X}(v)$ by applying the merging operation to every
subtree rooted at depth 1.
Thus, the minimization algorithm can be applied to
$\widehat{T}^{j}_{G,\sigma,X}(v)$.  More precisely, each party $l$
having $d_l$ adjacent parties  and $x_l$ as his label performs the
f-view construction algorithm described in  Figure~\ref{fig:f-view
  construction} with $h$, $d_l$ and $x_l$, in which we assume that $v$
is the node corresponding to party $l$ in the underlying graph.
\begin{figure}[t]
\begin{center}
\hrulefill\\
\textbf{F-View Construction Algorithm}\\
\begin{description}
\setlength\itemsep{-3pt}
  \item[Input:]
    integers $h$, $d$ and $x$ 
  \item[Output:]
    minimal f-view $\widetilde{T}^{h}_{G,\sigma,X}$, where $X$ is
    the underlying mapping naturally induced by the $x$ values of all parties.
\end{description}
\vspace{-6mm}
\begin{enumerate}
\setlength\itemsep{-3pt}
  \item Generate $\widetilde{T}^0_{G,\sigma,X}(v)$, which consists of only one node
  with label $x$.\\
  \item For $j:=1$ to $h$, perform the following steps.
\vspace{-6pt}
  \begin{enumerate}
\setlength\itemsep{0pt}
      \item Send a copy of $\widetilde{T}^{j-1}_{G,\sigma,X}(v)$ to every adjacent party.
    \item Receive the minimal f-view
      $\widetilde{T}^{j-1}_{G,\sigma,X}(v_i)$ via port $i$ for $1\leq
      i\leq d$, where $v_i$ is the node corresponding to the party
      connected via port $i$.
    \item Construct an f-view $\widehat{T}^{j}_{G,\sigma,X}(v)$
      from $\widetilde{T}^{j-1}_{G,\sigma,X}(v_i)$'s as follows.
\vspace{-6pt}
      \begin{enumerate}
      \item Let root $u_{\rm root}$ of
      $\widehat{T}^{j}_{G,\sigma,X}(v)$ be $\widetilde{T}^0_{G,\sigma,X}(v)$.
      \item Let the $i$th child $u_i$ of $u_{\rm root}$ be the root of
      $\widetilde{T}^{j-1}_{G,\sigma,X}(v_i)$.
      \item Label edge $(u_{\rm root},u_i)$ with
$(i,i')$, where $i'$ is the port through which
$v_i$ sent $\widetilde{T}^{j-1}_{G,\sigma,X}(v_i)$,\\
i.e.,
$i':=\sigma [v_i](v,v_i)$.
      \end{enumerate}
\vspace{-6pt}
      \item  Call the f-view minimization algorithm with
        $\widehat{T}^{j}_{G,\sigma,X}(v)$ and $j$ to obtain $\widetilde{T}^{j}_{G,\sigma,X}(v)$.\\
  \end{enumerate}
\item Output $\widetilde{T}^{h}_{G,\sigma,X}(v)$.
\end{enumerate}
\vspace{-1\baselineskip}
\hrulefill
\end{center}
\vspace{-1\baselineskip}
  \caption
[Folded view construction algorithm.]
{F-view construction algorithm.}
\label{fig:f-view construction}
\end{figure}

\begin{lemma}
  For any distributed system of $n$ parties labeled with  $O(\log
  L)$-bit values, the f-view construction algorithm described in  Figure~\ref{fig:f-view
  construction} constructs the
  minimal f-view of depth $h(=O(n))$ in $O(Dh^2n(\log n)(\log Ln^D))$
  time for each party and $O(|E|h^2n\log (LD^D))$ communication
  complexity, where $|E|$ and $D$ are the number of edges and the
  maximum degree, respectively, of the nodes of the underlying graph.
\label{lm: f-view construction}
\end{lemma}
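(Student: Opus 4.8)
The plan is to prove the statement in two parts: first establish by induction on the depth $j$ that the algorithm outputs the correct object, and then bound its cost by feeding the size estimates of Corollary~\ref{cr:minimal-f-view} into the cost of the minimization subroutine (Lemma~\ref{lm:complexity of f-view minimization}). For correctness I would induct on $j$ and show that after the $j$th iteration every party $l$ holds the minimal f-view $\widetilde{T}^{j}_{G,\sigma,X}(v)$. The base case $j=0$ is immediate from step~1. For the inductive step, assume each neighbour $v_i$ has correctly produced $\widetilde{T}^{j-1}_{G,\sigma,X}(v_i)$; step~2 attaches these as the children of a fresh root labelled $x$, with edge $(u_{\rm root},u_i)$ carrying the port pair $(i,\sigma[v_i](v,v_i))$, producing $\widehat{T}^{j}_{G,\sigma,X}(v)$. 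As noted just before the lemma, $\widehat{T}^{j}_{G,\sigma,X}(v)$ is a legitimate f-view of the genuine view $T^{j}_{G,\sigma,X}(v)$, since it arises from $T^{j}_{G,\sigma,X}(v)$ by performing, inside each depth-$1$ subtree, exactly the merging operations that produced the $\widetilde{T}^{j-1}$'s; hence by Lemma~\ref{lm:f-view} it defines the same path set. Applying the minimization algorithm, whose correctness is guaranteed by Lemma~\ref{lm: minimization by merging operation}, then yields the unique minimal f-view $\widetilde{T}^{j}_{G,\sigma,X}(v)$, closing the induction. The only subtlety to address is that the node identifiers assigned locally by distinct parties are irrelevant, since merging and the path-set characterization depend only on labels and structure.

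For the time bound I would first estimate the size of the pre-minimization f-view $\widehat{T}^{j}_{G,\sigma,X}(v)$. Each received $\widetilde{T}^{j-1}$ has $O((j-1)n)=O(jn)$ nodes by Corollary~\ref{cr:minimal-f-view}, so attaching the at most $D$ of them to a root gives a node set $V^f$ with $|V^f|=O(Djn)$. Plugging $|V^f|=O(Djn)$ into Lemma~\ref{lm:complexity of f-view minimization} and using $D\le n-1$ and $j\le h=O(n)$ — so that both $\log|V^f|=O(\log n)$ and $\log(D|V^f|)=O(\log n)$ — the minimization at level $j$ costs $O\bigl(Djn(\log n)(\log L+D\log n)\bigr)=O\bigl(Djn(\log n)\log(Ln^D)\bigr)$ time, while building $\widehat{T}^{j}$ and the breadth-first bookkeeping are dominated by this. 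Summing over $j=1,\dots,h$ contributes the factor $\sum_{j=1}^{h}j=O(h^2)$, giving the claimed per-party time $O(Dh^2n(\log n)\log(Ln^D))$.

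The communication bound uses the same size estimate. At iteration $j$ each party transmits one copy of $\widetilde{T}^{j-1}_{G,\sigma,X}(v)$ across each incident edge; by Corollary~\ref{cr:minimal-f-view} this f-view has $O(jn)$ nodes, and describing a node by its $O(\log L)$-bit label together with the port-pair labels of its at most $D$ outgoing edges uses $O(\log L+D\log D)=O(\log(LD^D))$ bits, so each transmitted f-view is $O(jn\log(LD^D))$ bits. Summing over both endpoints of every edge and over the $h$ iterations gives $\sum_{j=1}^{h}O(|E|\,jn\log(LD^D))=O(|E|h^2n\log(LD^D))$, as claimed. The main obstacle, and the step I would treat most carefully, is the size accounting for $\widehat{T}^{j}$ together with the verification that all the logarithms appearing in Lemma~\ref{lm:complexity of f-view minimization} collapse to $O(\log n)$ once $D\le n$ and $j=O(n)$ are invoked; getting these collapses right is what turns the generic minimization cost into the stated bound and fixes the precise exponents on $n$ and the arguments of the logarithms.
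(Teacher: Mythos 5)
Your proposal is correct and follows essentially the same route as the paper: the same size estimate $|V^f|=O(jDn)$ for the pre-minimization f-view, the same invocation of Lemma~\ref{lm:complexity of f-view minimization} with the logarithms collapsing to $O(\log n)$ since $j=O(n)$ and $D<n$, and the same $O(jn\log(LD^D))$-bit encoding summed over edges and iterations for the communication bound. The only difference is that you spell out the correctness induction explicitly, which the paper handles in the discussion preceding the lemma rather than inside the proof.
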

\begin{proof}
  $\widetilde{T}^{j-1}_{G,\sigma,X}(v)$ has at most $j\cdot n$ nodes,
  and every node has at most $D$ outgoing edges, each of which is
  labeled with an $O(\log L)$-bit value.
  Thus, $\widetilde{T}^{j-1}_{G,\sigma,X}(v)$ can be expressed by
  $O(jn\log L+jDn\log D)=O(jn\log (LD^D))$ bits.  It follows that
  steps 2.1 and 2.2 take $O(jDn\log (LD^D))$ time, since any party has
  at most $D$ neighbors.  
Since $\widehat{T}^{j}_{G,\sigma,X}(v)$ consists of a root and
$D$ minimal f-views of depth $j-1$, 
$\widehat{T}^{j}_{G,\sigma,X}(v)$ has at
  most $(j\cdot D\cdot n+1)$ nodes.  From Lemma~\ref{lm:complexity of f-view minimization}, step 2.4 in Figure~\ref{fig:f-view construction} takes
$$
O(jDn\log (jDn)(\log L+D\log(D\cdot jDn)))
=O(jDn(\log n)(\log L+D\log n))
$$
time for each $j$, since $j=O(n)$.  Thus the total time complexity
is $$
O\left(\sum_{j=1}^{h}jDn(\log n)(\log L+D\log n)\right)
=O(Dh^2n(\log n)(\log Ln^D)).
$$

We now consider the communication complexity.
Since the
minimal f-view of depth $j$ can be expressed by $O(jn\log
(LD^D))$ bits as described above,
the total number of the bits exchanged by all parties is  $O(j\cdot\abs{E} n\log (LD^D))$
for each $j$.  It follows that the total communication complexity  to
construct an f-view of depth $h$ is
$$
O\left(
\sum_{j=1}^{h}(j\abs{E}n\log (LD^D))
\right)
=O\left(
|E|h^2n\log (LD^D)
\right).
$$
\end{proof}
\subsection{Counting the number of parties having specified values}
In many cases, including ours,
the purpose of constructing a view 
is
to 
compute $|\Gamma _{X}^{(n-1)}(S)|$ 
for any
set $S\subseteq X$
in order to compute $c _{X}(S)=n|\Gamma _{X}^{(n-1)}(S)|/|\Gamma
_{X}^{(n-1)}(X)|$, i.e.,
the number of parties having values in $S$.
We will describe an algorithm that computes $|\Gamma ^{(n-1)}_X(S)|$
for given minimal f-view $\widetilde{T}^{2(n-1)}_X(v)$,
set $S$, and $n$.  
Hereafter, ``\emph{a sub-f-view rooted at $u$}'' means
the subgraph of an f-view
induced by node $u$ and all other nodes
that can be reached from $u$ via directed edges.
\subsubsection{View Counting Algorithm}
The algorithm computes the maximal set $W$ of those
nodes of depth of at most $n-1$ 
in $\widetilde{T}^{2(n-1)}_{X}(v)$
which define distinct path sets of length
$n-1$.
The algorithm then
computes $|\Gamma _{X}^{(n-1)}(S)|$ by counting
the number of those nodes in $W$ which
are labeled with values in $S$.

To compute $W$, the algorithm 
first sets $W$ to $\{ u_r\}$, where $u_r$ is the root of $\widetilde{T}^{2(n-1)}_{X}(v)$,
and repeats the following operations for
every node $u$ of $\widetilde{T}^{2(n-1)}_{X}(v)$
at depth of at most $n-1$
in a breadth-first order:
For each node $\hat{u}$ in $W$,
the algorithm calls Subroutine P (described later)
with $u$ and $\hat{u}$
to test if the sub-f-view rooted at $u$ 
has the same path set of length $n-1$ as that rooted at $\hat{u}$,
and sets $W:=W\cup \{u\}$ if the test is false
(i.e., the two sub-f-views do not have the same path set of length $n-1$).
After processing all nodes at depth of at most $n-1$,
we can easily see that $W$ is the maximal 
subset of nodes at depth of at most $n-1$
such that no pair of sub-f-views rooted at nodes in $W$ have
a common path set of length $n-1$.

The algorithm is precisely described in
Figure~\ref{fig: view-counting},
in which
Subroutine~Traversal~(II) is called in the first step
in order to prepare the next objects,
which helps the breadth-first traversals performed in the algorithm:
(1) the size, denoted by $\fsize$, of $\widetilde{T}^{2(n-1)}_{X}(v)$,
(2) function 
$\depth:V^f\rightarrow \{0,1,\ldots, 2(n-1)\}$
that gives the depth of any node in $\widetilde{T}^{2(n-1)}_{X}(v)$,
(3) bijective mapping
$\id :V^f\rightarrow \{1,\ldots,|V^f|\}$
that gives the order of breadth-first traversal,
and (4) the inverse mapping $\id ^{-1}$ of $\id$.
Although Subroutine~Traversal (II) is 
just a breadth-first-traversal based subroutine,
we give a precise description
in
Figure~\ref{fig:traversal II}
just to support complexity analysis described later,
where
$\CONTINUE$ starts the
    new turn of the inner-most loop where it runs with the updated
    index; 
$\BREAK$ quits the inner-most loop and moves on to the next operation;
$\DEQUEUE(Q)$ removes an element from FIFO queue Q and returns the element; $\ENQUEUE(Q,q)$ appends $q$
to $Q$. These operations are assumed to be implemented
in a standard way.

\begin{figure}[t]
\begin{center}
\hrulefill\\
\textbf{View Counting Algorithm}\\
\begin{description}
\setlength\itemsep{-3pt}
  \item[Input:]
    minimal f-view $\widetilde{T}^{2(n-1)}_X(v)$, subset $S$ of the
    range of $X$, and integer $n$
  \item[Output:]
    $|\Gamma_X ^{(n-1)}(S)|$
\end{description}
\vspace{-6mm}
\begin{enumerate}
\setlength\itemsep{-3pt}
\item Call Subroutine~Traversal~(II) 
with $\widetilde{T}^{2(n-1)}_X(v)$ to
  compute $\fsize$,
  $\depth$, $\id$ and
  $\id^{-1}$.
\item Let $W$ be $\{u_r\}$, where $u_r$ is the root of $\widetilde{T}^{2(n-1)}_X(v)$.
\item For $i:=2$ to $\fsize$, perform the next operations.
\vspace{-6pt}
  \begin{enumerate}
\setlength\itemsep{-0pt}
  \item If $\depth(\id^{-1}(i))>n-1$, $\BREAK$; otherwise set $u:=\id^{-1}(i)$.
  \item For each $\hat{u}\in W$, perform the next operations.\\
    \begin{enumerate}
    \item Call Subroutine~P with integer $n$, two functions
      ($\depth$, $\id$), and two
      sub-f-views rooted at $u$ and $\hat{u}$, in order to test if the two sub-f-views have the same path set of length
$n-1$.\\
    \item Set $W := W\cup \{u\}$ if Subroutine~P outputs ``$\mathsf{No}$.''
    \end{enumerate}
  \end{enumerate}
\vspace{-6pt}
\item Count the number $n_S$ of the nodes in $W$ that are labeled
  with some value in $S$.
\item Output $n_S$.
\end{enumerate}
\vspace{-1\baselineskip}
\hrulefill
\end{center}
\vspace{-1\baselineskip}
  \caption{View counting algorithm.}
\label{fig: view-counting}
\end{figure}

\begin{figure}[t]
\begin{center}
  \hrulefill\\
\textbf{Subroutine~Traversal~(II)}\\
\begin{description}
\setlength\itemsep{-3pt}
  \item[Input:]
    minimal f-view $\widetilde{T}^{2(n-1)}_X(v)$.
  \item[Output:]
     variable $\fsize$ of $\widetilde{T}^{2(n-1)}_X(v)$, and functions
     $\depth$, $\id$ and $\id^{-1}$.
\end{description}
\vspace{-6mm}
  \begin{enumerate}
\setlength\itemsep{-3pt}
\item $\ENQUEUE (Q,u_{\rm root})$,
 where $Q$ is a FIFO queue
  initialized to an empty queue, 
and $u_{\rm root}$ is the root of
$\widetilde{T}^{2(n-1)}_X(v)$.\\
Set $\fsize:=1$.
\item Set $\id(u_{\rm root}):=\fsize$ and $\id^{-1}(\fsize):=u_{\rm
    root}$.\\ 
   Set $\depth(u_{\rm root}):=0$.
\item Set $\fsize := \fsize +1$.
\item While $Q$ is not empty, repeat the following steps.
\vspace{-6pt}
  \begin{enumerate}
\setlength\itemsep{0pt}
  \item Set $u:=\DEQUEUE(Q)$.
  \item If $u$ is a leaf, $\CONTINUE$.
  \item For $i:=1$ to $d_u$, do the following.
    \begin{enumerate}
\setlength\itemsep{0pt}
    \item Set $u_i:=\Adj_i(u)$.
    \item If $u_i$ has already been traversed, $\CONTINUE$.
    \item Set $\id(u_i):=\fsize$ and $\id^{-1}(\fsize):=u_i$.\\
      \hspace{-3pt} Set $\depth(u_i):=\depth(u)+1$.\\
    \item $\ENQUEUE(Q,u_i)$.
    \item Set $\fsize:=\fsize+1$.
   \end{enumerate}
  \end{enumerate}
\vspace{-6pt}
\item Output $\fsize$, $\depth$, $\id$ and $\id^{-1}$.
  \end{enumerate}
\vspace{-1\baselineskip}
  \hrulefill
\end{center}
\vspace{-1\baselineskip}
  \caption{Subroutine~Traversal~(II).}
  \label{fig:traversal II}
\end{figure}
\subsubsection{Subroutine~P}
Subroutine P is based on the next lemma.
\begin{lemma}
\label{lm:test common path set}
 Suppose that $\widehat{T}^{(n-1)}_{X,a}$ and $\widehat{T}^{(n-1)}_{X,b}$ are any two
  sub-f-views of depth $(n-1)$ of a minimal f-view
  $\widetilde{T}^{2(n-1)}_X(v)$, such that,
for roots $u_r$ and $w_r$ of $\widehat{T}^{(n-1)}_{X,a}$ and
  $\widehat{T}^{(n-1)}_{X,b}$, respectively,
$\depth (u_r)\leq \depth (w_r)\leq (n-1)$.
Let $V_a$ and $V_b$ be the vertex sets of 
$\widehat{T}^{(n-1)}_{X,a}$ and $\widehat{T}^{(n-1)}_{X,b}$, respectively, and let
$E_a$ and $E_b$ be the edge sets of 
$\widehat{T}^{(n-1)}_{X,a}$ and $\widehat{T}^{(n-1)}_{X,b}$, respectively.

$\widehat{T}^{(n-1)}_{X,a}$ and $\widehat{T}^{(n-1)}_{X,b}$ have a common path
set of length $(n-1)$, if and only if
there is a unique homomorphism $\phi$ from $V_a$
onto $V_b$
such that,
\begin{description}
\item[C1:] 
for each $u\in V_a$, $\phi (u)$ has the same label as $u$,
\item [C2:]
for each $u\in V_a$,  there is an edge-label-preserving bijective mapping
from the set of outgoing edges of $u$
to the set of outgoing edges of $\phi(u)$
such that 
any outgoing edge $(u,u')$ of $u$ is mapped to $(\phi(u),\phi(u'))$.
\end{description}
\end{lemma}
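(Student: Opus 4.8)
The plan is to prove the two directions of the biconditional separately, treating the ``only if'' direction (existence and uniqueness of $\phi$) as the substantial one, and to reduce everything to two facts already available: that merging preserves path sets (Lemma~\ref{lm:merging operation}) and that a minimal f-view has, at each depth, distinct path sets at distinct nodes (Corollary~\ref{cr:minimal-f-view}).

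First I would dispose of the ``if'' direction and of uniqueness, both of which are immediate consequences of C1 and C2. Suppose a homomorphism $\phi$ satisfying C1 and C2 exists. Since the edges of an f-view go strictly from depth $j$ to depth $j+1$, the root $u_r$ is the unique node of $V_a$ with no incoming edge; because $\phi$ is onto and respects outgoing edges, $w_r$ must lie in the image and can only be hit by $u_r$, so $\phi(u_r)=w_r$. Condition C2 then forces $\phi$ to send an outgoing edge $(u,u')$ with label $x$ to the unique outgoing edge of $\phi(u)$ with label $x$ (outgoing labels at a node are distinct port pairs). Iterating from $u_r$, $\phi$ carries each labeled path of length $(n-1)$ from $u_r$ bijectively onto one from $w_r$ with identical node and edge labels, so the two length-$(n-1)$ path sets coincide. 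The very same edge-forcing shows uniqueness: any two homomorphisms obeying C1 and C2 agree at $u_r$ and, by induction on depth along root-paths, agree everywhere.

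For the ``only if'' direction I would construct $\phi$ explicitly. Given $u\in V_a$, choose any labeled path from $u_r$ to $u$, read off its label sequence $s$, and set $\phi(u)$ to be the node of $V_b$ reached by following $s$ from $w_r$; the common path set guarantees that $s$ is realizable in $\widehat{T}^{(n-1)}_{X,b}$, and the single-valuedness of each step follows from the distinctness of outgoing labels. One then verifies C1 (the endpoints carry the same node label, being the first nodes of equal residual path sets), C2 (equal residual path sets force a label-preserving bijection of outgoing edges), and surjectivity (each $w\in V_b$ is reached by some $s$ from $w_r$, and that same $s$ lands on a $u$ with $\phi(u)=w$).

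The main obstacle is showing that this $\phi$ is independent of the path chosen, i.e.\ that if $u$ is reached from $u_r$ by two label sequences $s_1,s_2$ of common length $j$ (an f-view node may have several incoming edges), then $s_1$ and $s_2$ land on the same node $w_1,w_2$ when followed from $w_r$. My plan is to upgrade the depth-$(n-1)$ hypothesis to an equivalence of the \emph{full} views by invoking Norris's characterization~\cite{norris95} already used in the view subsection: a common path set of length $(n-1)$ at the roots means the parties corresponding to $u_r$ and $w_r$ have isomorphic views of depth $(n-1)$, hence isomorphic full views, and therefore equal residual path sets of \emph{every} length. Propagating this equality along $s_1$ and along $s_2$, both candidate targets acquire the residual path set of $u$ truncated to length $2(n-1)-\depth(w_r)-j$, which is exactly the full remaining length $2(n-1)-\depth(w_1)$ at their depth in $\widetilde{T}^{2(n-1)}_X(v)$; by the minimality of $\widetilde{T}^{2(n-1)}_X(v)$ (Corollary~\ref{cr:minimal-f-view} together with Lemma~\ref{lm:merging operation}), equal path sets at a common depth force $w_1=w_2$. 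I expect this ``the truncated path sets are too short, so lift the roots to full-view equivalence first'' step to be the only delicate point; once well-definedness is secured, the checks of C1, C2, surjectivity, and uniqueness are the routine bookkeeping established by Lemmas~\ref{lm:merging operation} and~\ref{lm:f-view}.
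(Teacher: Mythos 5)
Your proposal is correct and follows essentially the same route as the paper's proof: the ``if'' direction and uniqueness by edge-label forcing from the root, and the ``only if'' direction by building $\phi$ along label sequences, where well-definedness is secured exactly as in the paper --- lifting the depth-$(n-1)$ agreement of the roots to agreement at the full residual length $2(n-1)-\depth(\cdot)$ via Norris's theorem and then invoking Corollary~\ref{cr:minimal-f-view} to identify two candidate images at the same depth of the minimal f-view. The only presentational difference is that you define $\phi$ by arbitrary path-following and prove path-independence, whereas the paper defines it by breadth-first induction and rules out conflicting images; these are the same argument.
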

\begin{proof}
$(\Rightarrow)$ 
Let $\phi':V_a\rightarrow V_b$ be the mapping defined algorithmically
as follows.
We first set $\phi'(u_r):=w_r$,
and then define $\phi'$ by 
repeating the next operations for each $j$ from $0$ to $(n-1)-1$:
for every node $u\in V_a$ of depth $(j+\depth (u_r))$
and every edge $(u, u')\in E_a$,
set $\phi' (u'):=w'\in V_b$ 
if $(u,u')$ and $(\phi'(u),w')\in E_b$ have the same label.
Notice that $\phi'(u)$ has been already fixed, 
since the above operations proceed toward leaves in a breadth-first manner.

Under the condition that $\widehat{T}^{(n-1)}_{X,a}$ and
$\widehat{T}^{(n-1)}_{X,b}$ have a common path set of length $(n-1)$,
we will prove that
$\phi' $ is well-defined
(i.e., the operations in the definition work well) and
meets C1 and C2 by induction with respect to depth,
and then we will prove that $\phi'$ is an onto-mapping from $V_a$ to $V_b$
and that $\phi$ is unique (i.e., $\phi$ is equivalent to $\phi'$).

Suppose that $\widehat{T}^{(n-1)}_{X,a}$ and $\widehat{T}^{(n-1)}_{X,b}$
have a common path set of length $(n-1)$.  

The base case is as follows.
Clearly, $\phi'$ is well-defined for $u_r$, and
meets C1 for $u_r$.
Furthermore,
$u_r$ and $w_r$ define the same path set of length
$2(n-1)-\depth(w_r)$,
since two views of infinite depth are isomorphic if and
only if the two views are isomorphic up to depth $(n-1)$.
Thus, $\phi'$ is well-defined for every node 
incident to an outgoing edge of $u_r$ (i.e., every node at depth
$(1+\depth(u_r))$, and meets C2 for $u_r$.

For $j\geq 1$, we assume that for any $u\in V_a$ at depth of at most $(j+\depth (u_r))$,
(1) $\phi'$ is well-defined, (2) $\phi'$ meets C1,
and (3) $u$ and $\phi' (u)$ (denoted by $w$) define the same
path set of length $(2(n-1)-\depth(w))$.
Further, we assume that
(4) $\phi'$ meets C2
for any $u\in V_a$ at depth of at most $((j-1)+\depth (u_r))$. 

For any fixed $u\in V_a$ at depth $(j+\depth (u_r))$
and every outgoing edge $(u, u')\in E_a$ of $u$,
there is a node $w'$ 
such that
$(\phi'(u),w')$ has the same label as
$(u,u')$ and $u'$ has the same label as $w'$,
since 
$u$ and $w(=\phi' (u))$ define the same
path set of length
$(2(n-1)-\depth(w))$ by assumption.
If $\phi'(u')$ is set to $w'$,
$\phi'$ meets C1 for $u'$
and C2 for $u$,
and $u'$ and $w'$ have the same path set of length
$(2(n-1)-\depth(w'))$.
To show that $\phi'$ is well-defined for any node at depth
$((j+1)+\depth (u_r))$,
we have to prove that, 
for any two nodes $u_1$ and $u_2$ at depth $j+\depth (u_r)$,
there are no two edges from $u_1$ and $u_2$, respectively, destined to
some identical node, or, $(u_1,u'),(u_2,u')\in E_a$,
that induce two distinct images of $u'$ by $\phi'$
(while performing the operations given in the definition of $\phi'$).
We assume that such two edges exist and
let $w'_1$ and $w'_2$ be the distinct images of $u'$.
This implies that
path set $P_X^{2(n-1)-\depth(w'_1)}(w'_1)$ is identical to
$P_X^{2(n-1)-\depth(w'_2)}(w'_2)$, since
both of them are identical to
the path set of length $2(n-1)-\depth(w'_1)$ $(=2(n-1)-\depth(w'_2))$
defined for $u'$.
This contradicts Corollary~\ref{cr:minimal-f-view}, since 
$w'_1$ and $w'_2$ are nodes
at the same depth of minimal f-view $\widetilde{T}^{2(n-1)}_X(v)$.
This complete the proof that $\phi'$ is well-defined and meets C1 and C2.

We now prove that $\phi'$ is an onto-mapping.
Assume that $\phi'$ is not an onto-mapping;
there is at least one node in $V_b\setminus \{ \phi' (u)\mid u\in
V_a\}$.
Let $w$ be the node of the smallest depth in
$V_b\setminus \{ \phi' (u)\mid u\in V_a\}$.
Thus, for each edge of $E_b$ into $w$,
its source is in
$\{ \phi' (u)\mid u\in
V_a\}$. 
Since the source satisfies C2,
$w$ needs to be in $\{ \phi' (u)\mid u\in
V_a\}$. This is a contradiction.

Finally, we prove the uniqueness of $\phi$.
We assume that there are two different homomorphisms, $\phi _1$ and
$\phi _2$, satisfying C1 and
C2;
there is at least one node $u$ in $V_a$ such that $\phi _1(u)=w_1\neq
w_2=\phi _2(u)$.
Since any $(n-1)$-length directed path in $\widehat{T}^{(n-1)}_{X,b}$ emanates from
$w_r$,
$\phi_1(u_r)=\phi _2(u_r)=w_r$.
For any directed path from $u_r$ to $u$, 
$\phi_1$ and $\phi_2$ define a path from $w_r$ to $w_1$
and a path from $w_r$ to $w_2$, respectively.
By conditions C1 and C2, these two paths are both isomorphic to the path from
$u_r$ to $u$.
Since there is at most one such path in 
$\widehat{T}^{(n-1)}_{X,b}$ by the definition of f-views,
$w_1$ must be identical to $w_2$.
This is a contradiction.

$(\Leftarrow)$ 
If $\phi$ meets C1 and C2, 
any directed edge $(u,u')$ in $\widehat{T}^{(n-1)}_{X,a}$
is mapped by $\phi$ to
a directed edge $(\phi(u),\phi(u'))$ in 
$\widehat{T}^{(n-1)}_{X,b}$ of the same edge and node labels.
It follows that
any directed path in $\widehat{T}^{(n-1)}_{X,a}$ is mapped to
an isomorphic directed path in 
$\widehat{T}^{(n-1)}_{X,b}$.
Thus, any $(n-1)$-length directed path from $u_r$
in $\widehat{T}^{(n-1)}_{X,a}$ has to be mapped to
some isomorphic $(n-1)$-length directed path in $\widehat{T}^{(n-1)}_{X,b}$.
Therefore, the path set of $\widehat{T}^{(n-1)}_{X,a}$
is a subset of that of $\widehat{T}^{(n-1)}_{X,b}$.

Conversely, fix an $(n-1)$-length directed path $p$ starting at $w_r$.
Let the $j$th node on $p$ be the node on $p$ that can be reached
via $(j-1)$ directed edges from $w_r$.
Since 
any $(n-1)$-length directed path in
$\widehat{T}^{(n-1)}_{X,a}$ is mapped to 
some $(n-1)$-length directed path in $\widehat{T}^{(n-1)}_{X,b}$,
$u_r$ is only the preimage of $w_r$ by $\phi$.
If $u$ is a preimage of the $j$th node on $p$ with respect to $\phi$,
there is only one preimage of the $(j+1)$st node on $p$ 
among nodes incident to the outgoing edges of $u$ due to C2.\@
By induction, the preimage of $p$ is uniquely determined as 
some directed path from $u_r$.
Thus, the path set of $\widehat{T}^{(n-1)}_{X,b}$
is a subset of that of $\widehat{T}^{(n-1)}_{X,a}$.
\end{proof}
Lemma~\ref{lm:test common path set} implies that,
if we can construct $\phi'$ (defined in the proof) that meets C1 and C2,
$\widehat{T}^{(n-1)}_{X,a}$ and $\widehat{T}^{(n-1)}_{X,b}$ have a common
path set of length $(n-1)$.
Conversely, if we cannot construct $\phi'$,
there is no mapping $\phi$ that satisfies C1 and C2;
$\widehat{T}^{(n-1)}_{X,a}$ and $\widehat{T}^{(n-1)}_{X,b}$ do not have a common
path set of length $(n-1)$.
As described in the proof of Lemma~\ref{lm:test common path set},
$\phi'$ can be constructed by
simultaneously traversing 
 $\widehat{T}^{(n-1)}_{X,a}$ and $\widehat{T}^{(n-1)}_{X,b}$ in a
breadth-first manner.
Namely, Subroutine~P first sets $\phi'(u_r):=w_r$
if $u_r$ and $w_r$ have the same label,
and then defines $\phi'$ by 
repeating the next operations for each $j$ from $0$ to $(n-1)-1$.
For every node $u\in V_a$ of depth $(j+\depth (u_r))$,
and every $i$-edge, $(u, u_i)\in E_a$, of $u$,
set $\phi' (u_i):=w_i\in V_b$, where $w_i$ is the destination of
$i$-edge of $\phi'(u)$,
if (1) $d_u=d_{\phi(u)}$,
(2) $u$ and $\phi(u)$ have the same label,
(3) $(u,u_i)$ and $(\phi'(u),w_i)\in E_b$ have the same label,
(4) when $u_i$ has already been visited
and thus $\phi'(u_i)$ has already been defined,
$\phi' (u_i)$ is identical to $w_i$.

Figure~\ref{fig:Subroutine P} gives a precise description of 
Subroutine P, where $\ENQUEUE$, $\DEQUEUE$ and $\CONTINUE$ 
are defined in the same way as in the case of Subroutine~Traversal~(II), 
and are assumed to be implemented in a standard way.
\begin{figure}[t]
\begin{center}
  \hrulefill\\
\textbf{Subroutine P}\\
\begin{description}
\setlength\itemsep{-3pt}
  \item[Input:]
 an integer $n$;\\
 function $\depth$ that gives the depth of nodes in
 $\widetilde{T}^{2(n-1)}_X(v)$, and function $\id$;\\ 
 two sub-f-views, $\widehat{T}^{(n-1)}_{X,a}$ rooted at $u_r$ and
 $\widehat{T}^{(n-1)}_{X,b}$ rooted at $w_r$,
of a minimal f-view $\widetilde{T}^{2(n-1)}_X(v)$ such that $\depth
 (u_r)\leq \depth (w_r)\leq n-1$.
   \item[Output:]
     ``$\mathsf{Yes}$'' or ``$\mathsf{No}$.''
\end{description}
\vspace{-6mm}
  \begin{enumerate}
\setlength\itemsep{-3pt}
\item Perform $\ENQUEUE(Q,u_r)$, where $Q$ is a FIFO queue initialized
  to an empty queue.\\
\item Set $\phi(u_r):=w_r$
\item While $Q$ is not empty, repeat the following steps.
\vspace{-6pt}
  \begin{enumerate}
  \item Set $u:=\DEQUEUE(Q)$.
  \item If $\Label(u)\neq \Label (\phi(u))$, go to step~5.
  \item If $\depth (u)=(n-1)+\depth (u_r)$, $\CONTINUE$.
  \item If $d_{u}\neq d_{\phi(u)}$, go to step~5.
  \item Perform the next steps for $i:=1$ to $d_u$.
    \begin{enumerate}
    \item Set $u_i:=\Adj_i(u)$ and  $w_i:=\Adj_i(\phi(u))$.
     \item If $\Label((u,u_i))\neq \Label((\phi(u),w_i))$, go to step~5. 
    \item If $u_i$ has already been traversed and
    $\id(\phi(u_i))\neq \id(w_i)$, go to step~5.
  \item Set $\phi(u_i):=w_i$.
    \item $\ENQUEUE(Q,u_i)$.
   \end{enumerate}
  \end{enumerate}
\item Halt and output ``$\mathsf{Yes}$.''
\item Halt and output ``$\mathsf{No}$.''
  \end{enumerate}
\vspace{-1\baselineskip}
  \hrulefill
\end{center}
\vspace{-1\baselineskip}
  \caption{Subroutine~P.}
  \label{fig:Subroutine P}
\end{figure}

\begin{lemma}
\label{lm: Subroutine P}
Suppose that
minimal f-view $\widetilde{T}^{2(n-1)}_X(v)$
is a view of
a distributed system of $n$ parties
having $O(\log L)$-bit values.
Given two sub-f-views $\widehat{T}^{(n-1)}_{X,a}$ and
 $\widehat{T}^{(n-1)}_{X,b}$ of depth $(n-1)$
of a minimal f-view $\widetilde{T}^{2(n-1)}_X(v)$,
Subroutine~P outputs ``$\mathsf{Yes}$'' 
if and only if
$\widehat{T}^{(n-1)}_{X,a}$ and $\widehat{T}^{(n-1)}_{X,b}$
have a common path set of length $(n-1)$.
The time complexity is $O(n^2\log (n^DL))$,
where $D$ is the maximum degree over all nodes of the underlying graph of the
distributed system.
\end{lemma}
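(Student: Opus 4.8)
The plan is to reduce the statement entirely to Lemma~\ref{lm:test common path set} for correctness and to Corollary~\ref{cr:minimal-f-view} for the time bound. First I would observe that Subroutine~P is nothing but an explicit, breadth-first realization of the mapping $\phi'$ constructed in the proof of Lemma~\ref{lm:test common path set}: it sets $\phi(u_r):=w_r$ and then, processing the nodes of $\widehat{T}^{(n-1)}_{X,a}$ out of the queue in order of increasing depth, it tries to extend $\phi$ along every outgoing edge exactly as in that construction. Each check performed while expanding a node $u$ corresponds to one of the four conditions listed after Lemma~\ref{lm:test common path set} that guarantee $\phi'$ meets C1 and C2: step~3(b) enforces that $u$ and $\phi(u)$ carry the same label (C1); step~3(d) together with step~3(e)ii enforces that the outgoing-edge structures of $u$ and $\phi(u)$ match (the degree equality and the edge-label equality needed for the label-preserving bijection of C2); and step~3(e)iii enforces that a node reachable by several directed paths in the DAG receives a single consistent image, i.e.\ the well-definedness of $\phi'$. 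The depth test in step~3(c) merely stops the expansion once the length-$(n-1)$ paths from $u_r$ have been exhausted, which is precisely the range over which the path set of length $(n-1)$ is compared.

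With this correspondence in place, correctness is immediate. The discussion following Lemma~\ref{lm:test common path set} shows that $\widehat{T}^{(n-1)}_{X,a}$ and $\widehat{T}^{(n-1)}_{X,b}$ share a path set of length $(n-1)$ if and only if $\phi'$ can be constructed so as to meet C1 and C2. Subroutine~P terminates with ``$\mathsf{Yes}$'' exactly when its traversal completes $\phi$ without any check sending control to the ``$\mathsf{No}$''-step, which by the correspondence above is exactly when $\phi'$ is well defined and satisfies C1 and C2 (onto-ness of $\phi$ then follows automatically from C2 and the connectivity of $\widehat{T}^{(n-1)}_{X,b}$, as in the proof of Lemma~\ref{lm:test common path set}). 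Hence ``$\mathsf{Yes}$'' is returned if and only if a common path set of length $(n-1)$ exists.

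For the running time I would first bound the size of each sub-f-view. Since $\widehat{T}^{(n-1)}_{X,a}$ and $\widehat{T}^{(n-1)}_{X,b}$ are depth-$(n-1)$ induced subgraphs of the minimal f-view $\widetilde{T}^{2(n-1)}_X(v)$, Corollary~\ref{cr:minimal-f-view} gives at most $n$ nodes per level, so each has $O(n^2)$ nodes and $O(Dn^2)$ edges. The traversal touches every node and edge of $\widehat{T}^{(n-1)}_{X,a}$ a constant number of times. The per-node cost is dominated by the label comparison of step~3(b), costing $O(\log L)$ since labels are $O(\log L)$-bit values, for a total of $O(n^2\log L)$. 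The per-edge cost is an edge-label comparison ($O(\log D)$ bits, as edge labels are pairs of port numbers bounded by $D$) and an $\id$ comparison in step~3(e)iii (ids range over $\{1,\ldots,O(n^2)\}$, hence $O(\log n)$ bits), for a total of $O(Dn^2(\log D+\log n))$. Summing and using $D\leq n-1$ gives $O\bigl(n^2\log L + Dn^2\log n\bigr)=O\bigl(n^2\log(Ln^D)\bigr)$, as claimed.

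The step I expect to be the main obstacle is the ``$\mathsf{No}$'' direction: I must certify that a \emph{local} check failure during the one-sided breadth-first traversal of $V_a$ rules out \emph{every} candidate homomorphism, not merely the particular greedy $\phi'$ being built. This is exactly where the uniqueness clause of Lemma~\ref{lm:test common path set} is needed, together with the fact that in a minimal f-view the image of a node along a fixed labeled path from $w_r$ is forced. I would make this explicit by tracing the committed labeled path from $u_r$ to the offending node and arguing that any $\phi$ respecting C1 and C2 would have to coincide with $\phi'$ up to that node, so the breakdown is unavoidable and no valid $\phi$ — hence no common path set — can exist.
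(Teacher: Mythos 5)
Your proposal is correct and follows essentially the same route as the paper: correctness is reduced to Lemma~\ref{lm:test common path set} by identifying each check in Subroutine~P with the construction of $\phi'$ and conditions C1/C2, and the time bound comes from the $O(n^2)$-node, $O(Dn^2)$-edge size of the depth-$(n-1)$ sub-f-views together with the per-node $O(\log L)$ and per-edge $O(\log D+\log n)$ comparison costs. The only difference worth noting is that the ``$\mathsf{No}$'' direction you flag as the main obstacle is dispatched in the paper simply as the contrapositive of the only-if part of that lemma's proof (a common path set forces the greedy $\phi'$ to be well-defined and to pass every check, so a failed check implies no common path set), making the uniqueness-based path-tracing argument you sketch valid but unnecessary.
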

\begin{proof}
Subroutine~P constructs $\phi:=\phi'$ (defined
in the proof
of Lemma~\ref{lm:test common path set}).
Subroutine P outputs ``$\mathsf{Yes}$''
only when $Q$ is empty, i.e., 
when the subroutine has already visited all nodes
in $\widehat{T}^{(n-1)}_{X,a}$.
It is easy to see that,
when Subroutine P outputs ``$\mathsf{Yes}$,''
$\phi$ meets C1 of Lemma~\ref{lm:test common path set}
(due to step~3.2) and C2 (due to step~3.4 and step~3.5.2).
Thus, 
$\widehat{T}^{(n-1)}_{X,a}$ and $\widehat{T}^{(n-1)}_{X,b}$
have a common path set of length $(n-1)$ by Lemma~\ref{lm:test common path set}.
Conversely, if 
$\widehat{T}^{(n-1)}_{X,a}$ and $\widehat{T}^{(n-1)}_{X,b}$
have a common path set of length $(n-1)$,
the subroutine outputs ``$\mathsf{Yes}$,''
by the \emph{only-if} part in the proof of Lemma~\ref{lm:test common path set}.
This proves the correctness.

Let $V_a$ and $E_a$ be the edge set and node set, respectively,
of $\widehat{T}^{(n-1)}_{X,a}$.
Step~3 is obviously dominant in terms of time complexity.
Step~3.1 takes just constant time for each evaluation.
Step~3.2 takes $O(\log L)$ time for each $u$
since node labels are $O(\log L)$-bit values;
it takes $O(|V_a|\log L)$ time in total.
Step~3.3 takes $O(\log n)$ time for each $u$;
it takes $O(|V_a|\log n)$ time in total.
Step 3.4 takes
$O(d_u)$ time for each $u$; it takes $O(|E_a|)$ time
in total.

Next, we estimate the time complexity of step~3.5.
Steps~3.5.1, 3.5.4, and 3.5.5 take constant time.
Each execution of step~3.5.2 takes $O(\log D)$ time, since edge labels
are $O(\log D)$-bit values.
Each execution of step~3.5.3 takes $O(\log n)$ time, since 
$\widetilde{T}^{2(n-1)}_{X}(v)$ has $O(n^2)$ nodes.
Since every edge is visited exactly once, step~3.5 takes $O(|E_a|(\log
n+\log D))=O(|E_a|\log n)$ time in total.
The time complexity of step~3 is thus
$O(|V_a|\log (nL)+ |E_a|\log n)$; this is
$O(n^2\log(Ln^D))$ since $|V_a|=O(n^2)$ and $|E_a|=O(n^2D)$. 
\end{proof}

\subsubsection{Analysis of View Counting Algorithm}
The correctness and complexity of the view counting algorithm is described in the
 next lemmas.
\begin{lemma}
  Given a minimal f-view $\widetilde{T}^{2(n-1)}_X(v)$, a subset $S$ of the
    range of $X$, and the number, $n$, of parties, the view counting algorithm
    in Figure~\ref{fig: view-counting} correctly outputs $\abs{\Gamma _X^{(n-1)}(S)}$.
\end{lemma}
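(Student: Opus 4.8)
The plan is to exhibit a label-preserving bijection between the set $W$ computed by the algorithm and the set $\Gamma_X^{(n-1)}$ of distinct depth-$(n-1)$ views, and then to observe that the final counting step restricts this bijection to those views whose root label lies in $S$. First I would recall that a view of depth $(n-1)$ is completely determined by its path set of length $(n-1)$ and conversely, so that for $v,v'\in V$ one has $T^{n-1}_X(v)\equiv T^{n-1}_X(v')$ if and only if $P^{n-1}_X(v)=P^{n-1}_X(v')$. Combined with Norris's result~\cite{norris95} that $T_X(v)\equiv T_X(v')$ iff $T^{n-1}_X(v)\equiv T^{n-1}_X(v')$, this shows that distinct path sets of length $(n-1)$ are in one-to-one correspondence with the equivalence classes enumerated by $\Gamma_X^{(n-1)}$. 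Then, for any node $u$ of $\widetilde{T}^{2(n-1)}_X(v)$ at depth $\depth(u)=j\le n-1$, letting $v_u$ be a party to which $u$ corresponds, the sub-f-view rooted at $u$ has depth at least $(n-1)$ (since the whole f-view has depth $2(n-1)$ and $j\le n-1$), and by Lemma~\ref{lm:merging operation} together with the characterization in Lemma~\ref{lm:f-view} its path set of length $(n-1)$ equals $P^{n-1}_X(v_u)$; note that all view-nodes merged into $u$ define the same length-$(n-1)$ path set, so $v_u$ is well-defined up to its class, and by condition C1 the label of $u$ is the common root label $X(v_u)$ of that path set.

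Second, I would prove that $W$ selects exactly one node per equivalence class. For \emph{completeness}, for every party $v'\in V$ there is a path of length at most $(n-1)$ from $v$ to $v'$, so some node $u$ at depth $\le(n-1)$ corresponds to $v'$; hence every path set $P^{n-1}_X(v')$, and thus every class in $\Gamma_X^{(n-1)}$, is realized among the nodes the algorithm inspects. (Step 3 iterates over all nodes in breadth-first order and breaks as soon as $\depth(\id^{-1}(i))>n-1$, which is sound because the breadth-first numbering $\id$ is nondecreasing in depth, so no node of depth $\le n-1$ is skipped.) For \emph{no over-counting}, Lemma~\ref{lm: Subroutine P} guarantees that Subroutine~P outputs ``$\mathsf{Yes}$'' precisely when the two sub-f-views share their length-$(n-1)$ path set, i.e.\ precisely when the two nodes correspond to parties in the same class; since the algorithm adds $u$ to $W$ only when Subroutine~P outputs ``$\mathsf{No}$'' against every $\hat u\in W$, the set $W$ never contains two nodes of the same class, and by completeness it contains one node from each class. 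Therefore $u\mapsto T^{n-1}_X(v_u)$ is a bijection from $W$ onto $\Gamma_X^{(n-1)}$.

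Finally, since this bijection sends the label of each $u\in W$ to the root label $X(v_u)$ of the corresponding view, the nodes of $W$ labeled with a value in $S$ are exactly the preimages of the views in $\Gamma_X^{(n-1)}(S)$. Hence the count $n_S$ produced in step 4 equals $\abs{\Gamma_X^{(n-1)}(S)}$, which is what the algorithm outputs, completing the argument.

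The step I expect to be the main obstacle is the second one: rigorously pinning down that $W$ is a system of representatives for the equivalence classes. This requires combining completeness of the depth-$\le(n-1)$ nodes as representatives of \emph{all} classes with the exactness of Subroutine~P's path-set test from Lemma~\ref{lm: Subroutine P}. The delicate point is that the test must compare path sets of length exactly $(n-1)$ rather than the full depth available beneath a shallow node; this is why the input precondition $\depth(u_r)\le\depth(w_r)\le n-1$ is imposed and why the f-view is built to depth $2(n-1)$, ensuring that a length-$(n-1)$ path set is always available beneath any node of depth at most $(n-1)$.
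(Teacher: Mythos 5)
Your proposal is correct and follows essentially the same route as the paper: the paper's proof likewise reduces the claim to showing that $W$ picks out exactly one node per distinct length-$(n-1)$ path set realized at depth at most $n-1$ (via the exactness of Subroutine~P), that the minimal f-view realizes the same collection of such path sets as the unfolded view (hence all of $\{P^{n-1}_X(v'): v'\in V\}$), and that distinct path sets correspond bijectively, with matching root labels, to the isomorphism classes counted by $\Gamma_X^{(n-1)}(S)$. Your reorganization as a single label-preserving bijection from $W$ onto $\Gamma_X^{(n-1)}$, together with the explicit remarks on the soundness of the breadth-first $\mathtt{BREAK}$ and on why depth $2(n-1)$ suffices to expose length-$(n-1)$ path sets, is just a cleaner packaging of the same argument.
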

\begin{proof}
Let $\widetilde{\calC}$ 
be 
the collection of distinct path sets of length $(n-1)$ defined for all
  nodes $u^j$ at depth $j$ in minimal f-view
  $\widetilde{T}^{2(n-1)}_X(v)$ over
all  $j\leq n-1$,
and let
$\calC$ be the counterpart of $\widetilde{\calC}$
for (original) view $T^{2(n-1)}_X(v)$.
Suppose that $\tilde{n}_S$ and $n'_S$ are the numbers of those path
sets in $\widetilde{\calC}$ and $\calC$, respectively,
of which the first node is labeled with some value in $S$.
Since Lemma~\ref{lm: Subroutine P} implies that 
$\tilde{n}_S$ is equal to the number $n_S$ of the nodes in $W$
that are labeled with values in $S$,
the lemma holds if we prove 
$\tilde{n}_S=n'_S$ and $n'_S=|\Gamma _X^{(n-1)}(S)|$.

Recall mapping $\psi$ defined in the proof of Lemma~\ref{lm:f-view}:
for any given f-view, $\psi$ maps every node $u$ of
the f-view to the path set defined for $u$.
As described in the proof of Corollary~\ref{cr:minimal-f-view},
$\psi$ is a bijective mapping from the set of nodes of depth $j$
to $\calP _{G,\sigma,X}^j(v)$, when the corresponding f-view is minimal.
Thus, $\widetilde{\calC}$ is identical to $\calC$,
implying $\tilde{n}_S=n'_S$.

The fact that $n'_S=|\Gamma _X^{(n-1)}(S)|$
is obtained from the following two properties:
(1) two path sets are identical to each other if and only
if the corresponding two views are isomorphic to each other;
(2) the first node of the path set 
has the same label as the root of
the corresponding view.
\end{proof}

\begin{lemma}
\label{lm: operations on cv}
For a given distributed system of $n$ parties, each of which has a
value of $O(\log L)$ bits,
the view counting algorithm in Figure~\ref{fig: view-counting} can compute
$|\Gamma _{X}^{(n-1)}(S)|$ for any subset $S$
of the range of $X$ from $\widetilde{T}^{2(n-1)}_{X}(v)$
  in $O(n^5\log (n^DL))$ time, where $D$ is the maximum
  degree over all nodes of the underlying graph.
\end{lemma}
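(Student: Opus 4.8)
The plan is to single out Step~3 of the algorithm (the double loop that builds the set $W$) as the dominant cost, and to bound it by counting the total number of calls to Subroutine~P and multiplying by the per-call cost supplied by Lemma~\ref{lm: Subroutine P}. First I would recall from Corollary~\ref{cr:minimal-f-view}, applied with $h=2(n-1)$, that the minimal f-view $\widetilde{T}^{2(n-1)}_X(v)$ has $O(n^2)$ nodes and $O(Dn^2)$ edges, and that at each depth $j$ it has at most $|\calP^{j}_{G,\sigma,X}(v)|\le n$ nodes. In particular the number of nodes of depth at most $n-1$, over which the outer loop of Step~3 ranges (the $\BREAK$ together with the breadth-first ordering of $\id$ ensures exactly these nodes are processed), is at most $\sum_{j=0}^{n-1} n = O(n^2)$. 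Since Subroutine~Traversal~(II) in Step~1 is a single breadth-first traversal that assigns $O(\log n)$-bit identifiers, it runs in $O(Dn^2\log n)$ time, which will turn out to be dominated.

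The crucial step is to bound the size of $W$. By construction a node $u$ is inserted into $W$ only when Subroutine~P reports that the sub-f-view rooted at $u$ has a path set of length $(n-1)$ different from that of every node already in $W$; hence the nodes of $W$ define pairwise distinct path sets of length $(n-1)$. Because a path set of length $(n-1)$ is in bijection with a view of depth $(n-1)$, and because $T_X(v)\equiv T_X(v')$ if and only if $T^{n-1}_X(v)\equiv T^{n-1}_X(v')$, the number of distinct such path sets equals the number of non-isomorphic depth-$(n-1)$ views, namely $|\Gamma^{(n-1)}_X| = |\Gamma_X| \le n$ (the last inequality from $c_X=n/|\Gamma_X|\ge 1$). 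Therefore $|W|\le n$ throughout the execution.

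With these two facts the accounting for Step~3 is immediate: the outer loop performs $O(n^2)$ iterations, each inner loop performs at most $|W|\le n$ calls to Subroutine~P, and each call costs $O(n^2\log(n^D L))$ by Lemma~\ref{lm: Subroutine P}. Multiplying yields $O(n^2\cdot n\cdot n^2\log(n^D L)) = O(n^5\log(n^D L))$ total work in Step~3. Step~4 merely scans the $O(n)$ nodes of $W$ and compares their $O(\log L)$-bit labels against $S$, costing $O(n\log L)$, and the $O(Dn^2\log n)$ of Step~1 is likewise absorbed; hence the whole algorithm runs in $O(n^5\log(n^D L))$ time, as claimed.

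I expect the only genuinely delicate point to be the bound $|W|\le n$: it rests on translating ``distinct path sets of length $(n-1)$'' into ``non-isomorphic depth-$(n-1)$ views'' and then invoking $|\Gamma_X|\le n$, rather than naively bounding $|W|$ by the $O(n^2)$ nodes of the f-view, which would inflate the estimate by a factor of $n$ and break the target bound. Everything else is routine bookkeeping against Corollary~\ref{cr:minimal-f-view} and Lemma~\ref{lm: Subroutine P}.
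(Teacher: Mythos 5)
Your proposal is correct and follows essentially the same route as the paper: both isolate Step~3, bound the number of calls to Subroutine~P by (number of nodes of depth at most $n-1$, which is $O(n^2)$) times $|W|\le n$, and multiply by the $O(n^2\log(n^DL))$ per-call cost from Lemma~\ref{lm: Subroutine P}. Your justification of $|W|\le n$ via distinct path sets corresponding to non-isomorphic depth-$(n-1)$ views is a slightly more explicit version of the paper's one-line remark that $|W|\le n$ because there are only $n$ parties, but it is the same argument in substance.
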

\begin{proof}
We first consider Subroutine Traversal~(II) in Figure~\ref{fig:traversal II},
which is called in the first step of the view counting algorithm.
The dominant part of Subroutine Traversal~(II) 
is step~4, 
which just performs a simple breadth-first traversal of
$\widetilde{T}^{2(n-1)}_{X}(v)$.
The traversal takes $O(\log\abs{V^f})$ time for each edge.
Thus, the time complexity of the subroutine, i.e.,
the time complexity of step~1 in the view counting algorithm,
is $O(n^2D\log\abs{V^f})$.

Next we analyze step~3 of the view counting algorithm in Figure~\ref{fig: view-counting},
which is clearly dominant in terms of time complexity.
We can see that
(1) $|W|$ is at most $n$ since there are $n$ parties
in the system, and (2) there are $O(n^2)$ nodes whose depth 
is at most $n-1$ in $\widetilde{T}^{2(n-1)}_{X}(v)$ since there are at
most $n$ nodes
at each depth.
Hence, Subroutine P is called
for each of $O(n^3)$ pairs of sub-f-views.
Since one call of Subroutine P takes 
$O(n^2\log (n^DL))$ time by Lemma~\ref{lm: Subroutine P},
step 3.2 thus takes $O(n^5\log (n^DL))$ time;
the other operations can be perfomed with the same order of 
the time complexity.
The total time complexity is thus $O(n^5\log (n^DL))$.
\end{proof}
\subsection{Directed network topologies}
As in the case of an undirected network,
Norris's theorem can be proved to be still valid in the case of a
directed network in almost the same way as the original proof.
\begin{theorem}[Norris~\cite{norris95}]
\label{th:norris:directed-case}
Suppose that there is any $n$-party distributed system 
whose underlying graph $G$ is directed and
strongly connected.
For any nodes, $v$ and $v'$, of $G$,
$T_{G,\sigma, X}(v)\equiv T_{G,\sigma, X}(v')$ if and only if
$T^{n-1}_{G,\sigma, X}(v)\equiv T^{n-1}_{G,\sigma, X}(v')$.
\end{theorem}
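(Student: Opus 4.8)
The plan is to mimic Norris's original partition-refinement argument, adapting the recursion so that the ``children'' of a node in a view are its in-neighbours (the sources of incoming edges), as dictated by the directed-view definition given above. The ``only if'' direction is immediate: an isomorphism between the infinite views $T_{G,\sigma,X}(v)$ and $T_{G,\sigma,X}(v')$ restricts to an isomorphism of their depth-$(n-1)$ truncations. The whole content lies in the converse.

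First I would introduce, for each $h\ge 0$, the equivalence relation $\sim_h$ on $V$ defined by $u\sim_h u'$ if and only if $T^h_{G,\sigma,X}(u)\equiv T^h_{G,\sigma,X}(u')$, and record two structural facts. \emph{(i) Refinement:} since truncating a depth-$(h+1)$ isomorphism yields a depth-$h$ isomorphism, $\sim_{h+1}$ refines $\sim_h$. \emph{(ii) Local characterisation:} $u\sim_{h+1}u'$ holds if and only if $X(u)=X(u')$ and there is an edge-label-preserving bijection between the incoming edges of $u$ and those of $u'$ that matches each source $w$ of an edge into $u$ with a source $w'$ of an edge into $u'$ satisfying $w\sim_h w'$. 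Fact (ii) says precisely that the partition induced by $\sim_{h+1}$ is a fixed function $F$ of the partition induced by $\sim_h$ together with the (fixed) local edge data, where the relevant edge labels are the $\Label$-values attached to incoming edges.

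The stabilisation step is the crux. From (i) the number of classes of $\sim_h$ is nondecreasing in $h$ and never exceeds $n$, while from (ii) we have $\sim_{h+1}=F(\sim_h)$; hence as soon as two consecutive partitions coincide the sequence is constant thereafter. Starting from at least one class at $h=0$ (where $\sim_0$ partitions $V$ by label) and gaining at least one class at each strict refinement, the sequence can refine strictly at most $n-1$ times, so $\sim_{n-1}=\sim_h$ for every $h\ge n-1$; in particular $\sim_{n-1}$ equals the limit relation $\sim_\infty$ given by $u\sim_\infty u'$ iff $u\sim_h u'$ for all $h$. Finally I would upgrade $\sim_{n-1}$, which by the previous paragraph is a fixed point of $F$, into an actual isomorphism of the infinite views: using the edge-matching bijections guaranteed by (ii) for this fixed-point relation, one builds the isomorphism level by level (a backward-bisimulation argument), starting from the root pair $(v,v')$ with $v\sim_{n-1}v'$ and extending along matched incoming edges. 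This yields $T_{G,\sigma,X}(v)\equiv T_{G,\sigma,X}(v')$ from $T^{n-1}_{G,\sigma,X}(v)\equiv T^{n-1}_{G,\sigma,X}(v')$, completing the converse.

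I expect the main obstacle to be verifying the local characterisation (ii) faithfully in the directed model, in particular checking that the recursion really runs over incoming edges and their labels rather than over all incident edges, and, relatedly, confirming that strong connectivity is used only where the undirected proof uses connectivity, namely to guarantee that the relevant nodes actually occur in the views and that the class-cardinality bookkeeping underlying the bound of $n$ nodes per level is valid (this is the directed analogue of the equal-cardinality lemma proved just above). Once (ii) is in place, the remaining pieces, namely refinement, the fixed-point/stabilisation count capping the number of strict refinements at $n-1$, and the level-by-level extension of a bisimulation to an infinite isomorphism, transfer essentially verbatim from the undirected argument.
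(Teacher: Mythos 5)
Your proposal is correct and follows essentially the same route as the paper's appendix proof: a partition-refinement argument where the depth-$(h+1)$ equivalence is determined by the depth-$h$ equivalence together with local incoming-edge data (the paper states this as the contrapositive, via Claim on children of the roots), followed by the count that a strictly refining sequence of partitions of an $n$-element set must stabilize by depth $n-1$. The only difference is that you make explicit the final step of assembling the infinite-view isomorphism from the stabilized relation, which the paper leaves implicit.
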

(The proof is given in the appendix.)

Therefore, it is sufficient to count the number of non-isomorphic
views
of depth $n-1$ in order to count the number of non-isomorphic views
of infinite depth; 
a natural idea is that, as in the case of undirected network topologies,
every party constructs an f-view of depth $2(n-1)$
and then counts the number of non-isomorphic views of depth $n-1$.
This idea can work well for the next reason.

An f-view is obtained
by just sharing isomorphic subgraphs of a view; 
the f-view construction algorithm in Figure~\ref{fig:f-view construction}
does not care whether the view is derived from 
a directed network or an undirected network.
Thus, the f-view construction algorithm works well
for networks whose underlying graphs are directed and strongly
connected.
It is obvious that the complexity in the directed network case
is the same order as in the undirected network case,
since it depends only on the number of nodes and edges of the
underlying graph, and the number of bits used
to represent node and edges labels.

As for the view-counting algorithm in Figure~\ref{fig: view-counting},
it is easy to see that Lemma~\ref{lm:test common path set} 
does not depend on the fact that the underlying topology
is undirected except for Norris's theorem.
Since, as stated above, Norris's theorem is still valid in the 
case of directed networks,
the view counting algorithm works well.

\section{Conclusion}
\label{sec: conclusion}
It is well-known that $\LE_n$ in an anonymous network cannot be solved
classically in a
deterministic sense for a certain broad class of network topologies such
as regular graphs, even if all parties know the exact number of
parties. This paper proposed two quantum algorithms that exactly solve $\LE_n$
for any topology of anonymous networks when each party initially knows
the number of parties, but does not know the network topology. 
The two algorithms have their own characteristics.

The first algorithm is simpler and more efficient in 
time and communication complexity than the second one:
it has $O(n^3)$ time complexity for each party
and $O(n^4)$ communication complexity.

The second algorithm is more general than the first
one, since it can work even on networks whose underlying graph
is directed.
Moreover, the second algorithm is better than the first one
in terms of some complexity measures.
It has the
total communication complexity of $O(n^6(\log n)^2)$,
but involves the quantum communication of just $O(n^2 \log n)$ qubits of
one round, while
the first algorithm requires quantum communication of $O(n^4)$
qubits. It runs in $O(n\log n)$ rounds,
while the first one 
runs in $O(n^2)$ rounds.

As for local computation time, the second algorithm requires 
$O(n^6(\log n)^2)$ time for each party. 
To attain this level of time and communication complexity,
we introduced folded view, a view compression technique that
enables views to be constructed in  polynomial time and
communication. 
The technique can be used to deterministically check if the unique leader is selected
or not in polynomial time and communication and linear rounds in the
number of parties.
Furthermore, the technique can also be used to compute any symmetric
Boolean function, i.e., any Boolean function that depends only on 
the Hamming weight of input in $\{ 0,1\}^n$, on anonymous networks,
when every party is given one of the $n$ bits of the function's input.

Our leader election algorithms can exactly solve the problem even when
each party initially knows only the upper bound of the number of
parties, whereas, in this setting for any topology with cycles, 
it was proved that no zero-error probabilistic algorithms exist.

Our algorithms use unitary gates depending on the number of
parties that are eligible to be a leader during their execution.
Thus, the algorithms require a set of elementary unitary gates whose
cardinality is linear in the number, $n$, of parties.
From a practical point of view, however, it would be desirable to perform
leader election for any $n$
by using a fixed and constant-sized set of elementary unitary gates.
It is open as to whether the leader election problem can, in an anonymous
network,
be exactly solved in the quantum setting
by using that set of gates.

It would also be interesting to improve the upper bound
and find a lower bound, of the complexity of solving the problem.
In general, however, it is difficult to optimize both communication complexity
and round complexity (i.e., the number of rounds required).
A reasonable direction is to clarify
the tradeoff between them.
As for communication complexity,
quantum communication cost per qubit would be quite different from classical
communication cost per bit. Hence, it is also a natural open question
as to what tradeoff between quantum and classical communication
complexity exists, and how many qubits need to be communicated.

It is also open whether the problem can be solved
by a processor terminating algorithm (i.e., an algorithm
that terminates when every party enters a halting state)
in the quantum setting
even without knowing the upper bound of the number of parties.
In this situation, there are just message terminating
algorithms with bounded error in the classical setting.

\newpage
\section*{Appendix: Proof of Theorem~\ref{th:norris:directed-case}}
Suppose that the underlying directed graph of the distributed system is $G=(V,E)$.
Let $\pi _k$ be the partition induced on $V$ by the isomorphism
of views $T^{k}_{X}(v)$ of depth $k$ for $v\in V$.
Obviously, $\pi _{k+1}$ is a refinement of $\pi_k$, i.e., 
if $v$ and $w$ are in distinct blocks of $\pi _k$, then they are in
distinct blocks of $\pi _{k+1}$.\\

\begin{proof}\emph{ of Theorem~\ref{th:norris:directed-case}}
By Lemma~\ref{appdx:lm:pi}, which will be stated later, there is some $k>0$ such that
$|\pi_j|< |\pi_{j+1}|$ for every $j<k$, and
$|\pi_j|=|\pi_{j+1}|$ for every $j\geq k$, 
where $|\pi_j|$ is the number of blocks in $\pi_j$.
Thus, $n\geq |\pi _k|\geq k+ |\pi _0|\geq k+1$, 
implying $k\leq n-1$.
Therefore, if $T^{n-1}_{X}(v)\equiv T^{n-1}_{X}(w)$,
then $T_{X}(v)\equiv T_{X}(w)$.

Conversely, if $T^{n-1}_{X}(v)\not\equiv T^{n-1}_{X}(w)$,
it is obvious that $T_{X}(v)\not\equiv T_{X}(w)$.
\end{proof}
\begin{lemma}
\label{appdx:lm:pi}
  If $\pi_{k-1}=\pi _k$ for some $k>0$, then $\pi_j=\pi_k$ for all
  $j\geq k$.
\end{lemma}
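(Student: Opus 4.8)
The plan is to show that the whole sequence of partitions $(\pi_k)$ is produced by iterating one fixed operator on partitions of $V$, so that the first repetition forces the entire tail to be constant. Write $\sim_k$ for the equivalence relation defining $\pi_k$, i.e.\ $v\sim_k w$ iff $T^{k}_{X}(v)\equiv T^{k}_{X}(w)$. First I would record the recursive characterization of view isomorphism one level deep. Since in the directed case $T^{k+1}_{X}(v)$ consists of a root labeled $X(v)$ together with, for each directed edge $(v_j,v)$ entering $v$, an edge labeled $\Label((v,v_j))$ leading to the subtree $T^{k}_{X}(v_j)$, one has $T^{k+1}_{X}(v)\equiv T^{k+1}_{X}(w)$ if and only if $X(v)=X(w)$ and there is an edge-label-preserving bijection between the in-edges of $v$ and the in-edges of $w$ under which each source $v_j$ of $v$ is matched to a source $w_{j'}$ of $w$ with $T^{k}_{X}(v_j)\equiv T^{k}_{X}(w_{j'})$. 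This is obtained simply by unwinding the recursive definition of the view by one level and matching the children subtrees, which have depth $k$.

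The key consequence is that whether $v\sim_{k+1}w$ depends on the depth-$k$ data only through the $\pi_k$-classes of the in-neighbors of $v$ and $w$, together with the fixed edge labels and the root labels $X(v),X(w)$, because ``sources have isomorphic depth-$k$ views'' means exactly ``sources lie in the same block of $\pi_k$.'' Formally, I would introduce an operator $F$ sending any partition $\rho$ of $V$ to the partition $F(\rho)$ in which $v$ and $w$ are identified precisely when $X(v)=X(w)$ and the in-edges of $v$ and $w$ admit an edge-label-preserving bijection matching every source of $v$ to a source of $w$ lying in the same block of $\rho$. The characterization above then reads $\pi_{k+1}=F(\pi_k)$ for every $k\geq 0$; the base case $\pi_1=F(\pi_0)$ is handled identically, since $\pi_0$ is the partition of $V$ by label and matching sources in the same $\pi_0$-block is matching sources of equal label.

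With this in hand the lemma is a one-line induction. Assuming $\pi_{k-1}=\pi_k$ for some $k>0$, we get $\pi_{k+1}=F(\pi_k)=F(\pi_{k-1})=\pi_k$; and if $\pi_j=\pi_k$ for some $j\geq k$, then $\pi_{j+1}=F(\pi_j)=F(\pi_k)=\pi_{k+1}=\pi_k$. Hence $\pi_j=\pi_k$ for all $j\geq k$.

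I expect the only genuine work to lie in justifying the step-one characterization of $\sim_{k+1}$ in terms of $\sim_k$, namely that an isomorphism of depth-$(k+1)$ views is exactly a root-label agreement plus a label-preserving matching of incoming edges whose sources carry isomorphic depth-$k$ views, and in verifying that this really factors through the partition $\pi_k$ rather than through the subtrees themselves. Once $F$ is seen to be a well-defined function of the partition, the stabilization is purely formal and uses no further graph-theoretic input; in particular the argument never appeals to directedness beyond the shape of the recursion, so the same reasoning applies verbatim in the undirected setting as well.
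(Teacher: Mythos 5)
Your proof is correct. The mathematical core is the same as the paper's: both arguments rest on the one-level recursive characterization of view isomorphism, namely that $T^{k+1}_X(v)\equiv T^{k+1}_X(w)$ holds iff the root labels agree and there is an edge-label-preserving matching of the roots' children under which matched sources have isomorphic depth-$k$ views. The ``if'' direction of your characterization is exactly the gluing argument the paper carries out inside Claim~\ref{appdx:claim:Tk}, and the ``only if'' direction is the restriction of an isomorphism to the children's subtrees. Where you differ is in the packaging: the paper states the contrapositive (a split appearing at depth $k+1$ forces a split among children at depth $k$) and runs a proof by contradiction, whereas you make the refinement operator $F$ explicit and observe that $(\pi_k)$ is its orbit starting from the label partition $\pi_0$, so the first repetition freezes the tail. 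Your framing buys a transparent one-line induction with no double negation, and it isolates precisely what must be checked --- that $F$ factors through the partition rather than through the subtrees themselves; it is the standard partition-refinement stabilization argument (as in DFA minimization or color refinement). The paper's version is more self-contained in that it spells out the gluing of the child isomorphisms $\beta_s$ explicitly, which is the one step you assert rather than prove; if you wrote this up you should include that verification, since it is where the recursive structure of views is actually used. Your closing remark that nothing depends on directedness beyond the shape of the recursion is also accurate and consistent with the paper's claim that Norris's theorem carries over to strongly connected digraphs.
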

\begin{proof}
  Assume that $\pi_k\neq\pi_{k+1}$.
Then, there are a pair of nodes $v$ and $w$ such that 
$T^{k}_{X}(v)\equiv T^{k}_{X}(w)$ and $T^{k+1}_{X}(v)\not\equiv
T^{k+1}_{X}(w)$.
We will prove the next claim.
\begin{claim}
\label{appdx:claim:Tk}
  Suppose that $k\geq 1$ is such that 
$T^{k}_{X}(v)\equiv T^{k}_{X}(w)$ but
$T^{k+1}_{X}(v)\not\equiv T^{k+1}_{X}(w)$ for some nodes $v$ and $w$ in $V$.
Then there are
children $s$ and $t$ of the roots
of $T^{k}_{X}(v)$ and $T^{k}_{X}(w)$, respectively, such that
$T^{k-1}_{X}(\check{s})\equiv T^{k-1}_{X}(\check{t})$ but
$T^{k}_{X}(\check{s})\not\equiv T^{k}_{X}(\check{t})$,
where $\check{s}$ and $\check{t}$ are the nodes of the underlying
directed graph,
corresponding to $s$ and $t$, respectively.
\end{claim}
% there exist children $r$ and $s$ of the roots
% of $T^{k}_{X}(v)$ and $T^{k}_{X}(w)$, respectively, such that
% $T^{k-1}_{X}(v_r)\equiv T^{k-1}_{X}(w_s)$ and 
% $T^{k}_{X}(v_r)\not\equiv T^{k}_{X}(w_s)$,
% where $v_r$ and $w_s$ are the nodes of the underlying graph,
% corresponding to $r$ and $s$.
By this claim,  $\pi_k\neq\pi_{k+1}$ implies that $\pi_{k-1}\neq\pi
_k$, which is a contradiction.
Thus, if $\pi_{k-1}=\pi _k$ for some $k>0$, then $\pi_k=\pi_{k+1}$.
By induction, the lemma holds.
\end{proof}
\begin{proof}\emph{of Claim~\ref{appdx:claim:Tk}}
Let $\beta$ be any isomorphism from $T^k_X(v)$ to $T^k_X(w)$.
Let $\hat{v}$ and $\hat{w}$ be the roots of $T^k_X(v)$ and $T^k_X(w)$, respectively.
Then, for any child $s$ of $\hat{v}$, $T^{k-1}_X(\check{s})\equiv
T^{k-1}_X(\check{\beta(s)})$, where
$\check{s}$ and $\check{\beta(s)}$ are the nodes in $V$ corresponding to $s$ and $\beta(s)$.

We assume that, for every child $s$ of $\hat{v}$,
there exists
isomorphism $\beta_s$ from $T^{k}_X(\check{s})$ to 
$T^{k}_X(\check{\beta(s)})$.
We define a new map $\beta'$ such that
$\beta '(\hat{v}) =\beta (\hat{v})=\hat{w}$ and
$\beta '(u)=\beta_s (u)$ for every node $u$ in $T^{k}_X(\check{s})$ for each
child $s$ of $\hat{v}$.
It is easy to see that $\beta _s(s)=\beta (s)$ for each $s$,
since only $s$ and $\beta (s)$ are the sources of
the $k$-length directed path
in $T^{k}_X(\check{s})$ and $T^{k}_X(\check{\beta(s)})$, respectively.
Thus, $\beta '$ is isomorphism from $T^{k+1}_X(v)$ to $T^{k+1}_X(w)$.
This is a contradiction. Thus, the claim holds.
\end{proof}

%%% Local Variables:
%%% mode: latex
%%% TeX-master: "LEjournal"
%%% End:

\end{document}